\newcommand{\bbalance}{\textsc{Bal\-anc\-ed-Cov\-er\-age}}
\newcommand{\nbalance}{\textsc{Net\-work-Bal\-anc\-ed-Cov\-er\-age}}
\newcommand{\ntbalance}{\textsc{Teams-Match\-ing}}
\newcommand{\independentset}{\textsc{Independent Set}}
\newcommand{\NP}{\ensuremath{\mathbf{NP}}\xspace}
\newcommand{\bigO}{\ensuremath{\mathcal{O}}\xspace}
\newcommand{\tasks}{\ensuremath{\mathcal{J}}\xspace}
\newcommand{\task}{\ensuremath{J}\xspace}
\newcommand{\experts}{\ensuremath{\mathcal{X}}\xspace}
\newcommand{\expert}{\ensuremath{X}\xspace}
\newcommand{\skills}{\ensuremath{S}\xspace}
\newcommand{\cov}{\ensuremath{C}\xspace}
\newcommand{\load}{\ensuremath{L}\xspace}
\newcommand{\maxload}{\ensuremath{L_{\max}}\xspace}
\newcommand{\assignment}{\ensuremath{A}\xspace}
\newcommand{\edges}{\ensuremath{E}\xspace}
\newcommand{\threshold}{\ensuremath{\tau}\xspace}
\newcommand{\objective}{\ensuremath{F}\xspace}
\newcommand{\opt}{\ensuremath{\mathit{OPT}}\xspace}
\newcommand{\numtasks}{\ensuremath{m}\xspace}
\newcommand{\numexperts}{\ensuremath{n}\xspace}
\newcommand{\teams}{\ensuremath{\mathcal{T}}\xspace}
\newcommand{\team}{\ensuremath{T}\xspace}
\newcommand{\radius}{\ensuremath{R}\xspace}
\newcommand{\maxradius}{\ensuremath{R_{\max}}\xspace}
\newcommand{\diam}{\ensuremath{\mathit{Diam}}\xspace}
\newcommand{\avgteamsize}{\ensuremath{\overline{S}}\xspace}
\newcommand{\maxteamsize}{\ensuremath{S_{\max}}\xspace}
\newcommand{\avgteamradius}{\ensuremath{\overline{R}}\xspace}
\newcommand{\avgteamdensity}{\ensuremath{\overline{\delta}}\xspace}
\newcommand{\avgteampairwise}{\ensuremath{\overline{\rho}}\xspace}
\newcommand{\bibsonomy}{\textit{Bbsm}\xspace}
\newcommand{\bibsonomyone}{{\bibsonomy}\textit{-1}\xspace}
\newcommand{\bibsonomytwo}{{\bibsonomy}\textit{-2}\xspace}
\newcommand{\bibsonomythree}{{\bibsonomy}\textit{-3}\xspace}
\newcommand{\imdb}{{\textit{IMDB}}\xspace}
\newcommand{\imdbone}{{{\imdb}\textit{-1}}\xspace}
\newcommand{\imdbtwo}{{{\imdb}\textit{-2}}\xspace}
\newcommand{\imdbthree}{{{\imdb}\textit{-3}}\xspace}
\newcommand{\freelancer}{{\textit{Freelancer}}\xspace}
\newcommand{\guru}{{\textit{Guru}}\xspace}
\newcommand{\greedy}{\normalfont{\texttt{Greedy}}\xspace}
\newcommand{\thresholdgreedy}{\normalfont{\texttt{Thres\-hold\-Greedy}}\xspace}
\newcommand{\noupdategreedy}{\normalfont{\texttt{No\-Update\-Greedy}}\xspace}
\newcommand{\taskgreedy}{\normalfont{\texttt{Task\-Greedy}}\xspace}
\newcommand{\smartrandom}{\normalfont{\texttt{Smart\-Random}}\xspace}
\newcommand{\lpsetcover}{\normalfont{\texttt{LP\-Cover}}\xspace}
\newcommand{\networkbalance}{\normalfont{\texttt{NThreshold}}\xspace}
\newcommand{\networkbalancelp}{\normalfont{\texttt{NThreshold-R-LP}}\xspace}
\newcommand{\networkbalancegreedy}{\normalfont{\texttt{NThreshold-R-Greedy}}\xspace}
\newcommand{\teampruning}{\normalfont{\texttt{Team\-Prun\-ing}}\xspace}
\newcommand{\lpallradii}{\normalfont{\texttt{NThres\-hold-All-LP}}\xspace}
\newcommand{\greedyindividual}{\normalfont{\texttt{Greedy\-Individual}}\xspace}
\newcommand{\formcandidateteams}{\normalfont{\texttt{Can\-di\-date\-Teams}}\xspace}
\newcommand{\assignteams}{\normalfont{\texttt{Assign\-Teams}}\xspace}
\newcommand{\formcandidateteamsR}{\normalfont{\texttt{Can\-di\-date\-Teams-R}}\xspace}
\newcommand{\formcandidateteamsallR}{\normalfont{\texttt{Can\-di\-date\-Teams-AllR}}\xspace}
\newcommand{\spara}[1]{\smallbreak\noindent{\bf{#1}}}
\newcommand{\mpara}[1]{\medskip\noindent{\bf{#1}}}
\newcommand*{\belowrulesepcolor}[1]{%
	\noalign{%
		\kern-\belowrulesep
		\begingroup
		\color{#1}%
		\hrule height\belowrulesep
		\endgroup
	}%
}
\newcommand*{\aboverulesepcolor}[1]{%
	\noalign{%
		\begingroup
		\color{#1}%
		\hrule height\aboverulesep
		\endgroup
		\kern-\aboverulesep
	}%
}
\newcommand{\squishlist}{\begin{list}{$\bullet$}
  { \setlength{\itemsep}{0pt}
     \setlength{\parsep}{3pt}
     \setlength{\topsep}{3pt}
     \setlength{\partopsep}{0pt}
     \setlength{\leftmargin}{1.5em}
     \setlength{\labelwidth}{1em}
     \setlength{\labelsep}{0.5em} } }
\newcommand{\squishend}{
  \end{list}  }
\newtheorem{problem}{Problem}
\newtheorem{lemma}{Lemma}
\newtheorem{theorem}{Theorem}%  meant for continuous numbers
\newtheorem{corollary}{Corollary}
\newtheorem{proposition}[theorem]{Proposition}% 
\begin{document}

\title[]{Forming Coordinated Teams that Balance Task Coverage and Expert Workload}

%%=============================================================%%
%% Prefix	-> \pfx{Dr}
%% GivenName	-> \fnm{Joergen W.}
%% Particle	-> \spfx{van der} -> surname prefix
%% FamilyName	-> \sur{Ploeg}
%% Suffix	-> \sfx{IV}
%% NatureName	-> \tanm{Poet Laureate} -> Title after name
%% Degrees	-> \dgr{MSc, PhD}
%% \author*[1,2]{\pfx{Dr} \fnm{Joergen W.} \spfx{van der} \sur{Ploeg} \sfx{IV} \tanm{Poet Laureate} 
%%                 \dgr{MSc, PhD}}\email{iauthor@gmail.com}
%%=============================================================%%

\author*[1]{\fnm{Karan} \sur{Vombatkere}}\email{kvombat@bu.edu}

\author[2]{\fnm{Aristides} \sur{Gionis}}\email{argioni@kth.se}

\author[1]{\fnm{Evimaria} \sur{Terzi}}\email{evimaria@bu.edu}

\affil[1]{\orgdiv{Department of Computer Science}, \orgname{Boston University}, \orgaddress{\city{Boston}, \country{USA}}}

\affil[2]{\orgdiv{Division of Theoretical Computer Science}, \orgname{KTH Royal Institute of Technology}, \orgaddress{\city{Stockholm}, \country{Sweden}}}

\abstract{
We study a new formulation of the team-formation problem, 
where the goal is to form teams to work on a given set of tasks requiring different skills.
Deviating from the classic problem setting
where one is asking to cover all skills of each given task, 
we aim to cover as many skills as possible
while also trying to minimize the maximum workload among the experts. 
We do this by combining penalization terms for the coverage and load constraints into one objective. 
We call the corresponding assignment problem {\bbalance}, and show that it is \NP-hard. 
We also consider a variant of this problem, 
where the experts are organized into a graph, which encodes how well they work together. 
Utilizing such a coordination graph, 
we aim to find teams to assign to tasks such that each team's radius does not exceed a given threshold. We refer to this problem as {\nbalance}.
We develop a generic template algorithm for approximating both problems in polynomial time, 
and we show that our template algorithm for
{\bbalance} has provable guarantees.
We describe a set of computational speedups that we can apply to our algorithms and make them scale for reasonably large datasets. From the practical point of view, we demonstrate how to efficiently tune the two parts of the objective and tailor their importance to a particular application. Our experiments with a variety of real-world datasets demonstrate the utility of our problem formulation as well as the efficiency of our algorithms in practice.}

\keywords{team formation, submodular optimization, greedy, social network, data mining algorithms}

\maketitle

\section{Introduction}\label{sec:intro}
The abundance of online and offline labor markets (e.g., 
% Amazon mechanical turks, 
Guru, Freelancer, online scientific collaborations, etc.) 
has motivated a lot of work on the \emph{team-formation} problem.
In the team-formation setting, 
the input consists of 
($i$) a task, or a collection of tasks, 
so that each task requires a set of skills, and 
($ii$) a set of experts, 
where each expert is also associated with a set of skills.
The objective is to identify one team, or one team for every task, 
such that all the skills in every task are covered by at least one team member.  
Notably, the majority of works in team-formation research 
require \emph{complete} coverage of the skills of the input tasks~\citep{anagnostopoulos10power,anagnostopoulos2012online,anagnostopoulos18algorithms,bhowmik2014submodularity,	kargar2013finding,kargar2011discovering,kargar2012efficient,lappas2009finding,majumder2012capacitated,li2015team,li2015replacing,
	li2017enhancing,rangapuram2013towards,yin2018social}.
The differences among existing papers 
lie in the way they define the ``goodness'' of a team. 
For example, in some cases they optimize the communication cost of the team, 
while in other cases they optimize the load of the experts, or their associated cost.

We motivate the inherent trade off between task coverage and expert workload using the example of Fig.~\ref{fig:intro-example}. 
Consider the bipartite graph with experts as one set of nodes and tasks as the other. The edges shown in the graph represent the expert-task assignments.
The assignment on the left, achieves 100\% coverage for all three tasks; 
%Assigning Task 1 to Bob, Charlie and David, Task 2 to Charlie and David, and Task 3 to Bob and Charlie yields $100\%$ task coverage; 
however Charlie has a workload of 3, Bob and David each have workload of 2 while Alice is not assigned to any task. 
% Alternatively we could assign Task 1 to Alice, Charlie and David, and Tasks 2 and 3 to Charlie and David; this also yields complete coverage but Charlie and David each have a workload of 3 while Bob has a load of 0. 
However, the assignment on the right -- which allows for partial coverage -- does not cover any task 100\%, yet it is more balanced in terms of expert workload; all experts now have a workload of 1.

\begin{figure*}[h!]
\centering
\begin{subfigure}[b]{0.49\textwidth}
    \centering
    \includegraphics[width=\textwidth]{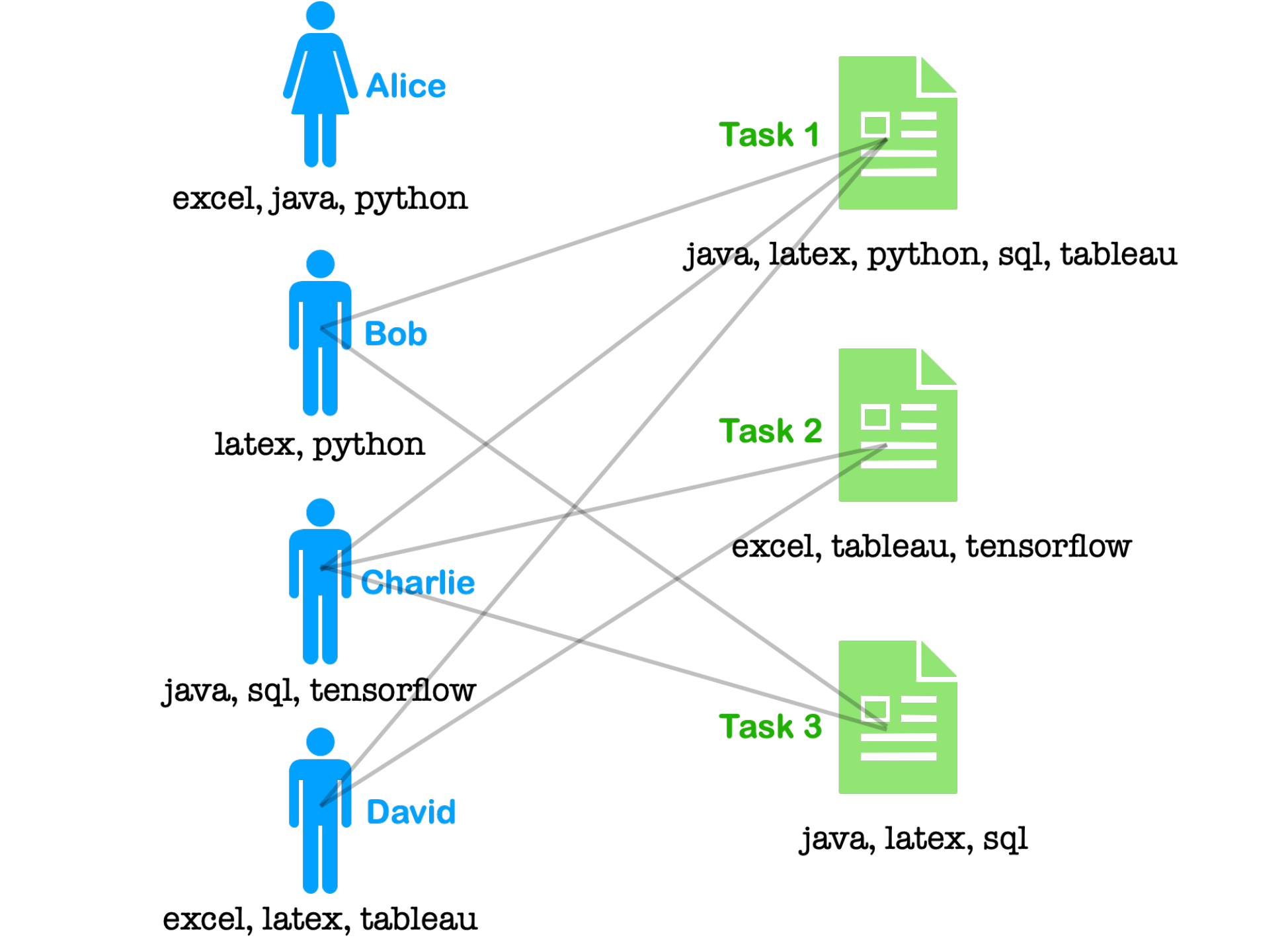}
    \caption{Assignment with full coverage for all tasks and unbalanced expert workload.}
\end{subfigure}
\hfill
\begin{subfigure}[b]{0.49\textwidth}
    \centering
    \includegraphics[width=\textwidth]{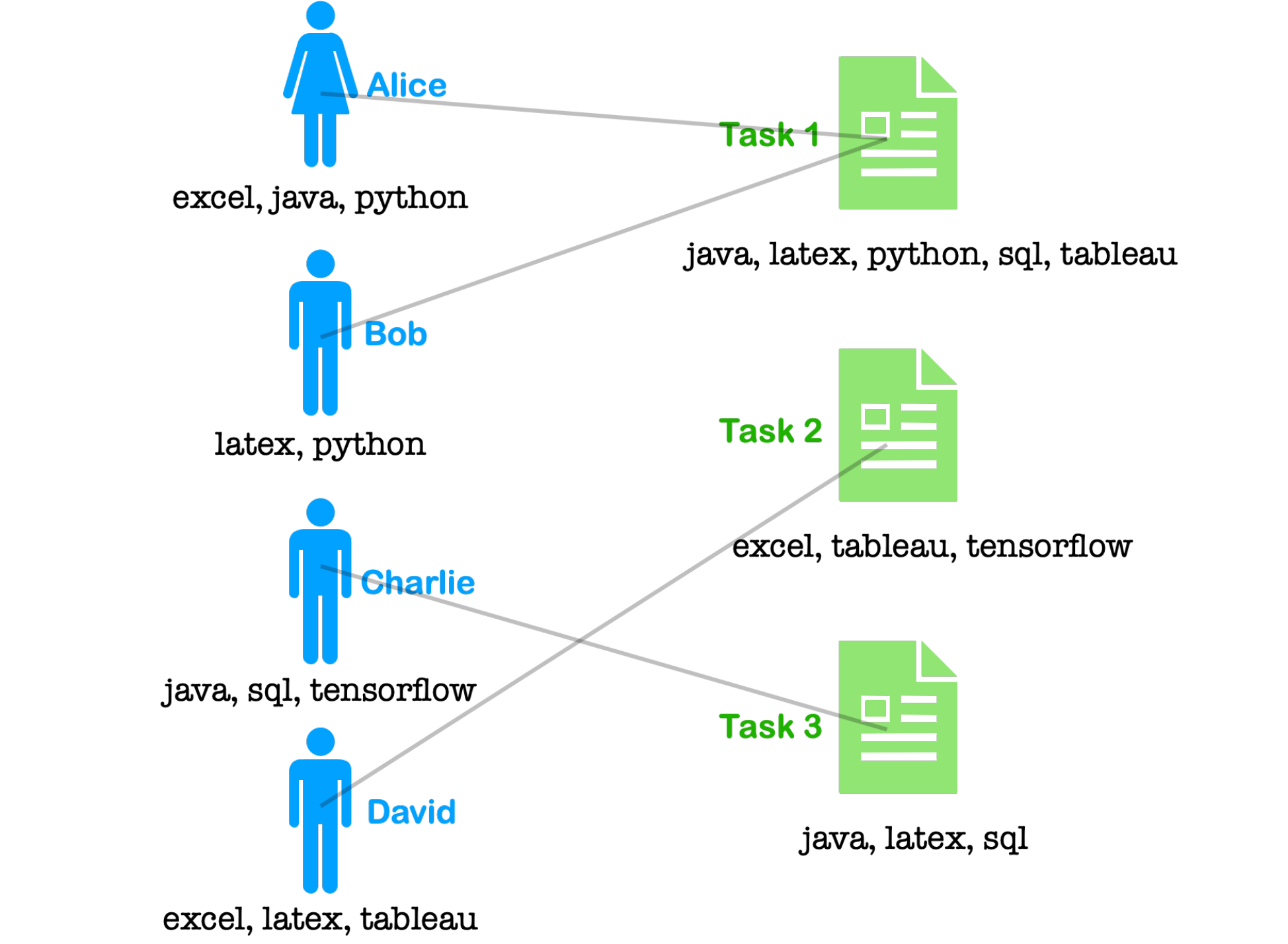}
        \caption{Assignment with partial coverage and balanced expert workload.}
\end{subfigure}
\caption{Motivating example with 4 experts and 3 tasks.}
\label{fig:intro-example}
\end{figure*}

In this paper, we propose team-formation problems
where the goal is to assign experts to a set of input tasks 
such that the task coverage is maximized,
and at the same time, the maximum workload among the experts used is minimized.
This trade-off suggests that we need not always cover the skills of every task completely, since covering a large fraction of their required skills might be sufficient. Also, given that overworked experts do not 
perform well, we penalize expert overloading by minimizing the maximum number of 
tasks assigned to an expert.  Therefore, for an assignment $\assignment$ of experts to tasks,
our goal is to maximize the combined objective:
\begin{equation}\label{eq:intro-objective}
	\objective(\assignment) = \lambda\cov(\assignment)-\maxload(\assignment),
\end{equation}
where $\cov(\assignment)$ is the sum of the fraction of the skills
of the tasks being covered by their assigned
experts and $\maxload(\assignment)$ is the maximum number of tasks assigned to a single expert.

Although we normalize the two terms of the objective (Eq.~\eqref{eq:intro-objective}) 
and make them comparable, 
in certain applications we may want to aim for different trade off between 
the coverage and maximum-load terms.
Thus, we incorporate the 
\emph{balancing coefficient} $\lambda$, 
which enables an effective tuning
of the importance of the two terms.  
%We extensively discuss the impact of this coefficient in our algorithm design and in our experiments.
We call this problem {\bbalance}.

Often, the experts are organized in a network, which encodes how well 
experts can work with each other. 
In the presence of such information, 
we extend the {\bbalance} problem so that the teams assigned to tasks 
have the property that their radius is not larger than a pre-specified threshold. 
The motivation is for teams to have small coordination cost 
and be able to work well with each other.
We call this version of the problem {\nbalance}.

We show that the two problems we define, 
{\bbalance} and {\nbalance}, are \NP-hard.

From the application point of view, it makes sense to 
relax the hard constraint of full coverage; in practice,
skills in tasks are often overlapping. For example, consider a task
requiring skills: \textit{advertising, internet advertising, 
Facebook advertising, online marketing, social network platforms}. 
Clearly, these are overlapping and not all of them need to 
be covered.  
Additionally, minimizing the maximum expert workload is desirable
for better team performance.

From the algorithmic point of view, 
optimizing the above objective, with or without the radius constraint in the teams, is challenging;   
the function itself may take negative values.  
Therefore, it does not admit
multiplicative approximation guarantees.
Although the coverage part of the objective ($\cov(\cdot)$) is a monotone submodular function, 
the  maximum load part does not have a predictable form (i.e., it is not linear or convex).  
Therefore, recent techniques~\citep{harshaw19submodular,mitra21submodularplus}
on submodularity optimization cannot be applied.  
However, we adopt from these works a weaker notion of approximation and aim to find an assignment $\assignment$ such that:
\begin{equation}\label{eq:intro-approx}
	\lambda \cov(\assignment)-\maxload(\assignment)\geq \alpha \left(\lambda \cov(\opt)\right)-\maxload(\opt),
\end{equation}
where $\opt$ is the optimal solution to the {\bbalance} or the {\nbalance} problems. 
In this case, $\alpha\leq 1$ is an approximation 
guarantee that better fits functions like ours. In this paper, we show that for the {\bbalance} problem, 
we can design a polynomial-time algorithm with  $\alpha = \left(1-1/e\right)$, 
which is probably 
the best we can hope for our objective given that the $\cov(\cdot)$ is monotone and submodular. Unfortunately, the {\nbalance} problem
appears to be significantly harder and for that we only present a heuristic algorithm, which works extremely well in 
our extensive experiments;  
designing an approximation algorithm for {\nbalance} 
is an open problem.  We note however, that both our algorithms follow the same generic design template --- which we  believe is interesting by itself.
We also show that our algorithms admit a lot of practical speedups, which are a consequence of the structure of our objective function.

Our experimental results demonstrate that our algorithms are practical in terms of their running time, 
and they output assignments with high total task coverage and very low maximum load. Comparisons with a number of baselines
inspired by existing works show that our algorithms consistently outperform them.  In our experiments, we also compare the characteristics of the teams found by our algorithms for {\bbalance} and {\nbalance}. Our findings are consistent with our expectation that
the solutions to the {\nbalance} problem are teams that are more
cohesive in the graph that encodes the experts' ability to work together; that is, the teams found as solutions to {\nbalance} have higher density in this graph.

\section{Related Work}\label{sec:related}
In this section, we highlight some related work in team formation and discuss its relationship to our problem and the algorithmic techniques we propose in this paper. 
To the 
best of our knowledge there is no other paper that addresses  the exact
{\bbalance} and {\nbalance} problems we discuss here.

\spara{Team formation with a single task:}
A large body of work in team formation assumes that there is a single task, which requires a set of skills. 
Additionally, there are experts who possess a subset of skills. The goal is to identify a ``good'' subset of the experts that collectively cover 
\emph{all} the skills required by the task. 
In the majority of this work ~\citep{bhowmik2014submodularity,kargar2013finding,kargar2011discovering,kargar2012efficient,lappas2009finding,majumder2012capacitated,li2015team,li2015replacing,li2017enhancing,rangapuram2013towards,yin2018social, hamidi2023variational, kou2020efficient, berktacs2021branch}, the requirement that all skills of the tasks are covered is a hard constraint.  Different problem formulations arise from the different definitions of the ``goodness'' of a team (i.e., small communication cost). 
The work by \citet{kargar2011discovering} and \citet{rangapuram2013towards} consider different graph communication costs in an offline setting to find a team of experts. However, these works consider single tasks with a complete coverage requirement, and consequently do not consider the trade-off between communication cost and expert workload. A subsequent related work by \citet{kargar2013finding} considers a bi-criteria optimization for complete coverage of a single task, to minimize both the communication cost as well as the personnel cost of the teams formed. 
While this work has a similar flavor to ours, it is important to note that our {\nbalance} problem formulation is a generalization of their work since we relax the complete coverage constraint and extend the offline scenario to forming teams for \emph{multiple} tasks simultaneously.

More recently, there has been some work
aiming to maximize a combined objective of task coverage minus 
the sum of the costs of the experts participating in the team~\citep{nikolakaki21efficient, dorn2010composing}.
In other words, the goal is to maximize a submodular (i.e., coverage) minus a linear function.  
The setting is similar to ours and it could be expanded to consider multiple tasks. 
However, the linear part of the objective is more structured than the maximum load we are considering here.  
As a result, the algorithmic techniques that were developed by~\citet{nikolakaki21efficient} cannot be applied to our setting. 
On the other hand, the work of~\citet{dorn2010composing} balances coverage with the team's communication cost on a graph. 
However, since their work considers only single tasks, their heuristics do not consider the workload of experts.

\spara{Team formation with multiple tasks:}
There is a number of papers that consider multiple tasks~\citep{anagnostopoulos10power,anagnostopoulos2012online,anagnostopoulos18algorithms,nikolakaki20finding, selvarajah2021unified}, 
most of which focus on the \emph{online} version of the problem, 
where tasks arrive in a streaming fashion.
The offline versions of these problems are also \NP-hard. 
Regardless of whether we study the offline or the online version of these problems, 
the setting is to minimize the load of the most loaded expert
while covering completely all the skills in all tasks. 
Our setting is a relaxation of these problems
aiming to maximize a combined objective of coverage minus load.
Also, this line of work considers a minimization problem 
while in this paper we study a maximization problem, 
and therefore, the approximation bounds we seek are different.

\spara{Approximation framework:} One of the 
intricacies of our objective function in the {\bbalance} and the {\nbalance} problems is that it can potentially take negative values.
The approximation of such functions requires a weaker notion of approximation that is different from the multiplicative approximation bounds~\citep{harshaw19submodular,mitra21submodularplus}. 
Although we adopt this framework in our case,
our objective function does not fall into any of the categories that have been studied before.  Therefore, we need to design new algorithms for our setting.

\section{Problem Definitions}\label{sec:preliminaries}
In this section, we  describe our notation and basic concepts, and formally define the {\bbalance} and {\nbalance} problems.

\subsection{Preliminaries}
\spara{Tasks, Experts and Skills.}
Throughout, we  assume a set of $\numtasks$ tasks
$\tasks = \{\task_1,\allowbreak\ldots,\allowbreak\task_\numtasks\}$ and a set of $\numexperts$ experts
$\experts = \{\expert_1,\allowbreak\ldots,\allowbreak\expert_\numexperts\}$.
We  also assume a set of skills $\skills$ such that every task \emph{requires} a set of skills
and every expert \emph{masters} a set of skills.  That is, for every task $\task_j \subseteq \skills$
and for every expert $\expert_i\subseteq \skills$.
Note that each skill could have an associated weight, but that doesn't change the problem complexity in our setting.

\spara{Assignments.}
An \emph{assignment} of experts to tasks is represented by a binary matrix~$\assignment$, 
such that $\assignment(i,j)=1$ if expert $\expert_i$ is assigned to task $\task_j$; 
otherwise $\assignment(i,j)=0$. Alternatively, one can view an assignment $\assignment$ 
as a bipartite graph with the nodes on one side corresponding to the experts
and the nodes on the other side corresponding to the tasks; edge $(i,j)$ exists if and only if
$\assignment(i,j)=1$.
Finally, we often view an assignment $\assignment$ as a \emph{set} of its $1$-entries.

\spara{Teams.}
Given an assignment $\assignment$, we can find the set of $\numtasks$ teams associated with $\assignment$, denoted by $\teams_\assignment$, such that $\team_j\in \teams_\assignment$ is the team of experts associated with task $\task_j$: i.e., $\team_j=\{X_i\mid A(i,j)=1\}$.
We use the additive skill model~\citep{anagnostopoulos10power} to define the expertise of a team: a skill is covered by the team if there exists at least one member on the team who has that skill. 

\spara{Task coverage.}
Given an assignment $\assignment$, we define the \emph{coverage} of task $\task_j$ as the fraction of the skills in $\task_j$ covered
by the experts assigned to $\task_j$. Formally,
\begin{equation*}
	\cov(\task_j\mid \assignment) = \frac{|(\cup_{i: \assignment(i,j) = 1} \expert_i) \cap \task_j|}{|\task_j|}.
\end{equation*}
Note that $0\leq \cov(\task_j\mid\assignment)\leq 1$.

Given an assignment $\assignment$, and the individual task coverages $\cov(\task_j\mid \assignment)$, we define the
\emph{overall coverage} as the sum of the individual task coverages:

\begin{equation*}
	\cov(\assignment) = \sum_{j=1}^m\cov(\task_j\mid \assignment).
\end{equation*}

\spara{Expert workload.}
Additionally, given an assignment $\assignment$, we define the \emph{load} of expert $\expert_i$ in $\assignment$ as the number of
tasks that $\expert_i$ is assigned to. Formally, 
\begin{equation*}
	\load(\expert_i\mid \assignment) = \sum_{j} \assignment(i,j).
\end{equation*}

Given an assignment $\assignment$, the \emph{maximum load} among all experts is
\begin{equation*}
	\maxload (\assignment) = \max_i\load(\expert_i\mid\assignment).
\end{equation*}

\spara{Coordination costs.}
We represent pairwise (symmetric) coordination costs between individual experts using edge weights on a graph $G = (\experts, \edges)$. The vertices of $G$ correspond to the set of experts, $\experts$ and the edges, $\edges$ are characterized by a metric distance function $d: \edges \to \mathbb{R}_{\geq 0}$. 
Although in the experimental section we discuss how $d(\cdot,\cdot)$ is computed, we point out here that we assume that there is a non-negative distance between any two experts;
that is, $d(\expert_i,\expert_j)\geq 0$ for every $\expert_i\neq\expert_j$. 
We also assume that $d(\cdot,\cdot)$ is a metric.

\spara{Team radius and diameter.}
We first define the \textit{radius} of a team~$\team$ as $\radius(\team) = \min_{\expert_i\in\team} \max_{\expert_j\in\team} d(\expert_i, \expert_j)$. 
The \textit{diameter} $\diam (\team)$ of a team $\team$
corresponds to the longest distance between any two experts on that team $\team$, and is defined as
$\diam (\team) = \max_{\expert_i, \expert_j \in \team} d(\expert_i, \expert_j).$
Since we consider a discrete metric space, it follows that:
$\frac{1}{2}\diam (\team) \le \radius (\team) \le \diam (\team)$.

Given an assignment $\assignment$
and the set of teams $\teams_{\assignment}$ associated with it, 
we define 
$\radius_{\max} (\assignment) = \max_{\team \in \teams_\assignment} \radius (\team)$.

\subsection{The \large{\bbalance} Problem} 
We now define the {\bbalance} problem as follows:
\begin{problem}[{\bbalance}]{\label{problem:balance}}
	Given a set of $\numtasks$ tasks $\tasks=\{\task_1,\ldots , \task_\numtasks \}$ and a set of $\numexperts$ experts $\experts = \{\expert_1,\ldots \expert_\numexperts \}$
	find an assignment $\assignment$ of experts to tasks such that 
	\begin{equation}\label{eq:objective}
		\objective(\assignment) = \lambda \cov( \assignment)  - \maxload (\assignment)
		%\objective(\assignment) = \cov( \assignment)  - \maxload (\assignment)
	\end{equation}
	is maximized.
\end{problem} 

The following observations provide some insight on our problem definition. 

\smallskip
\noindent
\emph{Observation 1:} 
The objective function (see Eq.~\eqref{eq:objective}) consists of two terms: the coverage, which we want to maximize, and the maximum load, which we want to minimize.  These two terms act in opposition to one another and a good solution needs to identify a ``balance point" between the experts being used and the coverage being achieved.  
Thus, the  number of experts in the solution is not
constrained in the definition of {\bbalance} itself.

\smallskip
\noindent
\emph{Observation 2:}
The parameter $\lambda$ is referred to as a \emph{balancing coefficient}. 
Depending on the application, one may need to tune the importance of the two parts of the objective. The balancing coefficient $\lambda$ should be thought
of as a factor that adds flexibility to the model and allows for flexibility in the team-construction process. 
A detailed discussion on how we set the value of $\lambda$ in practice is provided in Section~\ref{sec:discussion}. 

\smallskip
\noindent
\emph{Observation 3:}
The objective function $\objective (\cdot)$ is a summation of two quantities: coverage
and maximum load.  The coverage is a sum of normalized coverages multiplied by $\lambda$ and therefore it is a quantity that takes real values between
$[0,\lambda m]$; the value of $0$ is achieved when no task is covered and the value $\lambda m$ is achieved when all tasks are fully covered. 
The maximum load is a term that takes integer values between $\{0,m\}$, as the maximum load of an expert is between $0$ and the total 
number of tasks.  Therefore, the values of the two quantities are comparable and they can be added (or subtracted).
%However, for clarity of exposition, for the majority of the paper we assume that $\lambda =1$.

\smallskip
\noindent
\emph{Observation 4:} Finally, it can be shown that 
the first part of the objective, i.e.,  $\cov(\assignment)$, is a monotone and 
submodular function.  
We state this in the following proposition:

\begin{proposition}\label{prop:monotone-submodular}
The overall coverage function:
$\cov(\assignment) =  \sum_{j=1}^m\cov(\task_j\mid \assignment)$ is a monotone and submodular function.
\end{proposition}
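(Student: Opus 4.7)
The plan is to view an assignment $\assignment$ as a subset of the ground set $\experts \times \tasks$ of all possible expert-task pairs, and then exploit the fact that each individual term $\cov(\task_j \mid \assignment)$ depends only on the subset of edges of the form $(i,j)$ and is essentially a (normalized) set-cover function.

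First, I would argue monotonicity directly. If $\assignment \subseteq \assignment'$, then for every task $\task_j$ the set of experts assigned to $\task_j$ under $\assignment$ is a subset of the set of experts assigned to $\task_j$ under $\assignment'$. Hence $\bigcup_{i:\assignment(i,j)=1} \expert_i \subseteq \bigcup_{i:\assignment'(i,j)=1} \expert_i$, and intersecting with $\task_j$ and dividing by $|\task_j|$ preserves this inclusion at the level of cardinalities. Summing over $j$ gives $\cov(\assignment)\le \cov(\assignment')$.

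Next I would establish submodularity by verifying the diminishing-returns inequality on a single added edge. Fix $\assignment \subseteq \assignment'$ and an edge $e=(i^\star, j^\star)\notin \assignment'$. Adding $e$ only affects the $j^\star$-th term in the sum, so it suffices to show
\begin{equation*}
\cov(\task_{j^\star}\mid \assignment\cup\{e\})-\cov(\task_{j^\star}\mid \assignment)\;\ge\; \cov(\task_{j^\star}\mid \assignment'\cup\{e\})-\cov(\task_{j^\star}\mid \assignment').
\end{equation*}
Let $U=\bigcup_{i:\assignment(i,j^\star)=1}\expert_i\cap \task_{j^\star}$ and $U'=\bigcup_{i:\assignment'(i,j^\star)=1}\expert_i\cap \task_{j^\star}$, so that $U\subseteq U'\subseteq \task_{j^\star}$. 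The marginal gains of adding $e$ are, respectively, $|(\expert_{i^\star}\cap\task_{j^\star})\setminus U|/|\task_{j^\star}|$ and $|(\expert_{i^\star}\cap\task_{j^\star})\setminus U'|/|\task_{j^\star}|$. Since $U\subseteq U'$, the set difference with respect to $U$ is a superset of the set difference with respect to $U'$, so the first marginal gain is at least the second. This is exactly the classical argument that the weighted-coverage function $S\mapsto |\bigcup_{s\in S}\expert_s \cap \task_{j^\star}|$ is submodular.

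Finally, I would conclude by noting that a nonnegative linear combination (in this case, a sum) of monotone submodular functions is itself monotone and submodular, which gives the claim for $\cov(\assignment)=\sum_{j=1}^{m}\cov(\task_j\mid\assignment)$. There is no real obstacle here; the only subtlety is keeping the ground set (assignment edges) and the covered universe (skills in $\task_j$) separate, and recognizing that a single edge $(i^\star,j^\star)$ has a localized effect on exactly one summand, which is what reduces the proof to the standard submodularity of set cover.
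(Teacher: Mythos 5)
Your argument is correct and is exactly the standard route the paper takes: the paper omits the proof, simply noting that $\cov(\cdot)$ is a sum of coverage functions, which are known to be monotone and submodular. Your write-up just fills in the details of that same observation --- per-task coverage is a weighted set-cover function on the edges incident to that task, a new edge affects only one summand, and monotone submodularity is closed under nonnegative sums --- so there is nothing to correct.
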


The proof of this proposition is omitted as it is relatively simple: $\cov(\cdot)$ is a monotone submodular function as it is a summation of coverage functions that are known to be monotone and submodular~\citep{krause2014submodular}.

\spara{Problem complexity:} 
Clearly, there are cases where our problem is easy to solve: for example, if there is only one task then the best solution is the one
assigning every expert to this one task.  
However, our problem is \NP-hard in general.
Using similar observations as the ones made by~\citet{anagnostopoulos10power}
we can show that the {\bbalance} problem is \NP-hard even when there are only two tasks.

\begin{theorem}\label{thm:bbalance}
The {\bbalance} problem is \NP-hard even for $m=2$.
\end{theorem}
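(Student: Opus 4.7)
My plan is to reduce from the decision version of MAX-CUT, which is NP-complete: given a graph $G = (V, E)$ and integer $k$, decide whether $G$ admits a cut with at least $k$ edges. From such an instance I construct a {\bbalance} instance with $\numtasks = 2$ as follows. Let $\tasks = \{\task_1, \task_2\}$ with $\task_1 = \task_2 = E$, so that the edges of $G$ play the role of the skills. Let $\experts = \{\expert_v : v \in V\}$, where each expert $\expert_v$ masters exactly the skills corresponding to the edges incident to $v$. Finally, set the balancing coefficient $\lambda = 1$.

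The first step is to argue that an optimal assignment has $\maxload(\assignment) = 1$. The empty assignment attains objective $0$. Any assignment with $\maxload(\assignment) = 2$ satisfies
\[
	\objective(\assignment) = \lambda \cov(\assignment) - \maxload(\assignment) \leq 1 \cdot 2 - 2 = 0,
\]
because $\cov(\assignment) \leq \numtasks = 2$. By contrast, any bipartition $(A_1, A_2)$ of $V$, interpreted as the assignment that sends each vertex in $A_i$ to $\task_i$, has $\maxload = 1$ and strictly positive objective whenever $G$ contains an edge, so no optimum uses $\maxload \in \{0, 2\}$. Moreover, leaving a vertex unassigned can only decrease $\cov$ without decreasing $\maxload$, so the optimum is attained at a full bipartition of $V$.

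The second step computes the objective on such a bipartition. Set $E(A_i) := \{e \in E : e \cap A_i \neq \emptyset\}$. The definition of coverage gives $\cov(\task_i \mid \assignment) = |E(A_i)|/|E|$. Since $V = A_1 \cup A_2$ meets every edge, $E(A_1) \cup E(A_2) = E$, and by inclusion--exclusion
\[
	|E(A_1)| + |E(A_2)| = |E| + |E(A_1) \cap E(A_2)| = |E| + |\mathrm{cut}(A_1, A_2)|,
\]
so $\objective(\assignment) = |\mathrm{cut}(A_1, A_2)|/|E|$. Maximizing $\objective$ over all {\bbalance} assignments therefore coincides with finding a maximum cut of $G$, and a decision oracle for {\bbalance} with threshold $k/|E|$ decides MAX-CUT. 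Since the reduction is polynomial, NP-hardness of {\bbalance} for $\numtasks = 2$ follows.

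The main technical delicacy is the first step: ensuring that the optimum is realized by a clean bipartition, rather than by an assignment with $\maxload = 2$ that fully covers both tasks or by a partial assignment that leaves experts unused. The choice $\lambda = 1$ is what makes the argument clean, placing the coverage term (bounded by $\numtasks = 2$) and the max-load term on the same scale so that partitioning the vertices strictly dominates all other options whenever $G$ has at least one edge; any small polynomially bounded $\lambda$ would work, but $\lambda = 1$ gives the tidiest bookkeeping.
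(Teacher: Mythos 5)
Your reduction is correct, but it takes a genuinely different route from the paper. The paper reduces from monotone satisfiability (\textsc{MSat}): clauses become skills, literals become experts, the positive-clause skills form $\task_1$ and the negative-clause skills form $\task_2$, and one shows the optimal objective equals $1$ (full coverage $2$ minus load $1$) iff the formula is satisfiable. You instead reduce from \textsc{Max-Cut}, encoding edges as skills and vertices as experts, and you obtain the stronger, quantitative identity $\max_\assignment \objective(\assignment) = \mathrm{maxcut}(G)/|E|$ rather than a single yes/no threshold. Both arguments hinge on the same structural observation --- that with $\lambda=1$ and $m=2$ the coverage term is bounded by $2$, so any assignment with $\maxload\in\{0,2\}$ has objective at most $0$ and the optimum must be a load-$1$ bipartition of the experts --- but your version buys something extra: since a load-$1$ assignment can always be extended to a full bipartition without decreasing the objective, any multiplicative $\rho$-approximation to {\bbalance} on your instances yields a $\rho$-approximate cut, so the reduction is approximation-preserving and inherits the \textsc{Max-Cut} inapproximability constant, whereas the paper's \textsc{MSat} reduction only rules out exact optimization. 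Two trivial points to tighten: the claim that \emph{any} bipartition has strictly positive objective when $G$ has an edge is false (a bipartition with empty cut gives objective $0$); you only need that \emph{some} bipartition does, which is what your conclusion actually uses. And you should dispose of the degenerate case $E=\emptyset$ separately to avoid the $0/0$ in the coverage normalization.
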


\begin{proof}
We provide a proof of \NP-hardness for $\lambda=1$, via a reduction from the monotone satisfiability or \textsc{MSat} problem. 
The \textsc{MSat} problem is a version of satisfiability where clauses have only positive or only negative literals, and is known to be \NP-hard~\citep{lewis1983michael}.

An instance of \textsc{MSat} is specified by a set of clauses,
each clause being a disjunction of literals that are all positive or all negative.
Given an instance of the \textsc{MSat} problem we create an instance of the {\bbalance} problem, 
as follows.
\squishlist
\item every clause $C_\ell$ in \textsc{MSat} corresponds to a skill in our problem;
\item every literal $x_i$ in \textsc{MSat} corresponds to an expert $\expert_i$ in our problem; $\expert_i$ has skills 
that correspond to the clauses
in which $x_i$ or its negation participates;
\item we create two tasks $\tasks =\{ \task_1,\task_2\}$; $\task_1$ requires the skills that correspond to the clauses with positive literals and $\task_2$ requires the skills that correspond to the clauses with negative literals.
\squishend

We can show that
the instance of the {\bbalance} problem we have created  has 
a solution of value $1$ if and only if the corresponding instance of the
\textsc{MSat} problem has a satisfying assignment. 
For the one direction assume that there is a satisfying assignment in \textsc{MSat}. 
For a literal $x_i$ that is set to \texttt{true} the expert $\expert_i$ is assigned only to $\task_1$.
For a literal $x_i$ that is set to \texttt{false} the expert $\expert_i$ is assigned only to $\task_2$.
All experts are assigned to exactly one task, and thus, $\maxload=1$. 
Furthermore, both tasks are fully covered, and thus, the total coverage is 2. 
Therefore the value of the instance of the {\bbalance} problem is $2-1=1$.

For the other direction assume that the {\bbalance} objective is $1$. 
Notice that the possible values for \maxload are 0, 1, and 2. 
For the {\bbalance} objective to be $1$, the max load \maxload can only be 1. 
Indeed, if $\maxload=0$ or $2$ the value of the objective is less than or equal to ~$0$. 
When $\maxload=1$ then for the objective to be $1$, the total coverage should also be equal to~$2$. 
This only happens if there is an assignment of the experts to the two tasks 
such that each expert is assigned to exactly one task and each task is covered completely, 
which essentially means that there is a satisfying assignment to the \textsc{MSat} problem.
\end{proof}
	
\subsection{The \large{\nbalance} Problem} 
We now define the {\nbalance} problem as follows:

\begin{problem}[{\nbalance}]{\label{problem:nbalance}}
	Given a set of $\numtasks$ tasks 
    $\tasks=\{\task_1,\allowbreak\ldots,\allowbreak\task_\numtasks \}$, 
    a set of $\numexperts$ experts 
    $\experts = \{\expert_1,\allowbreak\ldots,\allowbreak\expert_\numexperts \}$, 
    a distance function $d(\cdot,\cdot)$ between any two experts, 
    and a radius constraint $r$, find an assignment $\assignment$ of experts to tasks such that 
	\begin{equation}\label{eq:objective-nbalance}
		\objective(\assignment) = \lambda \cov( \assignment)  - \maxload (\assignment)
	\end{equation}
	is maximized, and each task has a team of radius at most $r$, i.e.,
     $\radius_{\max} (\assignment) \leq r$.
\end{problem} 

\begin{theorem}\label{thm:nbalance}
The {\nbalance} problem is \NP-hard even for $m=2$ and any radius constraint $r$.
\end{theorem}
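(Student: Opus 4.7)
The plan is to obtain hardness of {\nbalance} by a direct reduction from the {\bbalance} problem, which Theorem~\ref{thm:bbalance} has already established to be \NP-hard for $m=2$. The key observation is that {\nbalance} is a strict generalization of {\bbalance}: it suffices to exhibit a coordination graph under which the radius constraint is vacuous, so that the two problems coincide on the constructed instance.

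Concretely, I would start from an arbitrary instance of {\bbalance} with two tasks $\tasks=\{\task_1,\task_2\}$, experts $\experts$, and skill sets. Keep the tasks, experts and skills unchanged, and endow $\experts$ with a distance function $d$ defined by $d(\expert_i,\expert_j) = c$ for every pair of distinct experts, where $c$ is any non-negative constant with $c \le r$ (for example, take $c=0$ when $r=0$ and $c=r$ otherwise). This $d$ is symmetric, non-negative, and satisfies the triangle inequality since $c \le c+c$, so it is a valid metric on $\experts$ in the sense assumed by the problem. Under this $d$, every non-singleton team $\team$ has $\radius(\team)=c\le r$, while empty or singleton teams have radius $0$; hence the constraint $\radius_{\max}(\assignment)\le r$ is satisfied by every assignment $\assignment$.

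Consequently, the feasible region of the constructed {\nbalance} instance is exactly the set of all assignments, and the objective $\objective(\assignment)=\lambda\cov(\assignment)-\maxload(\assignment)$ is literally the same function as in the {\bbalance} instance we started from. Thus an optimum of the {\nbalance} instance is an optimum of the original {\bbalance} instance, and the reduction is polynomial. Combined with Theorem~\ref{thm:bbalance}, this yields \NP-hardness of {\nbalance} for $m=2$ and any fixed radius constraint~$r\ge 0$.

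The only point requiring a moment of care is verifying that the constant-distance function is a bona fide metric and that it makes the radius constraint trivially satisfiable regardless of $r$; both are immediate from the definitions of $\radius$ and $\maxradius$ in Section~\ref{sec:preliminaries}, so no serious technical obstacle arises.
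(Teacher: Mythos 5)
Your proposal is correct and takes essentially the same route as the paper, which simply observes that {\nbalance} generalizes {\bbalance}; you merely make the generalization explicit by constructing a constant-distance coordination graph that renders the radius constraint vacuous. The only quibble is the $r=0$ edge case, where $c=0$ between distinct experts is a pseudometric rather than a metric, but this degenerate case is immaterial to the hardness claim and is not addressed by the paper either.
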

The proof of Theorem~\ref{thm:nbalance} follows from the fact that the {\nbalance} problem is a generalization of the {\bbalance} problem. 

\section{Algorithms for {\bbalance}}\label{sec:algos-bbalance}
The objective function $\objective(\cdot)$ of the {\bbalance} problem is defined as the  
dif\-fer\-ence between a submodular function (\emph{coverage}) and another function (\emph{maximum load}), 
which does not have a concrete form i.e., it is neither linear nor convex.  Therefore, existing results
on optimizing a submodular function~\citep{nemhauser78best} 
or a submodular plus a linear or convex function~\citep{harshaw19submodular,mitra21submodularplus,nikolakaki21efficient} 
are not applicable.

We describe {\thresholdgreedy}, a polynomial-time algorithm 
for the  {\bbalance} problem. 
We show {\thresholdgreedy} outputs an assignment
$\assignment$ such that:
\begin{equation*}
	\cov(\assignment)-\maxload(A) \geq \left(1-\frac{1}{e}\right) \cov(\opt)-\maxload(\opt),
\end{equation*}
or equivalently,
\begin{equation}\label{eq:approximation}
\objective(\assignment) \geq \left(1-\frac{1}{e}\right) \objective( \opt)  - \frac{1}{e}\maxload (\opt).
\end{equation}
where {\opt} is the optimal solution to the {\bbalance} problem.

The approximation guarantee described in Eq.~\eqref{eq:approximation}
is a weaker form of approximation than standard multiplicative approximation guarantees.
However, this is used in cases, like ours, where the objective function is not guaranteed to be 
positive~\citep{harshaw19submodular,mitra21submodularplus,nikolakaki21efficient}.

\subsection{The {\thresholdgreedy} Algorithm}
A key observation that {\thresholdgreedy} exploits is that
the value of $\maxload$ is an integer in  $[0, m]$, where $m$ is the total number
of tasks. 
Therefore, {\thresholdgreedy} proceeds by finding an assignment for each possible 
value of $\maxload$ and then returns the assignment with the best value of $\objective (\cdot)$.
The pseudocode is given in Algorithm~\ref{algo:thresholdgreedy}.

\begin{algorithm}[t]
	\caption{The {\thresholdgreedy} algorithm.}\label{algo:thresholdgreedy}
	\begin{algorithmic}[1]
		\Require  Set of $m$ tasks $\tasks$, $n$ experts $\experts$, and $\lambda$
		\Ensure An assignment of experts to tasks $\assignment$
		\State $\assignment \leftarrow \emptyset, \objective_{\max} = 0$
		\For{$\threshold = 1,...,m$}  \label{ln:outer}
		\State Create the set of experts $\experts_\threshold$, with $\threshold$ copies of each expert
		\State  $A_\threshold = {\greedy} (\experts_\threshold,\tasks)$ \label{ln:coverage}
		\State Compute $\objective_\threshold = \lambda \cov(\assignment_\threshold) - \threshold$\label{ln:objective}
		\If{$\objective_\threshold \geq \objective_{max}$}
		\State $\objective_{max} = \objective_\threshold$
		\State $\assignment \leftarrow \assignment_\threshold$
		\EndIf
		\EndFor\\
		\Return $\assignment$
	\end{algorithmic}
\end{algorithm}

In more detail,
given a threshold $\threshold$ on the value of $\maxload$,  any expert can be used at most $\threshold$ times.
Conceptually, this means that there are $\threshold$ copies of every expert and we find $\assignment_\threshold$ to be the {\greedy}  assignment corresponding to  $\threshold$;  
$\assignment_\threshold$ is found by invoking
the standard {\greedy} algorithm~\citep{vazirani2013approximation} --- for optimizing a monotone submodular function --- 
in order to 
optimize the overall coverage i.e., $\cov(\cdot)$. %-- see next paragraph for further explanation.  
After trying all possible $\numtasks$ values of $\threshold$, we pick the assignment $\assignment_\threshold$ 
that has the maximum value of the objective $\objective(\assignment_{\threshold})$. 

The {\greedy}
algorithm for solving the  coverage problem for input experts $\experts_\threshold$
and tasks $\tasks$ (Line~\ref{ln:coverage} of Algorithm~\ref{algo:thresholdgreedy}) greedily assigns experts in $\experts_\threshold$
to tasks until there are no more experts available.  At step $\ell+1$,  {\greedy}  
finds assignment $A_\threshold^{\ell+1}$ by extending~$\assignment_\threshold^{\ell}$ with the addition 
of expert $i$ assigned to task $j$ so that its \emph{marginal gain}
\begin{equation}\label{eq:marginal-gain}
	\tilde{\cov}((i,j)\mid A^\ell) =  \cov\left(\assignment_\threshold^{\ell}\cup (i,j)\right)-\cov\left(\assignment_\threshold^{\ell}\right)
\end{equation}  
is maximized.  
During this greedy assignment, each one of the $\threshold$ copies of every expert is considered as a
different expert and once a copy is assigned to a task the copy is removed from the candidate experts. 
%This {\greedy} algorithm is the standard greedy algorithm for solving the 
% maximum coverage problem~\cite{vazirani2013approximation}.	

\subsection{Approximation}
Here, we prove our approximation result for {\thresholdgreedy}, as outlined 
already in Eq.~\eqref{eq:approximation}. Before proving the main theorem 
we need the following lemma: 

\smallskip
\begin{lemma}\label{lm:fixedthreshold}
	Let $\assignment_\threshold$ be the assignment of experts to tasks returned by  
	{\greedy}  (Line~\ref{ln:coverage} of Alg.~\ref{algo:thresholdgreedy}) for fixed threshold
	workload $\threshold$. Let $\opt_\threshold$ be the optimal assignment of experts $\experts_\threshold$ to tasks $\tasks$
	with respect to the coverage objective $\cov(\opt_\threshold)$. Then, it holds that:
	\[
	\cov\left(\assignment_\threshold\right)\geq \left(1-\frac{1}{e}\right)\cov\left(\opt_\threshold\right).
	\]
\end{lemma}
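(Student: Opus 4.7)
My plan is to reduce the statement to the classical $(1-1/e)$ analysis of the greedy algorithm for monotone submodular maximization. Two key facts are already available: (i) $\cov(\cdot)$ is monotone and submodular by Proposition~\ref{prop:monotone-submodular}; and (ii) the routine invoked on Line~\ref{ln:coverage} of Algorithm~\ref{algo:thresholdgreedy} is exactly the standard greedy, picking at each step the (copy, task) pair with the largest marginal gain in $\cov$ subject to each copy in $\experts_\threshold$ being used at most once.

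After setting up notation---let $L=|\opt_\threshold|\leq \numexperts\threshold$, let $A^\ell=\{g_1,\ldots,g_\ell\}$ be greedy's partial assignment after $\ell$ steps, and write $\delta(e\mid A^\ell)=\cov(A^\ell\cup\{e\})-\cov(A^\ell)$---the core of the argument is the invariant
\begin{equation*}
\delta(g_{\ell+1}\mid A^\ell) \;\geq\; \tfrac{1}{L}\bigl(\cov(\opt_\threshold)-\cov(A^\ell)\bigr), \qquad \ell=0,1,\ldots,L-1.
\end{equation*}
Given this invariant, standard telescoping yields $\cov(\opt_\threshold)-\cov(A^{\ell+1})\leq (1-1/L)(\cov(\opt_\threshold)-\cov(A^\ell))$, and iterating $L$ times produces $\cov(A^L)\geq (1-(1-1/L)^L)\cov(\opt_\threshold)\geq (1-1/e)\cov(\opt_\threshold)$. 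Since $\assignment_\threshold\supseteq A^L$ and $\cov$ is monotone, the lemma would follow immediately.

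Establishing the invariant splits into two parts. The averaging half is routine: monotonicity and submodularity give $\sum_{e\in\opt_\threshold}\delta(e\mid A^\ell)\geq\cov(\opt_\threshold\cup A^\ell)-\cov(A^\ell)\geq\cov(\opt_\threshold)-\cov(A^\ell)$, so some $e^\ast\in\opt_\threshold$ attains at least the average $(\cov(\opt_\threshold)-\cov(A^\ell))/L$. The hard part---and the main obstacle---is arguing that greedy's actual pick $g_{\ell+1}$ has marginal gain at least $\delta(e^\ast\mid A^\ell)$, because $e^\ast$ may reference a copy that greedy has already consumed and so is not a legal candidate at step $\ell+1$. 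I would handle this via a copy-relabeling argument that exploits the fact that $\cov(\cdot)$ depends only on the multiset of (expert, task) pairs, not on the identity of the chosen copy: as long as greedy has used strictly fewer than $\threshold$ copies of the expert $X$ appearing in $e^\ast$, one can relabel OPT's copy to a still-available copy of $X$, so that $e^\ast$ becomes a legitimate greedy candidate with identical marginal gain. In the remaining case, greedy has saturated all $\threshold$ copies of $X$; since OPT also uses at most $\threshold$ copies of $X$, one can set up a bijection between OPT's and greedy's uses of $X$ and restrict the averaging step to the OPT pairs whose experts are not yet saturated. This copy-exchange bookkeeping is the only place the proof departs from the textbook analysis, and formalizing it cleanly is where I expect the real work to lie.
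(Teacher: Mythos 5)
The paper never actually writes out a proof of this lemma---it only asserts that the argument is ``similar to'' the textbook $(1-1/e)$ analysis of \greedy for coverage and omits it---so your attempt is already more explicit than the source, and to your credit you isolate exactly the step that is \emph{not} routine: the constraint here is a partition matroid (each expert may appear in at most $\threshold$ assigned pairs), not a cardinality constraint, so the element $e^\ast\in\opt_\threshold$ produced by the averaging step may involve an expert all of whose copies \greedy has already consumed, and then the comparison $\delta(g_{\ell+1}\mid A^\ell)\geq\delta(e^\ast\mid A^\ell)$ has no justification. However, your proposed repair for that case---``restrict the averaging step to the OPT pairs whose experts are not yet saturated''---does not close the gap: once the saturated pairs are dropped from the sum, the inequality $\sum_{e}\delta(e\mid A^\ell)\geq\cov(\opt_\threshold)-\cov(A^\ell)$ no longer holds for the restricted sum, so the invariant cannot be recovered. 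The bijection/exchange idea you allude to is precisely the mechanism behind the $1/2$ guarantee of plain greedy under a matroid constraint, not $1-1/e$.

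The gap is not merely one of bookkeeping: the invariant you aim to prove is false. Take $\threshold=1$, two tasks $\task_1=\{a_0,a_1,\ldots,a_N\}$ and $\task_2=\{b_1,\ldots,b_N\}$, and two experts $\expert_1=\{a_0,\ldots,a_N,b_1,\ldots,b_{N-1}\}$ and $\expert_2=\{a_1,\ldots,a_N\}$. The strictly largest first marginal gain is $1$, attained only by the pair $(\expert_1,\task_1)$; after \greedy takes it, $\expert_1$ is saturated and every remaining legal pair has marginal gain $0$, so $\cov(\assignment_1)=1$, whereas $\cov(\opt_1)=\tfrac{N}{N+1}+\tfrac{N-1}{N}\rightarrow 2$. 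Already for $N=5$ this violates the claimed $(1-1/e)$ bound (and your invariant fails at $\ell=1$). So either the lemma needs a different algorithm or analysis (the $1/2$ matroid-greedy bound, or continuous-greedy-type techniques to recover $1-1/e$), or additional assumptions; the paper's appeal to the standard coverage analysis glosses over exactly the difficulty you identified, and your fix does not resolve it.
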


The proof of this lemma is similar to the proof that {\greedy} is an $\left(1-
\frac{1}{e}\right)$-approximation algorithm to the coverage problem~\citep{vazirani2013approximation} and is thus omitted.

The above lemma states that for every threshold $\threshold$ (i.e., for every iteration of  {\thresholdgreedy}), the {\greedy} subroutine is guaranteed to return a solution that has good coverage
with respect to the optimal solution for the coverage problem for this threshold $\threshold$.
The lemma does not state anything about the final solution returned by
{\thresholdgreedy}, or about the approximation with respect to the objective 
function~$\objective(\cdot)$. 
We build upon the lemma and state the following theorem.

\smallskip
\begin{theorem}\label{thm:maintheorem}
	Let $\assignment$ be the assignment returned by {\textrm{\thresholdgreedy}}
	and let $\opt$ be the optimal assignment for the {\bbalance} problem. Then we have the following approximation:
	$$\lambda \cov( \assignment)  - \maxload (\assignment) \geq \left(1-\frac{1}{e}\right) \lambda \cov( \opt)  - \maxload (\opt).$$
\end{theorem}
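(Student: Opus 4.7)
The plan is to show that the iteration of {\thresholdgreedy} corresponding to $\threshold^{*} = \maxload(\opt)$ already produces an assignment that beats the claimed bound, and then to appeal to the fact that the algorithm returns the best assignment over all iterations.

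First I would fix $\threshold^{*} = \maxload(\opt)$ and let $\opt_{\threshold^{*}}$ denote the optimal assignment (for coverage alone) when each expert has at most $\threshold^{*}$ copies available. Since $\opt$ uses each expert at most $\threshold^{*}$ times, $\opt$ is itself a feasible assignment in that restricted instance, so
\begin{equation*}
\cov(\opt_{\threshold^{*}}) \geq \cov(\opt).
\end{equation*}
Next I would invoke Lemma~\ref{lm:fixedthreshold} on the iteration $\threshold = \threshold^{*}$ of Algorithm~\ref{algo:thresholdgreedy} to obtain
\begin{equation*}
\cov(\assignment_{\threshold^{*}}) \;\geq\; \left(1 - \tfrac{1}{e}\right)\cov(\opt_{\threshold^{*}}) \;\geq\; \left(1 - \tfrac{1}{e}\right)\cov(\opt).
\end{equation*}

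Then I would combine this coverage bound with the load bound. At iteration $\threshold^{*}$ the algorithm records
\begin{equation*}
\objective_{\threshold^{*}} \;=\; \lambda\cov(\assignment_{\threshold^{*}}) - \threshold^{*} \;\geq\; \left(1 - \tfrac{1}{e}\right)\lambda\cov(\opt) - \maxload(\opt).
\end{equation*}
A small but important point is that $\objective_{\threshold^{*}}$ is actually a \emph{lower bound} on the true objective value of $\assignment_{\threshold^{*}}$: because each expert has only $\threshold^{*}$ copies available, we must have $\maxload(\assignment_{\threshold^{*}}) \leq \threshold^{*}$, and hence $\lambda\cov(\assignment_{\threshold^{*}}) - \maxload(\assignment_{\threshold^{*}}) \geq \objective_{\threshold^{*}}$. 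Finally, since {\thresholdgreedy} returns the assignment $\assignment$ maximizing $\objective_{\threshold}$ over $\threshold \in \{1,\dots,\numtasks\}$ and $\threshold^{*} \in \{1,\dots,\numtasks\}$ is one of the candidates, we get $\objective(\assignment) \geq \objective_{\threshold^{*}}$, yielding the claimed inequality.

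The only mild subtlety -- and really the only place where care is needed -- is keeping straight the distinction between the iteration index $\threshold$ (an upper bound on load imposed by the number of expert copies) and the actual quantity $\maxload(\assignment_{\threshold})$; using the former in $\objective_{\threshold}$ only underestimates the true objective, which is exactly what we need for a lower-bound argument. There is no genuine algorithmic obstacle: the proof amounts to the observation that one of the $\numtasks$ iterations examines precisely the load regime in which $\opt$ lives, and on that iteration Greedy already delivers a $(1-1/e)$-approximation of the coverage component while matching $\opt$'s load to within the $\threshold^{*}$ penalty.
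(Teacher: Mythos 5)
Your proof is correct and follows essentially the same route as the paper's: fix the iteration $\threshold^{*} = \maxload(\opt)$, observe that $\opt$ is feasible for the coverage problem with $\threshold^{*}$ copies so $\cov(\opt_{\threshold^{*}}) \geq \cov(\opt)$, apply Lemma~\ref{lm:fixedthreshold}, and use the fact that the algorithm returns the best iterate. Your explicit remark that $\maxload(\assignment_{\threshold^{*}}) \leq \threshold^{*}$, so that $\objective_{\threshold^{*}}$ only underestimates the true objective, is a small point the paper glosses over, but it does not change the argument.
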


\begin{proof}
	Let us  assume that
	$\maxload(\opt)=\threshold^*$. 
	Note that $\maxload(\assignment)$ may or may not be equal to 
	$\threshold^*$. Then, we have the following: 

  \begin{center}
	\begin{align*}
 		\objective(\assignment) & \geq \objective(\assignment_{\threshold^*}) & \hfill\text{(True for any $\threshold$)}\\
		& = \lambda \cov(\assignment_{\threshold^*})-\threshold^*\\                                               
		&\geq \left(1-\frac{1}{e}\right) \lambda \cov(\opt_{\threshold^*})-\threshold^*  & \text{(Lemma~\ref{lm:fixedthreshold})}\\
		&\geq \left(1-\frac{1}{e}\right) \lambda \cov(\opt)-\threshold^* & \text{($\opt_{\threshold^*}$ is optimal for threshold $\threshold^*$)}\\
		& = \left(1-\frac{1}{e}\right)\lambda \cov(\opt)-\maxload(\opt).
	\end{align*}
 \end{center}
\end{proof}

\subsection{Running Time and Speedup}
\label{sec:speedups}

A naive implementation of {\thresholdgreedy} has running time $\bigO(m^2n^2)$. 
It requires~$m$ calls to the {\greedy} routine in Line~\ref{ln:coverage}, 
which if implemented naively,
takes time~$\bigO(mn^2)$.  Such a running time would make {\thresholdgreedy} impractical.  
Below, we discuss three methods that significantly improve the running time of our
algorithm and allow us to experiment with reasonably large datasets.

\spara{Lazy greedy instead of greedy:}
First, instead of using the naive implementation of {\greedy}, we deploy the lazy-evaluation
technique introduced by~\cite{minoux1978accelerated}. 
The lazy-evaluation technique utilizes a maximum priority queue to exploit the diminishing returns of the submodular function $\cov()$ to avoid re-evaluating candidate elements with low marginal gain, and performs very well in practice.
In our experiments, we only use this lazy-evaluation version of {\greedy}. 

\spara{Early termination of {\thresholdgreedy}:} A computational bottleneck for {\thresholdgreedy} is its outer loop
(line~\ref{ln:outer} in Algorithm~\ref{algo:thresholdgreedy}), which needs to be repeated 
$\numtasks$ times, where $\numtasks$ is the total number of tasks.   Here we show that not all 
$\numtasks$ values of $\threshold$ need to 
be considered. This is because the value of the objective function as computed by 
{\thresholdgreedy} for the different values of $\threshold$ is a unimodal function, which initially increases and then starts decreasing.  Therefore, once a  maximum is found for some value
of $\threshold$, the algorithm
can safely terminate as the value of the objective will not improve for larger values of $\threshold$.

If we denote by $\assignment_\threshold$
the assignment produced at the $\threshold$-th iteration of  {\thresholdgreedy} and by  $C_\threshold = \cov(\assignment_\threshold)$, then $\objective_\threshold = C_\threshold -\threshold$. Using this notation, we have the following theorem.

\smallskip
\begin{theorem}\label{thm:unimodality}
	If there is a value of the threshold $\threshold^\ast$, such that
	$F_{\threshold^\ast}\geq F_{\threshold^\ast-1}$ and
	$F_{\threshold^\ast}\geq F_{\threshold^\ast+1}$, then
	the values of the objective function $\objective_\threshold=\objective(\assignment_\threshold)$ as computed by {\thresholdgreedy} (line~\ref{ln:objective})
	for $\threshold = 1,\ldots , m$ are unimodal. That is, 
	$\objective_1\leq \objective_2\leq\ldots\leq \objective_{\threshold^\ast}$ and
	$\objective_{\threshold^\ast}\geq \objective_{\threshold^\ast+1}\geq \ldots\geq \objective_m$.
\end{theorem}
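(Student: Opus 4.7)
The plan is to derive unimodality from a discrete-concavity property of $\{F_\threshold\}$ combined with the local-maximum hypothesis. Concretely, I would prove that the first differences $\Delta_\threshold := F_\threshold - F_{\threshold-1}$ form a non-increasing sequence in $\threshold$. Once that is in hand, the hypothesis $F_{\threshold^\ast}\geq F_{\threshold^\ast-1}$ and $F_{\threshold^\ast}\geq F_{\threshold^\ast+1}$ translates to $\Delta_{\threshold^\ast}\geq 0\geq \Delta_{\threshold^\ast+1}$, and non-increasingness of $\Delta_\threshold$ forces $\Delta_\threshold\geq \Delta_{\threshold^\ast}\geq 0$ for every $\threshold\leq \threshold^\ast$, so that $F_1\leq F_2\leq\cdots\leq F_{\threshold^\ast}$, and $\Delta_\threshold\leq \Delta_{\threshold^\ast+1}\leq 0$ for every $\threshold\geq \threshold^\ast+1$, so that $F_{\threshold^\ast}\geq F_{\threshold^\ast+1}\geq\cdots\geq F_m$. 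This is exactly the claimed unimodality.

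The core of the proof therefore reduces to establishing non-increasingness of $\Delta_\threshold$. Since $F_\threshold = \lambda C_\threshold - \threshold$ with $C_\threshold = \cov(\assignment_\threshold)$, we have $\Delta_\threshold = \lambda (C_\threshold - C_{\threshold-1}) - 1$; the linear term $-\threshold$ contributes only a constant, so it suffices to show that the coverage sequence is discretely concave, i.e., $C_\threshold - C_{\threshold-1}\geq C_{\threshold+1} - C_\threshold$. The two ingredients I would combine for this concavity are the submodularity of $\cov(\cdot)$ (Proposition~\ref{prop:monotone-submodular}) and the monotone-marginal-gain structure internal to {\greedy}: at each iteration {\greedy} selects the (expert-copy, task) pair of maximum marginal contribution, so within a single run the marginal gains are already non-increasing. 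Relaxing the per-expert budget from $\threshold$ to $\threshold+1$ amounts to appending pairs on top of an assignment at least as rich as the one produced in the previous step, so by diminishing returns each such additional pair should contribute less coverage than the analogous pair added when stepping from $\threshold-1$ to $\threshold$.

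The main obstacle I anticipate is making the concavity argument fully rigorous, because the greedy assignments $\assignment_{\threshold-1}$, $\assignment_\threshold$, and $\assignment_{\threshold+1}$ are not literally nested: a larger per-expert budget may reroute {\greedy} toward different choices even at early iterations. To bridge this, I would build a coupling/exchange argument between consecutive {\greedy} runs, aligning the ``extra'' pairs added when $\threshold$ increases so that submodularity of $\cov(\cdot)$ can be applied on a common ground set. Matching the extra pairs at the $\threshold\to \threshold+1$ step against corresponding pairs from the $\threshold-1\to \threshold$ step, the non-increasingness of marginal gains across successive thresholds then follows from the diminishing-returns property of $\cov(\cdot)$, which in turn closes the proof via the argument outlined in the first paragraph.
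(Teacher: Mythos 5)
Your overall strategy is exactly the paper's: reduce unimodality to discrete concavity of the coverage sequence, $C_\threshold - C_{\threshold-1} \geq C_{\threshold+1} - C_\threshold$ (Proposition~\ref{prop:submodularity} in the paper), and then propagate the local-maximum hypothesis in both directions --- forward to get $F_1\leq\cdots\leq F_{\threshold^\ast}$ and backward to get $F_{\threshold^\ast}\geq\cdots\geq F_m$. The paper writes this propagation with $\lambda$ absorbed into $C_\threshold$, but it is the same computation as your $\Delta_\threshold = \lambda(C_\threshold - C_{\threshold-1})-1$ formulation, and your handling of general $\lambda$ is if anything slightly cleaner.

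The one place your proposal is incomplete is precisely the step you flag: establishing concavity of $\threshold\mapsto C_\threshold$. The paper obtains Propositions~\ref{prop:monotonicity} and~\ref{prop:submodularity} from the claim that consecutive greedy assignments are nested, $\assignment_\threshold\subseteq\assignment_{\threshold+1}$, so that monotonicity and diminishing returns of $\cov(\cdot)$ can be applied along a common chain of sets. You correctly observe that this nestedness is not automatic for independent from-scratch runs of {\greedy}: relaxing the per-expert budget from $\threshold$ to $\threshold+1$ can reroute the algorithm the first time some expert hits its cap, after which the two runs diverge and neither containment is guaranteed. But your proposed repair --- a coupling/exchange argument matching the ``extra'' pairs of the $\threshold\to\threshold+1$ step against those of the $\threshold-1\to\threshold$ step --- is only sketched, and it is not clear it goes through as stated, since submodularity alone does not order marginal gains taken on top of two incomparable base sets. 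As written, the proposal therefore does not close the concavity step. To complete it you would either need to carry out that exchange argument in detail, or take the route the paper implicitly takes: implement {\greedy} incrementally so that $\assignment_{\threshold+1}$ is by construction an extension of $\assignment_\threshold$, in which case both propositions follow immediately from monotonicity and submodularity of $\cov(\cdot)$ along the resulting chain.
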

\smallskip

In order to prove Theorem~\ref{thm:unimodality}, we rely on the properties of {\thresholdgreedy} as well as on the fact 
that the coverage function $\cov(\cdot)$ is monotone and submodular (Proposition~\ref{prop:monotone-submodular}).  
Recall that $\assignment_\threshold$
is the assignment produced at the $\threshold$-th iteration of  {\thresholdgreedy} and
$C_\threshold = \cov(\assignment_\threshold)$.
Then, by definition $\objective_\threshold = C_\threshold -\threshold$.  
Moreover, the monotonicity and submodularity
of the coverage function imply the following:\footnote{$C_0=0$ 
since it is the coverage of the empty assignment.}

\smallskip
\begin{proposition}\label{prop:monotonicity}
	The monotonicity of the overall coverage function implies that for every $\threshold\in\{1,\ldots,\numtasks\}$: $C_\threshold\geq C_{\threshold-1}$.
\end{proposition}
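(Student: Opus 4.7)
My plan is to establish $C_\threshold \geq C_{\threshold-1}$ through a side-by-side comparison of the two executions of the {\greedy} subroutine, relying on the monotonicity of $\cov(\cdot)$ (Proposition~\ref{prop:monotone-submodular}). The structural observation underlying the argument is that the multiset of experts $\experts_\threshold$ strictly extends $\experts_{\threshold-1}$: it contains exactly one additional copy of every expert. Consequently, every (expert-copy, task) pair feasible for greedy with threshold $\threshold - 1$ is also feasible for greedy with threshold $\threshold$, so the latter operates under a strictly more permissive constraint.

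First, I would set up an inductive comparison on the step counter $\ell$ of the greedy process, aiming to show that the partial assignment $\assignment_\threshold^\ell$ after $\ell$ steps of greedy with threshold $\threshold$ satisfies $\cov(\assignment_\threshold^\ell) \geq \cov(\assignment_{\threshold-1}^\ell)$. Up to the first step at which the two runs diverge, both algorithms face identical feasible sets and pick the same max-gain pair (per Eq.~\eqref{eq:marginal-gain}), so equality holds. Once the step-wise dominance is established, the conclusion uses the fact that greedy with threshold $\threshold$ performs $\numexperts \cdot \threshold$ total assignments, which is $\numexperts$ more than greedy with threshold $\threshold - 1$; by monotonicity, each additional assignment contributes non-negative marginal gain, so the final coverage satisfies $C_\threshold \geq C_{\threshold - 1}$.

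The hard part will be the post-divergence phase of the induction. Once greedy with threshold $\threshold$ commits to a pair infeasible under the $\threshold - 1$ constraint, the two partial assignments diverge and their marginal-gain landscapes evolve asymmetrically: greedy with $\threshold - 1$ may face larger marginal gains from a lower coverage baseline, while greedy with $\threshold$ has smaller gains from a higher baseline, so a naive step-wise comparison of gains breaks down. Overcoming this obstacle will likely require a coupling or exchange argument---ideally establishing the stronger claim $\assignment_{\threshold-1} \subseteq \assignment_\threshold$ as sets of (expert, task) pairs---from which $C_\threshold \geq C_{\threshold-1}$ would follow immediately by the monotonicity of $\cov(\cdot)$.
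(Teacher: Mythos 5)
Your plan correctly isolates the two ingredients: monotonicity of $\cov(\cdot)$ and some form of coupling between the run of {\greedy} on $\experts_{\threshold-1}$ and the run on $\experts_\threshold$. It also correctly identifies that the containment $\assignment_{\threshold-1}\subseteq\assignment_\threshold$ would finish the proof in one line. That containment is in fact exactly what the paper uses: its entire argument for Propositions~\ref{prop:monotonicity} and~\ref{prop:submodularity} is the assertion that {\thresholdgreedy} produces nested assignments $\assignment_\threshold\subseteq\assignment_{\threshold+1}$ (the $\threshold$-th assignment is extended by the newly released expert copies), after which $C_\threshold\geq C_{\threshold-1}$ is immediate from monotonicity.

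The gap is that you stop precisely where the work is. You name the post-divergence phase as ``the hard part'' and defer it to ``a coupling or exchange argument,'' but you do not supply one, and your own diagnosis is accurate: once the $\threshold$-run commits to a pair that is infeasible under the $\threshold-1$ constraint, the two runs' feasible sets are no longer nested, the partial assignments differ, and neither the step-wise dominance $\cov(\assignment_\threshold^\ell)\geq\cov(\assignment_{\threshold-1}^\ell)$ nor the set containment follows from a straightforward induction (tie-breaking alone can already destroy the containment for two independent re-runs of {\greedy}). Likewise, your closing step --- ``greedy with threshold $\threshold$ performs $\numexperts$ more assignments, each with non-negative gain'' --- is only valid if those extra assignments extend $\assignment_{\threshold-1}$ itself, i.e., it presupposes the containment you have not established. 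To close the argument you should do what the paper implicitly does: treat the $\threshold$-th invocation not as an independent re-run but as a continuation of the $(\threshold-1)$-th one with one additional copy of each expert released, so that $\assignment_{\threshold-1}\subseteq\assignment_\threshold$ holds by construction; monotonicity of $\cov(\cdot)$ then gives $C_\threshold\geq C_{\threshold-1}$ directly.
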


\smallskip
\begin{proposition}\label{prop:submodularity}
	The submodularity of the overall coverage function implies that for every 
	$\threshold\in\{1,\ldots,\numtasks -1 \}$: $C_\threshold-C_{\threshold-1}\geq C_{\threshold+1}-C_\threshold$.
\end{proposition}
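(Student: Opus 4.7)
The plan is to prove the discrete concavity of the greedy coverage sequence $C_\threshold$ by extracting diminishing returns from the greedy algorithm's behaviour. The starting point is the observation that greedy with threshold $\threshold$ is greedy maximization of the monotone submodular function $\cov(\cdot)$ (Proposition~\ref{prop:monotone-submodular}) over the ground set of (expert, task) pairs, subject to the partition-matroid constraint that no expert is used more than $\threshold$ times. I would proceed in three stages.

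First, I would show that within a single greedy run the sequence of marginal gains is non-increasing. Two ingredients suffice: (i) at each step greedy selects the feasible pair with maximum marginal gain, and (ii) in a partition matroid with uniform cap $\threshold$, the feasible set only shrinks over time, since once an expert reaches its cap, its remaining pairs are permanently excluded and no pair ever becomes newly feasible. Combined with submodularity of $\cov$, which guarantees that the marginal gain of any fixed unpicked pair is non-increasing as picks accumulate, the maximum over the shrinking feasible set is itself non-increasing.

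Second, I would couple the three executions of greedy at thresholds $\threshold-1$, $\threshold$, and $\threshold+1$, run in lockstep under a consistent tie-breaking rule. The three runs agree step-for-step as long as no expert in the accumulated prefix has been used $\threshold-1$ times, because up to that moment the feasibility sets coincide. After the first such divergence step, I would express the two differences $C_\threshold-C_{\threshold-1}$ and $C_{\threshold+1}-C_\threshold$ as sums of marginal gains of the ``extra'' picks that each higher threshold enables. Using submodularity, each extra pick enabled at threshold $\threshold+1$ is evaluated on a base assignment containing everything picked at threshold $\threshold$, hence its marginal gain is bounded above by that of the corresponding analogue picked earlier in the $\threshold-1\to\threshold$ transition, where the base was strictly smaller. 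Summing these pointwise comparisons gives the target inequality $C_\threshold-C_{\threshold-1}\geq C_{\threshold+1}-C_\threshold$.

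The main obstacle is the second stage: after the divergence point, the greedy runs at the three thresholds can pick qualitatively different experts and tasks, so a clean one-to-one matching of extra picks is not immediate. Overcoming this requires carefully exploiting the independence of blocks in the partition matroid (each expert only contributes to divergence at the unique step where its cap binds) and a combinatorial exchange argument that swaps deviating picks between coupled runs. Combined with the non-increasing-marginal-gain property from the first stage, this matching yields the claimed concavity.
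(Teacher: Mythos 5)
Your first stage is fine and standard: within a single run of \greedy the sequence of selected marginal gains is non-increasing, by submodularity of $\cov(\cdot)$ together with the fact that the feasible set of (expert, task) pairs in the partition matroid only shrinks. The problem is that the second stage --- the only place where the inequality $C_\threshold-C_{\threshold-1}\geq C_{\threshold+1}-C_\threshold$ is actually derived --- is left as an assertion. You correctly identify that after the first step at which some expert's cap binds, the coupled runs at thresholds $\threshold-1$, $\threshold$, $\threshold+1$ can diverge, so that there is no immediate one-to-one correspondence between the ``extra'' picks of the $\threshold-1\to\threshold$ transition and those of the $\threshold\to\threshold+1$ transition; you then invoke a ``combinatorial exchange argument'' to repair this, but never construct it. That exchange argument is the entire content of the proposition. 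Moreover, even granting a pointwise domination of individual marginal gains, the two transitions need not enable the same \emph{number} of extra picks, so pointwise bounds do not automatically sum to the claimed inequality; your outline does not address this counting issue at all.

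The paper takes a different and much shorter route: it asserts, as a structural property of \thresholdgreedy, that the assignments produced at successive thresholds are nested, $\assignment_\threshold\subseteq\assignment_{\threshold+1}$ (the run at threshold $\threshold+1$ extends the run at threshold $\threshold$ rather than deviating from it), and reads the proposition off from this nesting together with submodularity. In other words, the paper postulates essentially the conclusion of the coupling you were trying to establish --- that the runs never truly diverge, only extend --- instead of obtaining it via an exchange argument. To complete your proof you would need either to prove that nesting property for lockstep runs under consistent tie-breaking (in which case your stage-one monotonicity of marginal gains along the single merged greedy sequence does most of the remaining work), or to actually carry out the matching step you invoke, including the bookkeeping on how many extra picks each transition contributes.
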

\smallskip

These propositions
rely on the fact that  in every iteration $\threshold$, {\thresholdgreedy} produces
assignment $\assignment_\threshold$, which has the property that $\assignment_\threshold\subseteq \assignment_{\threshold+1}$. That is, 
the $1$-entries in $\assignment_\threshold$ are a super\-set of the $1$-entries in 
$\assignment_{\threshold+1}$.

We are now ready to prove Theorem~\ref{thm:unimodality}.
\begin{proof}
	Let us assume that there is a threshold $\threshold^\ast$ such that $\objective_{\threshold^\ast}\geq \objective_{\threshold^\ast-1}$ and $\objective_{\threshold^\ast}\geq \objective_{\threshold^\ast+1}$.
	Since $\objective_{\threshold^\ast}\geq \objective_{\threshold^\ast-1}$ , we have
	\begin{align}\label{align:cond1}
		C_{\threshold^\ast} - \threshold^\ast & \geq  C_{\threshold^\ast-1} - (\threshold^\ast-1)\nonumber\\
		(C_{\threshold^\ast} - C_{\threshold^\ast-1}) & \geq 1. 
	\end{align}
	Using Inequality~\eqref{align:cond1} and Proposition~\ref{prop:submodularity}, we have 
	\[
	C_1-C_0\geq C_2-C_1\geq\ldots\geq C_{\threshold^\ast}-C_{\threshold^\ast-1}\geq 1.
	\]
	Thus, for every  $\threshold\leq \threshold^\ast$ it holds that
	\begin{align*}
		C_\threshold-C_{\threshold-1} & \geq 1\\
		C_\threshold -\threshold &\geq C_{\threshold-1}-(\threshold-1)\\
		F_\threshold & \geq F_{\threshold-1}.
	\end{align*}
	The proof is symmetric for the values of $\threshold>\threshold^\ast$. That is, 
	since $\objective_{\threshold^\ast}\geq \objective_{\threshold^\ast+1}$ , we have
	\begin{align}\label{align:cond2}
		C_{\threshold^\ast} - \threshold^\ast & \geq  C_{\threshold^\ast+1} - (\threshold^\ast+1)\nonumber\\
		(C_{\threshold^\ast+1} - C_{\threshold^\ast}) & \leq 1. 
	\end{align}
	Using Inequality~\eqref{align:cond2} and Proposition~\ref{prop:submodularity}, we have 
	\[
	C_\numtasks-C_{\numtasks-1}\leq C_{\numtasks-1}-C_{\numtasks-2}\leq\ldots\leq C_{\threshold^\ast+1}-C_{\threshold^\ast}\leq 1.
	\]
	Thus, for every  $\threshold> \threshold^\ast$ it holds that
	\begin{align*}
		C_{\threshold+1}-C_{\threshold} & \leq 1\\
		C_{\threshold+1} -(\threshold-1) &\leq C_{\threshold}-\threshold\\
		F_{\threshold+1} & \leq F_{\threshold}.
	\end{align*}
\end{proof}

We will call the value of $\tau$ for which $\objective(\cdot)$ gets maximized in the iterations of the {\thresholdgreedy} algorithm the \emph{best-greedy workload} 
and the corresponding value of the objective the \emph{best-greedy objective}.

\spara{Improving on linear search over workload values:} 
The unimodality of the objective function as computed by {\thresholdgreedy} for the different values of $\threshold$, 
clearly allows us to try all possible values of $\threshold$ starting from $1$ until the value of $\objective_\threshold$
stops increasing.  This is a \emph{linear search} over the 
different thresholds. We speedup this linear search by \emph{combining an exponential with a linear search}.
That is, we search over an exponentially increasing range of values of $\threshold = 2^i$, for $i \geq 0$; once the objective function decreases for some $i$, we then perform a linear search over the range of workload values, $\threshold \in [2^{i-1}, 2^{i}]$. In practice we observe that this technique significantly improves over the simple linear search.

Note that the unimodality of the objective function as computed by {\thresholdgreedy} for the different values of $\threshold$, would suggest a binary search over the values of $\threshold$. This type of search does not work well in practice because the running time of every iteration of {\thresholdgreedy} increases with the value of $\threshold$ and the binary search requires trying (at least some) large values of $\threshold$.  Thus in our experiments, we only use the combination of exponential and linear search we described above.

\subsection{Tuning Coverage vs.\ Workload Importance}
\label{sec:discussion}

One must choose an appropriate value of the balancing coefficient, $\lambda$ for each application, such that it tunes the relative importance of task coverage and expert workload as desired. 
In practice, we achieve this by 
examining different values of $\lambda$ and then picking the one that gives the most intuitive
trade-off between the coverage and the load of the corresponding solutions.  
There are two naive ways of implementing such a search process:   The first is to run  {\thresholdgreedy}
(with all the speedup ideas we proposed in Section~\ref{sec:speedups}) for the different values of 
$\lambda$. The second is to run {\thresholdgreedy} \emph{without} the early termination technique we discussed
in Section~\ref{sec:speedups} and for $\lambda=1$. This would mean that we would have to go over all possible values
of $\threshold$, and for each threshold $\threshold$ store independently the value of 
the coverage $C_\threshold$ for this threshold; then make a pass over all these values 
and weigh them appropriately with different $\lambda$s.  The first solution requires running
{\thresholdgreedy} as many times as the different $\lambda$s.  The second solution
requires running {\thresholdgreedy} once, but for \emph{all} possible values of threshold $\threshold=m$.  Both these solutions are infeasible in practice even for datasets of moderate size. However, we make a key observation in Proposition~\ref{prop:lambdas}, that enables us to efficiently search for an appropriate value for $\lambda$.

\smallskip
\begin{proposition}\label{prop:lambdas}
	Assume that $\lambda_1> \lambda_2$ and let 
	the best-greedy objectives achieved for those values be $\objective^{\lambda_1}_{\threshold_1}$ and 
	$\objective^{\lambda_2}_{\threshold_2}$, respectively.  
	Then, for the corresponding best-greedy workloads
	we have that $\threshold_1\geq \threshold_2$.
\end{proposition}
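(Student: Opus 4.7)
The plan is to reduce the claim to a first-difference analysis of the sequence produced by the outer loop of \thresholdgreedy, exploiting the fact that the inner greedy step does not depend on $\lambda$.

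First, I would observe that the {\greedy} subroutine invoked in Line~\ref{ln:coverage} of Algorithm~\ref{algo:thresholdgreedy} receives only $\experts_\threshold$ and $\tasks$ and optimizes the pure coverage function $\cov(\cdot)$; it never sees $\lambda$. Consequently, the coverage sequence $C_\threshold = \cov(\assignment_\threshold)$, for $\threshold = 1,\ldots,m$, is \emph{identical} for the runs at $\lambda_1$ and $\lambda_2$, and the only thing that changes with $\lambda$ is the weighting in the objective $\objective^\lambda_\threshold = \lambda C_\threshold - \threshold$ used in Line~\ref{ln:objective}. This is the key reduction.

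Next, I would reparameterize through the marginal gains $\Delta_\threshold = C_\threshold - C_{\threshold-1}$ (with $C_0 = 0$). Proposition~\ref{prop:submodularity} gives the monotone ordering $\Delta_1 \geq \Delta_2 \geq \cdots \geq \Delta_m \geq 0$. A routine first-difference computation then yields
$$\objective^\lambda_\threshold - \objective^\lambda_{\threshold-1} \;=\; \lambda\,\Delta_\threshold - 1,$$
which is non-increasing in $\threshold$ (by the ordering on $\Delta_\threshold$) and non-decreasing in $\lambda$ (trivially). Combining this with the unimodality of $\{\objective^\lambda_\threshold\}_\threshold$ from Theorem~\ref{thm:unimodality}, I would characterize the best-greedy workload $\threshold(\lambda)$ as the largest $\threshold \in \{0,1,\ldots,m\}$ for which $\lambda \Delta_\threshold \geq 1$ (and $\threshold(\lambda) = 0$ if no such $\threshold$ exists).

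With this characterization in hand, the conclusion follows in one line: if $\lambda_1 > \lambda_2$, then $\lambda_1 \Delta_\threshold \geq \lambda_2 \Delta_\threshold$ for every $\threshold$, so
$$\{\threshold : \lambda_2 \Delta_\threshold \geq 1\} \;\subseteq\; \{\threshold : \lambda_1 \Delta_\threshold \geq 1\},$$
and taking the maximum of each side gives $\threshold_2 \leq \threshold_1$. The one delicate point — not really an obstacle — is handling ties when the unimodal sequence has a plateau of maximizers; this is resolved by fixing a convention (say, the largest maximizer) in the definition of the best-greedy workload, and the dual condition $\lambda \Delta_{\threshold+1} \leq 1$ handles the ``smallest maximizer'' convention symmetrically.
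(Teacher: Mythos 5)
Your proof is correct, but it takes a genuinely different route from the paper's. The paper argues by contradiction with a pure exchange argument: assuming $\threshold_1<\threshold_2$, it writes down the two optimality inequalities $\objective^{\lambda_2}_{\threshold_2}\geq\objective^{\lambda_2}_{\threshold_1}$ and $\objective^{\lambda_1}_{\threshold_1}\geq\objective^{\lambda_1}_{\threshold_2}$, combines them into
$\lambda_1\,(C_{\threshold_2}-C_{\threshold_1})\leq\threshold_2-\threshold_1\leq\lambda_2\,(C_{\threshold_2}-C_{\threshold_1})$,
and derives $\lambda_1\leq\lambda_2$ using only the monotonicity of $C_\threshold$ (Proposition~\ref{prop:monotonicity}); submodularity is never invoked. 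You instead build an explicit characterization of the maximizer: since {\greedy} is $\lambda$-oblivious the sequence $C_\threshold$ is shared across runs, the first differences $\lambda\Delta_\threshold-1$ are non-increasing in $\threshold$ by Proposition~\ref{prop:submodularity}, and the best-greedy workload is the largest $\threshold$ with $\lambda\Delta_\threshold\geq 1$, which is visibly monotone in $\lambda$. Your approach buys more --- a closed-form $\threshold(\lambda)=\max\{\threshold:\lambda\Delta_\threshold\geq 1\}$ and the entire breakpoint structure of $\lambda\mapsto\threshold(\lambda)$, which is directly useful for the efficient $\lambda$-search the paper describes --- and it yields unimodality of $\{\objective^\lambda_\threshold\}$ unconditionally as a byproduct. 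The price is that it leans on Proposition~\ref{prop:submodularity} (and hence on the nestedness $\assignment_\threshold\subseteq\assignment_{\threshold+1}$ of the greedy assignments) and on fixing a tie-breaking convention for the argmax, whereas the paper's argument needs neither: it goes through for any choice of maximizers and uses only monotonicity. Both are valid; yours is the ``monotone comparative statics via discrete concavity'' version of the same single-crossing phenomenon.
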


\begin{proof}
	Since $\lambda_1 > \lambda_2$, there exists an $\alpha>1$ such that 
	$\lambda_1 = \alpha \lambda_2$. 
	Our proof will be by contradiction: suppose that $\threshold_1<\threshold_2$.
	By Proposition~\ref{prop:monotonicity} we have that  $C_{\threshold_2} \geq C_{\threshold_1}$. 
	Since~$\threshold_2$ corresponds to the best-greedy workload for $\objective^{\lambda_2}$
	we have $\objective^{\lambda_2}_{\threshold_2}\geq\objective^{\lambda_2}_{\threshold_1}$
	and~thus:
	\begin{align*}
		\lambda_2 C_{\threshold_2} - \threshold_2 &\geq \lambda_2 C_{\threshold_1} - \threshold_1\\
		\lambda_2 (C_{\threshold_2} - C_{\threshold_1}) &\geq (\threshold_2-\threshold_1).
	\end{align*}
	Since $\threshold_1$ corresponds to the best-greedy workload for $\objective^{\lambda_1}$ we have  that $\objective^{\lambda_1}_{\threshold_2}\leq \objective^{\lambda_1}_{\threshold_1} $
	\begin{align*}
		\lambda_1 C_{\threshold_2} - \threshold_2 &\leq \lambda_1 C_{\threshold_1} - \threshold_1\\
		\lambda_1 (C_{\threshold_2} - C_{\threshold_1}) &\leq (\threshold_2-\threshold_1) \\
		\alpha \lambda_2 (C_{\threshold_2} - C_{\threshold_1}) &\leq(\threshold_2-\threshold_1)
	\end{align*}
	Combining these two results we get
	\begin{align*}
		\alpha \lambda_2 (C_{\threshold_2} - C_{\threshold_1}) &\leq(\threshold_2-\threshold_1) \leq \lambda_2 (C_{\threshold_2} - C_{\threshold_1}),
	\end{align*}
	which implies that $\alpha\leq 1$, which is a contradiction.
\end{proof}
\smallskip

\spara{An efficient search on the values of $\lambda$:} Using Proposition~\ref{prop:lambdas} we can explore the solutions of {\thresholdgreedy} for different values of $\lambda\in\Lambda\subseteq\mathbb{R}_{+}$ efficiently, by
running {\thresholdgreedy} only once and -- at the same time -- exploiting the early termination trick we discussed in Section~\ref{sec:speedups}.

We first run {\thresholdgreedy} with a large value of $\lambda$, and determine the \emph{best-greedy workload} and the corresponding value of the \emph{best-greedy objective}. We then compute the \textit{best-greedy} values for smaller values of $\lambda$, 
and plot the corresponding values of $\cov(\assignment)$ and $\maxload(\assignment)$ for each $\lambda$ value. 
Graphically, the best $\lambda$ value for each dataset corresponds to the $\lambda$ value observed at the \textit{elbow} of the plot, where further increase of $\lambda$ does not result in a significant increase in coverage. 
Thus, a suitable value of $\lambda$ can be identified by visual inspection, such that the best-greedy workload and best-greedy objective values yield a high value for the overall coverage, $\cov(\assignment)$ while simultaneously giving a reasonably low value for the  $\maxload(\assignment)$.
Note that the $\lambda$ value can be adjusted as needed, as per the requirements of the application domain. 

\section{Algorithms for {\nbalance}}\label{sec:algos-nbalance}
In this section, we introduce {\networkbalance}, our algorithm  for solving the {\nbalance} problem. 
The pseudo-code of the algorithm is shown in Algorithm~\ref{algo:networkbalance}.
Conceptually, the algorithm is similar to {\thresholdgreedy}. 
More specifically, {\networkbalance} considers all values of load $\tau = 1,\ldots,\numtasks$. 
For each value~$\tau$, the algorithm forms candidate teams ({\formcandidateteams})
that satisfy the radius constraint and then it assigns teams to tasks ({\assignteams}). 
This assignment may cause some experts to violate the load  constraint imposed by $\tau$, 
thus, an additional pruning step ({\teampruning}) is needed to ensure that the load constraint is not violated. 
Finally, {\networkbalance} returns the assignment corresponding to the best objective found across the different workload values $\tau$.

In the rest of the section, we describe each one of the steps of {\networkbalance} in detail and discuss all computational issues that arise.

\begin{algorithm}[t]
    \caption{The {\networkbalance} algorithm.}\label{algo:networkbalance}
    \begin{algorithmic}[1]
        \Require  Set of $m$ tasks $\tasks$, $n$ experts $\experts$, graph $G = (\experts, \mathbf{E})$ with coordination costs, radius constraint $r$, and $\lambda$.
        \Ensure An assignment $\assignment$ of experts to tasks.
        \State $\assignment \leftarrow \emptyset, \objective_{\max} = 0$
        \State $\mathcal{T} \leftarrow \formcandidateteams(\experts, G, r)$ \label{ln:candidate-teams}
        % \hfill \% Form candidate teams
        \For{$\threshold = 1,...,m$}  \label{ln:forloop-threshold}
        % \State Create instance of {\ntbalance} with $m$ tasks $\tasks$, candidate teams $\mathcal{T}_{\threshold}$, and load $\threshold$ \label{ln:ntbalance-instance}
        \State $\assignment_\threshold \leftarrow \assignteams(\tasks, \mathcal{T}_{\threshold}, \threshold)$ \label{ln:assign-teams}
        % \hfill \% Assign teams to tasks 
        \State $\assignment'_{\threshold} \leftarrow \teampruning(\assignment_{\threshold}, \threshold)$  \label{ln:prune-teams}
        % \hfill \% Prune assignment such that $\maxload(\assignment_{\threshold'}) \leq \threshold$
        \State  $\objective_{\threshold} \leftarrow \lambda C(\assignment'_{\threshold}) - \threshold$
        % \hfill \% Compute objective 
        \If{$\objective_{\threshold} \geq \objective_{max}$}
        \State $\objective_{max} = \objective_{\threshold}$
        \State $\assignment \leftarrow \assignment'_{\threshold}$
        \EndIf
        \EndFor\\
        \Return $\assignment$
    \end{algorithmic}
\end{algorithm}

\subsection{Forming Candidate Teams}
First, we form a set of candidate teams $\mathcal{T}$ such that the each team in $\mathcal{T}$ has a radius that satisfies the specified radius constraint $r$;
this is done in 
Line~\ref{ln:candidate-teams} of Algorithm~\ref{algo:networkbalance}.
We pursue two alternatives for forming candidate teams, which we call 
{\formcandidateteamsR} and {\formcandidateteamsallR} and which we describe below.

\spara{{\formcandidateteamsR}}:
Given a set of $\numexperts$ experts $\experts$, a graph $G = (\experts, E)$ with their coordination costs, 
and a radius constraint $r$, $\formcandidateteamsR(\experts, G, r)$ forms $\numexperts$ teams, 
one team $\team_i$ for each expert $\expert_i$.
Team $T_i$ consists of expert $\expert_i$ 
and all other experts $\expert_j$ with $d(\expert_i,\expert_j) \leq r$. 
That is, $\team_i = \expert_i\cup\{\expert_j\mid d(\expert_i,\expert_j) \leq r\}$.
This method runs in time $\bigO(\numexperts^2)$ and creates $\numexperts$ candidate teams.

\spara{{\formcandidateteamsallR}}: Here, 
we consider several different radii $0 < r' \leq r$; for each~$r'$ we invoke {\formcandidateteamsR} and form $\numexperts$ teams corresponding to radius constraint~$r'$.
In practice, we form teams of varying sizes by splitting the interval $(0,r]$ into $k$ parts of size $r/k$, and choosing $k$ different values for $r' \in \{r/k, 2r/k,\ldots , r\}$. {\formcandidateteamsallR} returns $k\numexperts$ candidate teams, and its running time is $\bigO(k \numexperts^2)$.

\subsection{Assigning Teams to Tasks}
\label{sec:assignteamstotasks}
Before we describe our 
general algorithm for assigning teams to tasks, we consider a special case, where every team 
consists of one expert and the task is to assign experts to tasks.
In this case, the team-assignment problem can be written as 
a linear program as follows: 
let $x_{ij}=1$ if expert $i$ is assigned to task $j$, 
and let $C_{ij}$ denote the fraction of skills required by task $\task_j$ covered by expert $i$.
The linear program (LP) is the following:
\begin{align*}
	\text{maximize}~~  & \sum_{i=1}^n \sum_{j=1}^m C_{ij}x_{ij},\\
	\text{such that}~~ &\sum_{i=1}^n x_{ij} \leq 1, \quad \text{ for all }  1 \leq j \leq m ,\\
	&\sum_{j=1}^m x_{ij} \leq \threshold \quad \text{ for all } 1 \leq i \leq n , \text{ and}\\
    & 0\leq x_{ij}\leq 1.
\end{align*}
Note that due to the \emph{unimodular} nature of the constraints the above LP only has integer solutions, 
i.e., in the optimal solution it is $x_{ij}\in\{0,1\}$, for all $(i,j)$~\citep{papadimitriou1998combinatorial}.

Therefore, when teams consist of one expert, the team-assignment problem can be solved optimally in polynomial time.  
Additionally, the above LP works in cases when there is a pre-specified set of teams
$\mathcal{T} = \{\team_1,\ldots , \team_\ell \}$. 
% In this case, teams will play the role of individual experts. 
The solution obtained by the LP in this case guarantees that 
each task is assigned to at most one team, 
and each team is assigned to at most $\threshold$ tasks. 
However, since the teams may have arbitrary overlap among their experts, 
there is no guarantee for the number of tasks assigned to a single expert. 
We consider the solution of the above LP for teams, even if it
violates the per-expert load constraint. 
To ensure compatibility with the load constraints, 
we then prune the teams so that each expert has load at most $\threshold$ (see next section).

In practice, we solve the {\assignteams} task shown in Algorithm~\ref{algo:networkbalance}
either by solving the LP we described above using a readily-available solver like Gurobi~\citep{gurobi}, 
or by a greedy algorithm that greedily matches a team to a task that maximizes the objective and does not violate any of the constraints. 
Such a greedy assignment is a $2$-approximation algorithm to the problem described by the LP~\citep{khan2016efficient}
and it runs in time~$\bigO(\numtasks^2)$.
We note though that the Gurobi solver works extremely well in practice.

Clearly, given an assignment of teams to tasks, we can generate a corresponding assignment $\assignment$ of experts to tasks as follows: for each task a team is assigned to, all experts on that team have a 1-entry in the corresponding column in $\assignment$. 

\subsection{Pruning Teams}
As the assignment $\assignment$ returned by {\assignteams} 
may violate the load constraint $\tau$ for individual experts, 
we  prune the assignment by removing experts from teams in order to guarantee that the load of each individual is $\tau$ or less. 
For this, 
we invoke the following {\teampruning} step in Line~\ref{ln:prune-teams} of Algorithm~\ref{algo:networkbalance}.

The pseudocode for the {\teampruning} routine is presented in Algorithm~\ref{algo:teampruning}. 
The pruning algorithm takes as input an assignment $\assignment$ of experts to tasks, and the load constraint $\threshold$. It then removes (or un-assigns) experts from tasks until all experts satisfy the workload constraint $\threshold$.

In order to explain {\teampruning,}
we introduce the idea of \emph{coverage loss}, which we define to be the amount of coverage of a task that is lost when an expert is removed from the team assigned to that task.
First, we obtain the set of all overloaded experts that need to be pruned. Then for each task that the expert is assigned to, we compute the loss in coverage  by removing the expert from that team. We add these coverage-loss values to a priority queue. Subsequently, we prune experts from tasks in order of increasing coverage loss from the priority queue, until all experts satisfy the workload constraint, $\tau$. 
Every time we remove an expert from a task, we recompute the coverage losses of all other experts that were assigned to that task.

The worst-case running time of  {\teampruning} is $\bigO(\numexperts^2 m)$; in practice, this is significantly faster as it is not usually necessary to prune the entire priority queue.

\begin{algorithm}[t]
    \caption{The {\teampruning} algorithm}\label{algo:teampruning}
    \hspace*{\algorithmicindent} \textbf{Input:} Assignment $\assignment$ and workload constraint $\threshold$ \\
    \hspace*{\algorithmicindent} \textbf{Output:} Pruned Assignment $A'$
    \begin{algorithmic}[1]
        \State $\assignment' \leftarrow \assignment$
        \State $\experts_{\threshold} \leftarrow$ Set of experts in $\assignment$ with workload greater than $\threshold$ 
        \State Initialize a priority queue to store coverage losses for expert-task pairs
        \For{each expert $\expert$ in $\experts_{\threshold}$}
        % \State Compute the number of tasks $\expert$ is to be pruned from to satisfy constraint $L$ 
        \For{each team $\team$ expert $\expert$ is on}
        \For{each task $\task$ team $\team$ is assigned to}
        \State Compute loss in coverage of task $\task$ by removing expert $\expert$ from team $\team$.
        \State Insert expert-task coverage loss into priority queue.
        \EndFor
        \EndFor
        \EndFor
        \While{Any expert $\expert$ in $\experts_{\threshold}$ violates workload constraint $\threshold$}
        \State  $\assignment' \leftarrow$ Prune expert-task pair from $\assignment$ using priority queue. 
        \State Recompute coverage losses of experts on pruned team.
        \EndWhile\\
        \Return{$\assignment'$}
    \end{algorithmic}
\end{algorithm}

\subsection{Approximation}
Although {\networkbalance}  performs well in practice, we have no formal approximation guarantees for its performance.  Part of the reason for this is that the sub\-problem of assigning a set of pre-formed teams (i.e., the ones formed by  {\formcandidateteams}) to tasks such that the coverage is maximized, while the load of each individual expert is below a threshold $\tau$ is an \NP-hard problem itself. We prove this in  Appendix~\ref{sec:appendix-proof}. 

This observation does not mean that {\nbalance} cannot be approximated; it simply means that {\networkbalance} as it is designed in Algorithm~\ref{algo:networkbalance} cannot have provable approximation bounds.

\subsection{Running Time and Speedups}\label{sec:speedups2}
In this section, we discuss the running time of {\networkbalance} and propose some practical speedups. Note that a naive implementation of the {\networkbalance} algorithm would have a running time $\bigO(m^2 n^2)$. Since the {\networkbalance} algorithm computes the same objective as {\thresholdgreedy}, we can exploit some of the speedup techniques from Sec.~\ref{sec:speedups}.

\spara{Early Termination of {\networkbalance}:} We make use of Theorem~\ref{thm:unimodality}, and do not consider all $\numtasks$ values of $\threshold$.
The value of the objective function $\objective_{\threshold}$ as computed by 
{\networkbalance} for the different values of $\threshold$ is a unimodal function, and once a maximum is found for some value of $\threshold$, the algorithm can safely terminate as the value of the objective will not improve for larger values of $\threshold$.

\spara{Improving on Linear search over workload values:} As in {\thresholdgreedy}, in Line~\ref{ln:forloop-threshold} of Algorithm~\ref{algo:networkbalance} we search over an exponentially increasing range of values of $\threshold = 2^i$, for $i \geq 0$; once the objective function decreases for some $i$, we then perform a linear search in the range  $\threshold \in [2^{i-1}, 2^{i}]$. In practice, this technique significantly improves the performance of the method, over the simple linear search.

\subsection{Instantiating the {\networkbalance} Algorithm}
\label{sec:nbalance-algo-names}
We specify here the naming convention we use for different variants of the {\networkbalance} algorithm, depending on how we choose to implement the  subroutines: {\formcandidateteams} (i.e., {\formcandidateteams}\texttt{-R} or {\formcandidateteams}\texttt{-All}) and {\assignteams} (i.e., {\assignteams}\texttt{-LP} or {\assignteams}\texttt{-Greedy}), we call the corresponding versions of {\networkbalance}: {\networkbalance}\texttt{-R-LP},
{\networkbalance}\texttt{-R-Greedy},
{\networkbalance}\texttt{-All-LP} and 
{\networkbalance}\texttt{-All-Greedy} respectively;
{\teampruning} is always invoked.

\subsection{Tuning Coverage vs.\ Workload Importance}
\label{sec:nbalance-tuning}

Similar to the technique used for the {\thresholdgreedy} algorithm in Section~\ref{sec:discussion}, depending on the application, we choose an appropriate value of the balancing coefficient, $\lambda$ such that it balances the relative importance of task coverage and expert workload. 
We call the value of $\tau$ for which $\objective(\cdot)$ gets maximized in the iterations of the {\networkbalance} algorithm the \emph{best-network workload} 
and the corresponding value of the objective the \emph{best-network objective}.
We can then make use of Proposition~\ref{prop:lambdas}, but modified with the best-network workload and best-network objective, and follow the technique in Section~\ref{sec:discussion} to graphically select an appropriate $\lambda$ value that gives the most desirable
trade-off between the coverage and the workload.

\section{Experiments}\label{sec:experiments}
We experimentally evaluate 
our algorithms for both {\bbalance} and  {\nbalance} 
using real-world datasets. 
We compare our algorithms with other heuristics, inspired by related work.
In the end of the section, we also compare 
the solutions obtained by {\thresholdgreedy} and {\networkbalance}, aiming to provide additional 
insight on the differences and the similarities of the two methods.

Our implementation is in Python and available online.\footnote{\url{https://github.com/kvombatkere/Team-Formation-Code}}
For all our experiments
we use single-process implementation on a 64-bit MacBookPro with an AppleM1Pro CPU and 16GB RAM. 

\subsection{Experiments for {\bbalance}}
\label{sec:exp}
In this section we first introduce our datasets and baselines, and then discuss our experiments for the {\bbalance} problem. We show how we choose the balancing coefficient $\lambda$ for each dataset, and then evaluate the performance of {\thresholdgreedy} and baselines in terms of the objective, expert load and running time.

\subsubsection{Datasets}
\label{sec:datasets1}
We evaluate our methods on several real-world datasets; 
some of these datasets have been used in past team-formation papers~\citep{anagnostopoulos10power,nikolakaki20finding,nikolakaki21efficient}.
A short description of the datasets follows, while 
their statistics are shown in Table~\ref{tab:summarystats}.

\smallskip
\noindent
\emph{IMDB:} The data is obtained from the International Movie Database.\footnote{https://www.imdb.com/interfaces/}
% The original dataset contains information about movies, TV shows and documentaries along with the associated actors, directors, movie year and movie genre(s). 
We simulate a team-formation setting where movie directors conduct auditions for movie actors: we assume that movie genres correspond to skills, movie directors to experts, and actors to tasks. 
The set of skills possessed by a director or actor is the union of genres of the movies they have participated in.
%As an illustrative example, the director \textit{Christopher Nolan} has the skills \textit{\{drama, action, history, biography, sci-fi, thriller\}} and the actor \textit{Emma Stone} has the skills \textit{\{crime, comedy, sci-fi, animation, romance, adventure\}}. 
In order to experiment with datasets of different sizes, we create 
three data instances by selecting all movies created since 2020, 2018 and 2015. 
From these movies we select the directors that have at least one actor in common with at least one other director, 
and then randomly sample 1000, 3000 and 4000 directors, to form the set of experts in the 3 datasets. 
Then we randomly sample 4000, 10000 and 12000 actors, to form the set of tasks.
We refer to these datasets as {\imdbone}, {\imdbtwo} and {\imdbthree}, respectively. 

\smallskip
\noindent
\emph{Bibsonomy:} 
This dataset comes from a social bookmark and publication sharing system with a large number of publications, each of which is written by a set of authors~\citep{benz2010social}. 
% The original dataset consists of a large number of publications, each of which is written by a set of authors.  
Each publication is associated with a set of \emph{tags}; we filter tags for stopwords and use the 1000 most common tags as skills.
We simulate a setting where certain prolific authors (experts) conduct interviews for other less prolific authors (tasks). 
% Thus, the prolific authors become the experts and the rest are the tasks. 
An author's skills are the union of the tags associated with their publications. 
Upon inspection of the distribution of skills among all authors we determine prolific authors to be those with at least 12 skills. 
We create three datasets by selecting all publications since 2020, 2015 and 2010. 
From these publications we select the prolific authors that have at least one paper in common with at least one other prolific author, and then randomly sample 500, 1500 and 2500 prolific authors to form the set of experts in the 3 datasets. Then we randomly sample 1000, 5000 and 9000 non-prolific authors, to form the set of tasks.
We refer to these datasets as {\bibsonomyone}, {\bibsonomytwo}, {\bibsonomythree}, respectively.

\smallskip
\noindent
\emph{Freelancer} and \emph{Guru:} These two datasets consist of random samples of real jobs that are posted by users online, and a random sample of real freelancers, in the \textit{Freelancer}\footnote{\url{freelancer.com}} and \textit{Guru} \footnote{\url{guru.com}} online labor marketplaces respectively.
The data consists of tasks that require certain discrete skills, and experts who possess discrete skills.
The Freelancer data we use consists of 993 jobs (i.e. tasks) that require skills and 1212 freelancers (i.e. experts) that have skills; we refer to this dataset as~{\freelancer}. Similarly, the Guru data we use consists of 3195 tasks that require skills and 6120 experts that have certain skills; we refer to this dataset as {\guru}.

\begin{table}
    \caption{Summary statistics of our datasets.}
    \label{tab:summarystats}
    {\small
        \begin{tabular}{lrrrrr}
            \toprule
            Dataset&Experts&Tasks&Skills&skills/ &skills/\\
            &             &         &        & expert&task\\
            \midrule
            {\imdbone}&1000 &4000 &24 &2.2 &2.0\\
            {\imdbtwo}&3000 &10000 &25 &2.4 &2.2\\
            {\imdbthree}&4000 &12000 &26 &2.8 &2.8\\
            {\bibsonomyone}&500 &1000 &957 &13.0 &4.8\\
            {\bibsonomytwo}&1500 &5000 &997 &13.6 &4.9\\
            {\bibsonomythree}&2500 &9000 &997 &13.6 &4.9\\
            {\freelancer}&1212&993&175&1.5&2.9\\
            {\guru}&6120&3195&1639&13.1&5.2\\
            \bottomrule
        \end{tabular}
    }
\end{table}

\subsubsection{Baselines}
Motivated by existing work, we use the following three algorithms as baselines:

\spara{{\lpsetcover}:} This algorithm is an application of the offline Linear Programming rounding (LP-rounding) algorithm discussed by \citet{anagnostopoulos10power}.  
Using their LP formulation, the goal is to obtain a fractional assignment of experts to tasks
such that every task is fully covered and the maximum load is minimized. 
Once a fractional assignment  is obtained 
(let $X_{ij}$ be the fractional assignment of expert $i$ to task $j$), 
a rounding scheme is provided that operates in logarithmic number of rounds; 
in each round we independently assign expert $i$ to task $j$ with probability $X_{ij}$. 
It can be shown that at the end of rounding each
task is fully covered with high probability and the load achieved is a logarithmic approximation to the optimal load.  
In our case, we proceed with the same LP, but in every iteration
of the rounding phase, we check the value of our objective and we only keep the solution that has the best value. 
%Also, for some of our datasets not all tasks can be fully covered (i.e., they require some skills that do not exist in the pool of experts). For those tasks we require that they achieve the maximum possible coverage. 
Our LP has $mn$ variables and $\bigO(mn)$ constraints. 
If $T$ is the running time for the LP then the overall running time of {\lpsetcover} is $\bigO(T+mn)$. 
For our experiments we use Gurobi~\citep{gurobi} and we observe that {\lpsetcover} 
is significantly slower than the other baselines.

%We adapt the ILP constraint formulation in their work to support partial task coverage, and solve optimally the corresponding linear programming relaxation. Let $X_{ij}$ be the fractional solution of the relaxed LP corresponding to expert $\expert_i$ and task $\task_j$. We then use the randomized rounding algorithm for set cover that they propose: perform $R$ rounds, and in each round expert $\expert_i$ is assigned task $\task_j$ with probability $X_{ij}$. At the end $\assignment(i,j) =1$ iff $\expert_i$ was assigned to $\task_j$ in at least one round. We output the solution $(\assignment)$ with the best objective at the end of all rounds.

\spara{{\taskgreedy}:} 
This algorithm is inspired by the previous work of \citet{nikolakaki20finding}. 
{\taskgreedy} iterates over all tasks sequentially and 
for each task it greedily assigns experts to maximize the task's coverage. 
To balance the maximum workload with the total task coverage successfully, 
we implement two heuristics. 
First we randomize the order in which experts are greedily assigned to tasks in each iteration. 
This ensures an even distribution of experts in a setting in which several experts might be equivalently good for a task. Second, we only assign experts if they yield a significant increase in the task coverage. We quantify this coverage amount by a hyperparameter, $\beta$, which we specifically grid search and optimize for each dataset. Excluding the grid search, the {\taskgreedy} algorithm has a running time of $\bigO(\numtasks\numexperts)$ since there are $\numexperts$ experts available for each of the $\numtasks$ tasks.

\spara{{\noupdategreedy}:} 
This algorithm is a simple modification  of {\thresholdgreedy}: 
for each expert--task pair $(i,j)$, we initialize the keys in the priority queue to 
$v(i,j) = \tilde{\cov}((i,j)\mid \assignment^0)$, 
where $\assignment^0$ is the assignment with all entries equal to $0$. 
We then use these initial marginal-gain values to iteratively add expert-task edges $(i,j)$ 
in decreasing order of their $v(i,j)$ values, without ever updating them.
%from the top of the priority queue to our solution.
In order to improve the performance
of {\noupdategreedy}, we only use an expert  if $v(i,j) > \beta$, where $\beta$ is a hyperparameter.
{\noupdategreedy} has a running time of $\bigO(\numtasks\numexperts \log(\numtasks\numexperts))$, since there are $\numtasks \numexperts$ total expert-task edges, and sorting these edges takes time $\bigO(\log(\numtasks\numexperts))$.

\iffalse
\spara{{\smartrandom}:} This baseline algorithm assigns each expert to several tasks in a semi-random manner. More specifically, it works as follows: we add an expert-task assignment to the solution if the task is not yet fully covered. For each expert, we limit the number of tasks that the expert can be assigned to using a hyper-parameter $\mu$. Finally, we only assign an expert to a task if they yield a significant increase in the task coverage. Similar to other baseline algorithms, we quantify this  coverage amount by a hyperparameter, $\beta$. We specifically grid search and optimize the hyperparameters $\mu$ and $\beta$ for each dataset. Excluding the grid search, The {\smartrandom} algorithm has a running time of $\bigO(\numtasks\numexperts)$.
\fi

\smallskip
In all cases, we perform a grid search over the values of all hyperparameters and
we report the best results for each algorithm and each dataset.

\begin{figure*}
	\centering
	\begin{subfigure}[b]{0.24\textwidth}
		\centering
		\includegraphics[width=\textwidth]{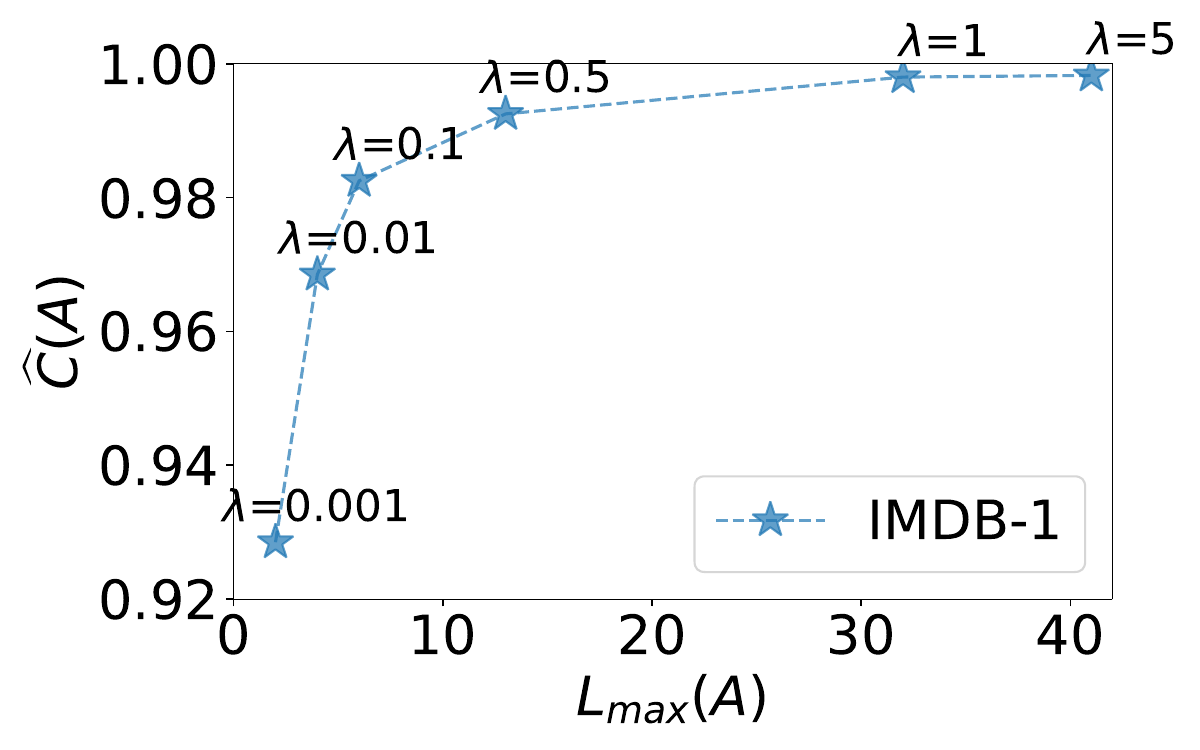}
	\end{subfigure}
	\hfill
	\begin{subfigure}[b]{0.24\textwidth}
		\centering
		\includegraphics[width=\textwidth]{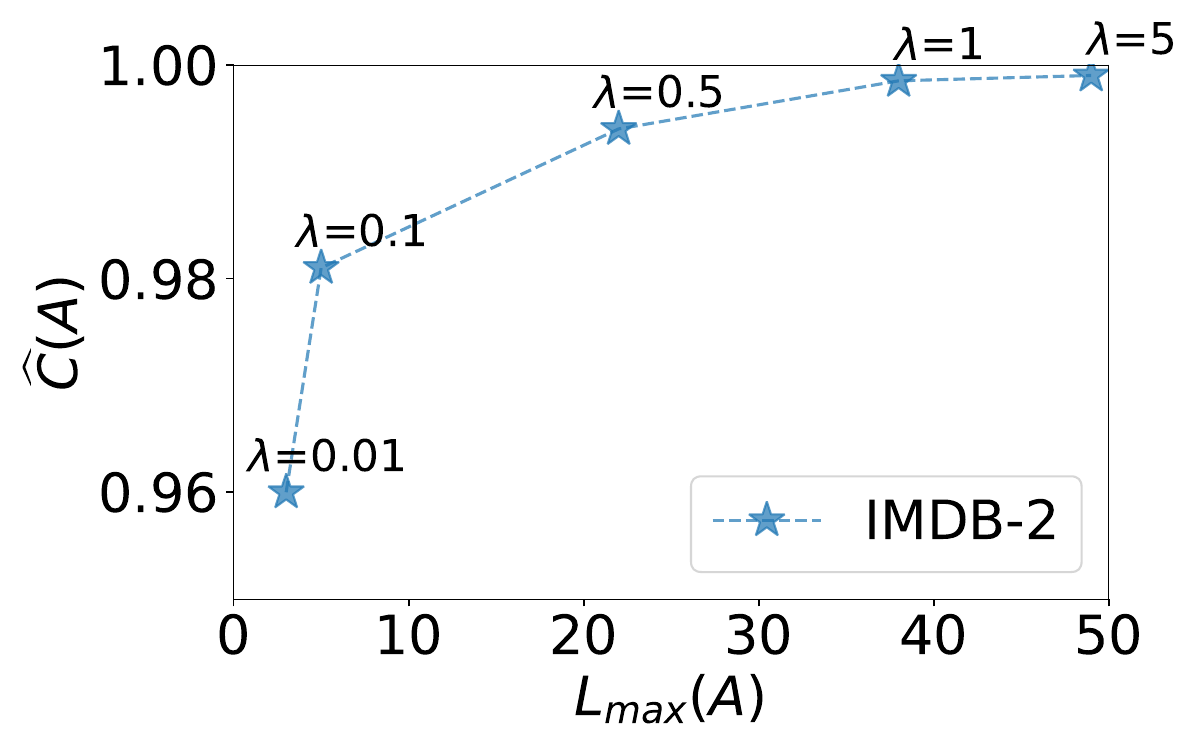}
	\end{subfigure}
	\hfill
	\begin{subfigure}[b]{0.24\textwidth}
		\centering
		\includegraphics[width=\textwidth]{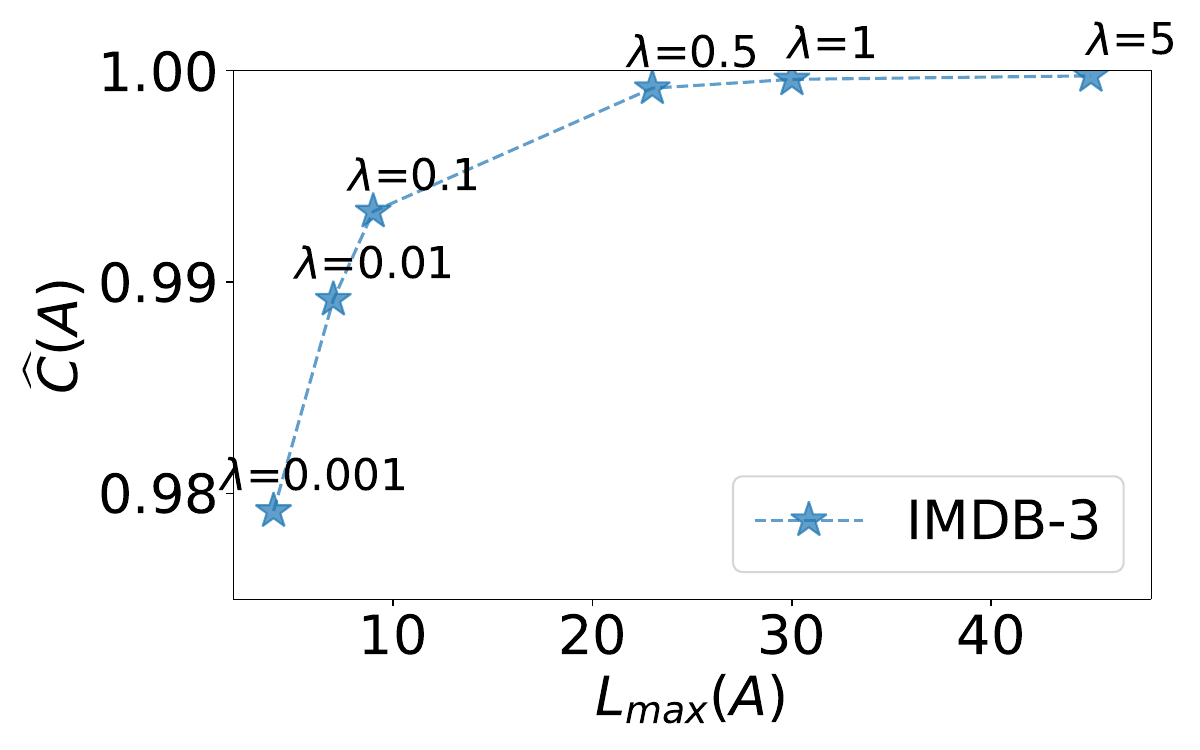}
	\end{subfigure}
	\hfill
	\begin{subfigure}[b]{0.24\textwidth}
		\centering
		\includegraphics[width=\textwidth]{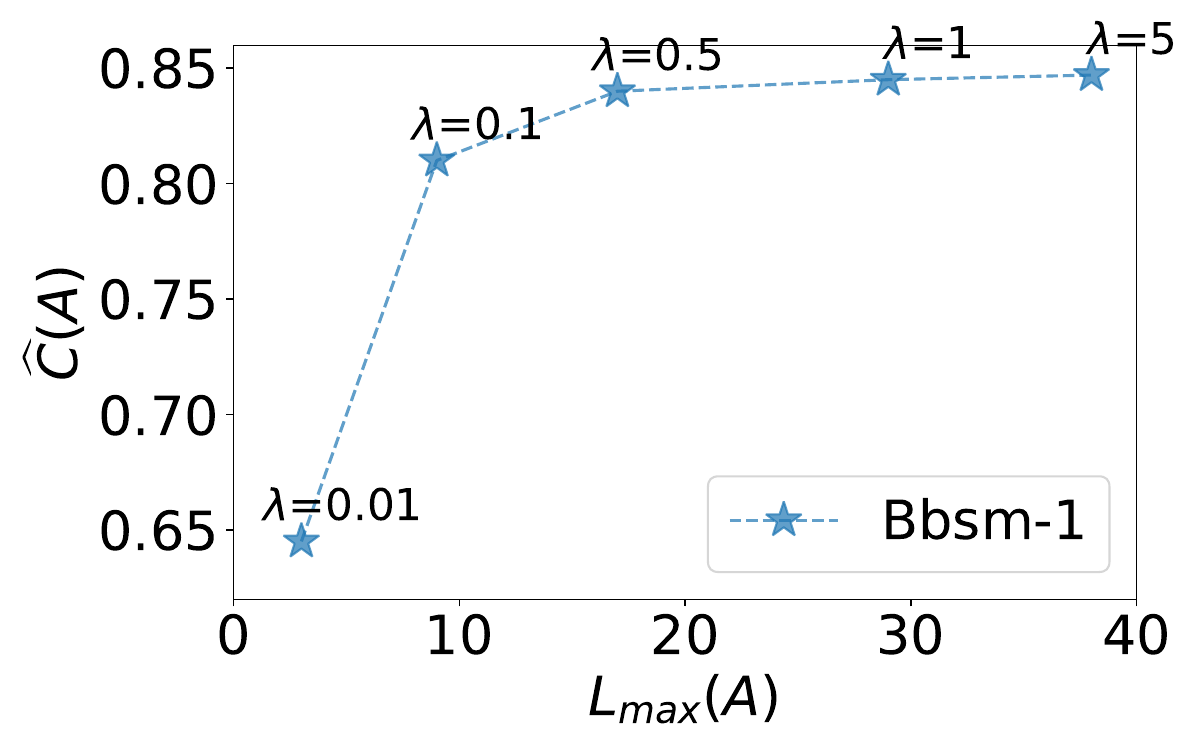}
	\end{subfigure}
	\begin{subfigure}[b]{0.24\textwidth}
		\centering
		\includegraphics[width=\textwidth]{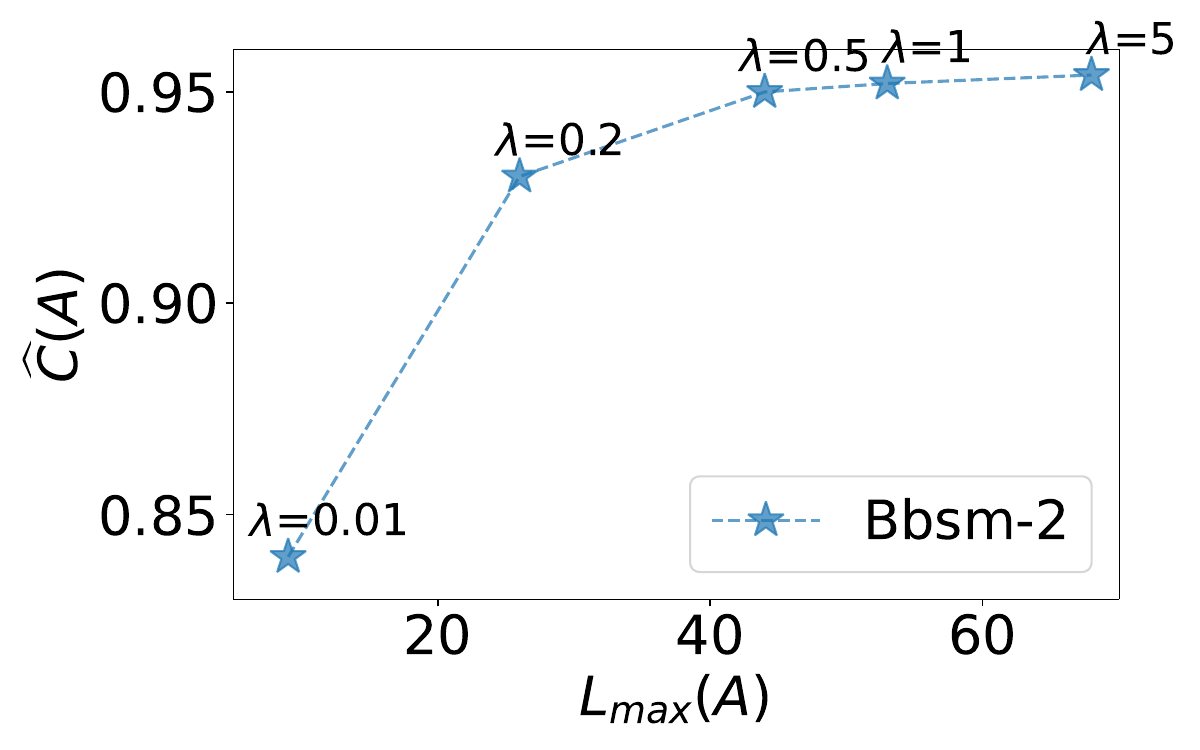}
	\end{subfigure}
	\hfill
	\begin{subfigure}[b]{0.24\textwidth}
		\centering
		\includegraphics[width=\textwidth]{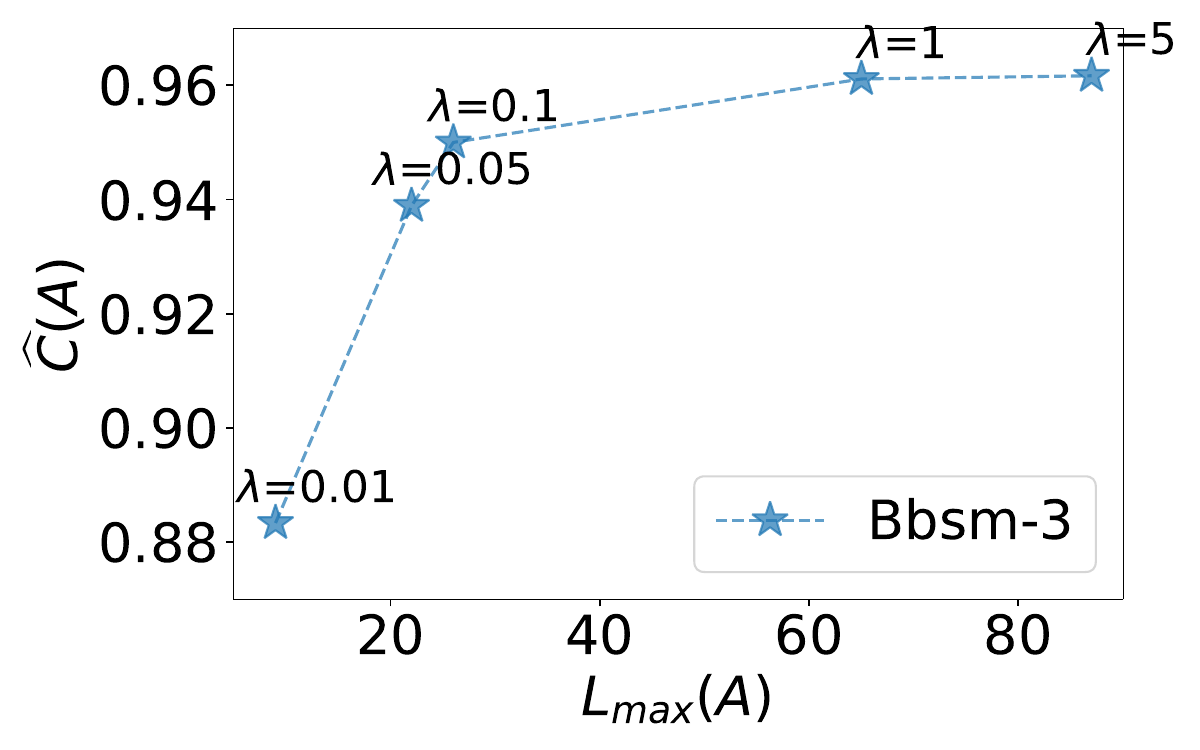}
	\end{subfigure}
	\hfill
	\begin{subfigure}[b]{0.24\textwidth}
		\centering
		\includegraphics[width=\textwidth]{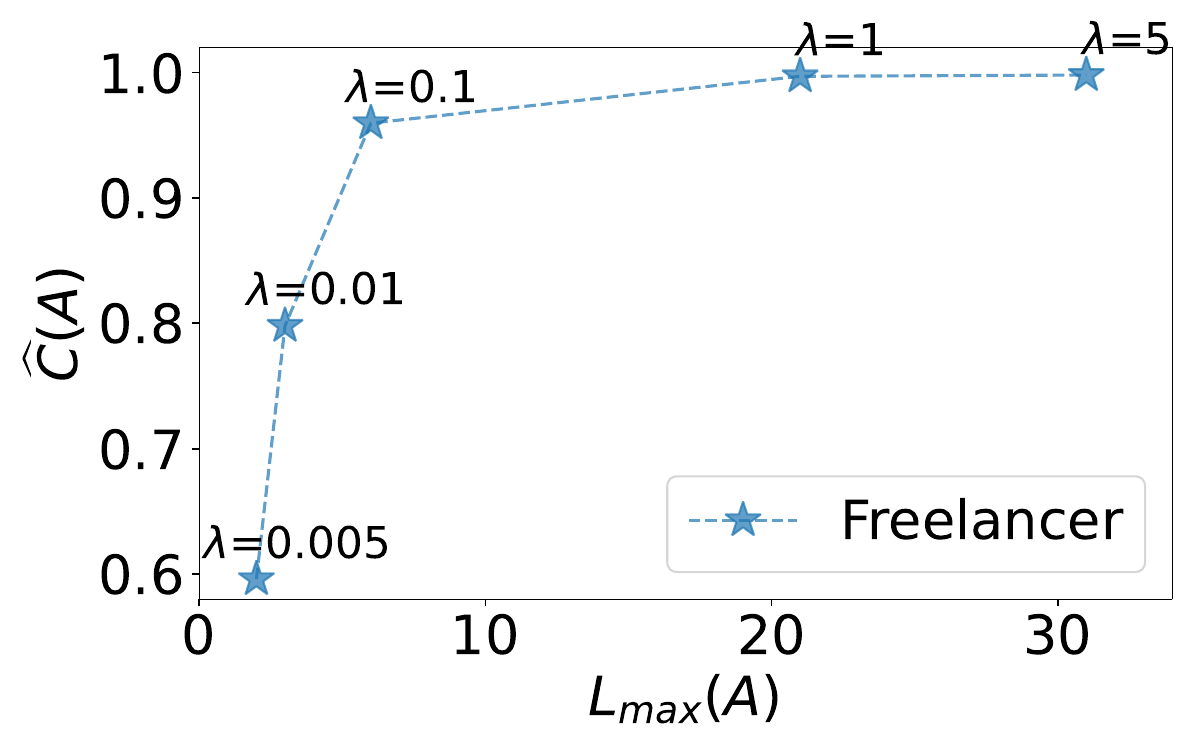}
	\end{subfigure}
	\hfill
	\begin{subfigure}[b]{0.24\textwidth}
		\centering
		\includegraphics[width=\textwidth]{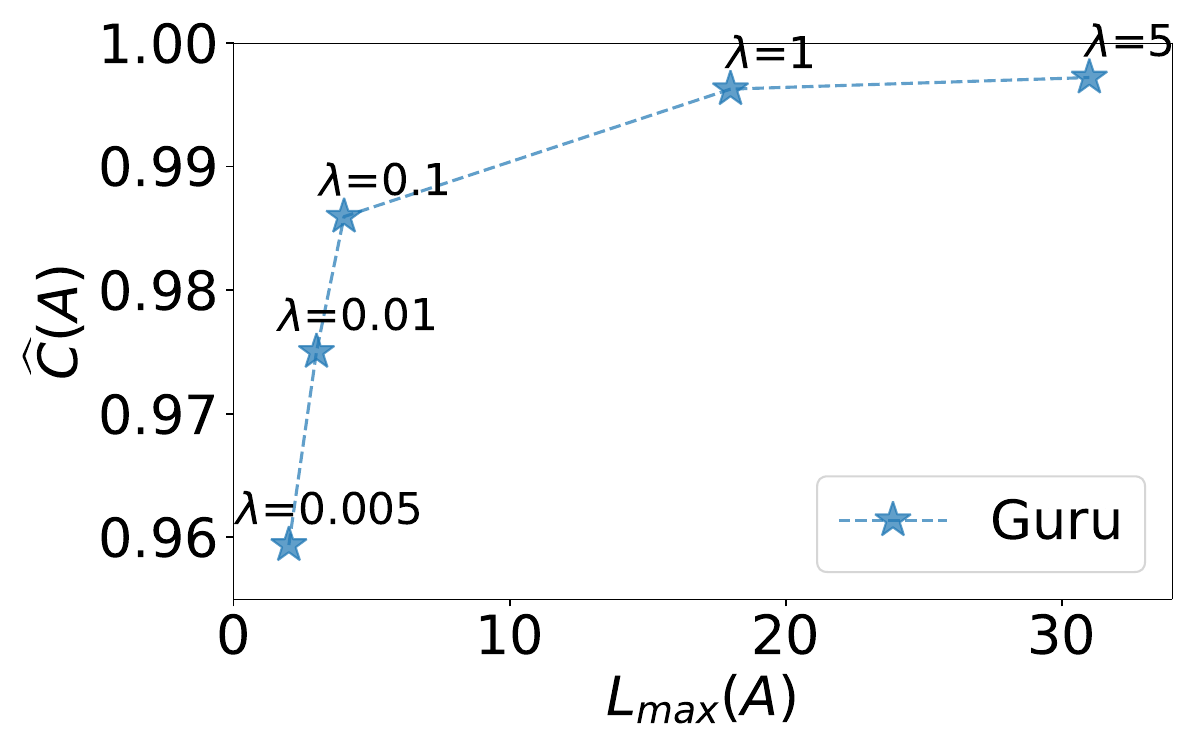}
	\end{subfigure}
	\caption{\label{fig:lambdaSearch}The {best-greedy} workload $\maxload(\assignment)$ value and 
	the coverage $\cov(\assignment)$ corresponding to 
	the best-greedy objective $\objective^{\lambda}(\assignment)$ computed by {\thresholdgreedy}. 
	Each subplot shows a range of values of the balancing coefficient $\lambda$ for each dataset.
	}
\end{figure*}

\subsubsection{Tuning Coverage and Workload Importance}
\label{sec:tuning}

Before showing our experimental results, we discuss how we set the balancing coefficient $\lambda$, 
following the techniques described in Section~\ref{sec:discussion}.
We first run  {\thresholdgreedy}  with a large value of $\lambda$, 
and determine the \emph{best-greedy workload} and the corresponding value of the \emph{best-greedy objective}. 
We then compute the \textit{best-greedy} values for smaller values of $\lambda$, 
and plot the corresponding values of $\cov(\assignment)$ and $\maxload(\assignment)$ for each $\lambda$ value. 
Fig.~\ref{fig:lambdaSearch} shows these scatter plots for each dataset. 
In most of our datasets we experimented with relatively small values of $\lambda\in(0,5]$.
We then visually inspect these plots to identify a suitable value of $\lambda$ such that the best-greedy workload and best-greedy objective values yield a high value for the overall coverage, $\cov(\assignment)$ while simultaneously giving a reasonably low value for the  $\maxload(\assignment)$. 
The values of $\lambda$ we picked for the different datasets are shown besides the 
dataset name in Table~\ref{tab:bbalance-algo-results}.

\begin{table*}
	\caption{Experimental performance of {\thresholdgreedy} and baseline algorithms in terms of the objective $\objective^{\lambda}$, the maximum load $\maxload$ and the average task coverage $\widehat{\cov}= \frac{1}{\numtasks}\cov$. The best values for each dataset are in bold.}
	\label{tab:bbalance-algo-results}
	\tiny
	\begin{tabular}{ l  rrr  rrr  rrr rrr }
		\toprule
		\multirow{2}{*}{Dataset $(\lambda)$} &
		\multicolumn{3}{c}{\thresholdgreedy} &
		\multicolumn{3}{c}{\lpsetcover} &
		\multicolumn{3}{c}{\taskgreedy} &
		\multicolumn{3}{c}{\noupdategreedy}  \\
		\cmidrule(lr){2-4} \cmidrule(lr){5-7} \cmidrule(lr){8-10} \cmidrule(lr){11-13}
		& $\objective^{\lambda}$& $\maxload$ & $\widehat{\cov}$ & $\objective^{\lambda}$& $\maxload$ & $\widehat{\cov}$ & $\objective^{\lambda}$& $\maxload$ & $\widehat{\cov}$ & $\objective^{\lambda}$& $\maxload$ & $\widehat{\cov}$  \\
		\midrule
		{\imdbone} $(0.1)$ & \textbf{388} & \textbf{6} & \textbf{0.98} & 295 & 72 & 0.92 & 318 & 45 & 0.91 & 191 & 150 & 0.85 \\
		{\imdbtwo} $(0.1)$ & \textbf{972} & \textbf{5} & \textbf{0.98} & 845 & 123 & 0.97 & 852 & 95 & 0.94 & 636 & 298 & 0.94 \\
		{\imdbthree} $(0.1)$& \textbf{1184} & \textbf{9} & \textbf{0.99} & 1099 & 89 & 0.99 & 922 & 222 & 0.95 & 957 & 200 & 0.96 \\
		
		{\bibsonomyone} $(0.1)$& \textbf{72} & \textbf{9} & \textbf{0.81} & 65 & 16 & 0.81 & 23 & 12 & 0.31 & 24 & 18 & 0.3 \\
		{\bibsonomytwo} $(0.2)$& \textbf{900} & \textbf{29} & \textbf{0.93} & 848 & 65 & 0.91 & 350 & 33 & 0.39 & 330 & 67 & 0.4 \\
		{\bibsonomythree} $(0.1)$& \textbf{827} & \textbf{27} & \textbf{0.95} & 723 & 97 & 0.91& 330 & 91 & 0.47 & 323 & 109 & 0.48 \\
		
		{\freelancer} $(0.1)$	& \textbf{88} & \textbf{6} & 0.95 & 59 & 32 & 0.92 & 63 & 36 & \textbf{0.99} & 25 & 50 & 0.76 \\
		{\guru} $(0.1)$ & \textbf{311} & \textbf{4} & \textbf{0.99} & 287 & 25 & 0.98 & 225 & 30 & 0.80 & 17 & 33 & 0.16 \\
		\bottomrule
	\end{tabular}
\end{table*}

\begin{figure*}
	\centering
	\includegraphics[width=\textwidth]{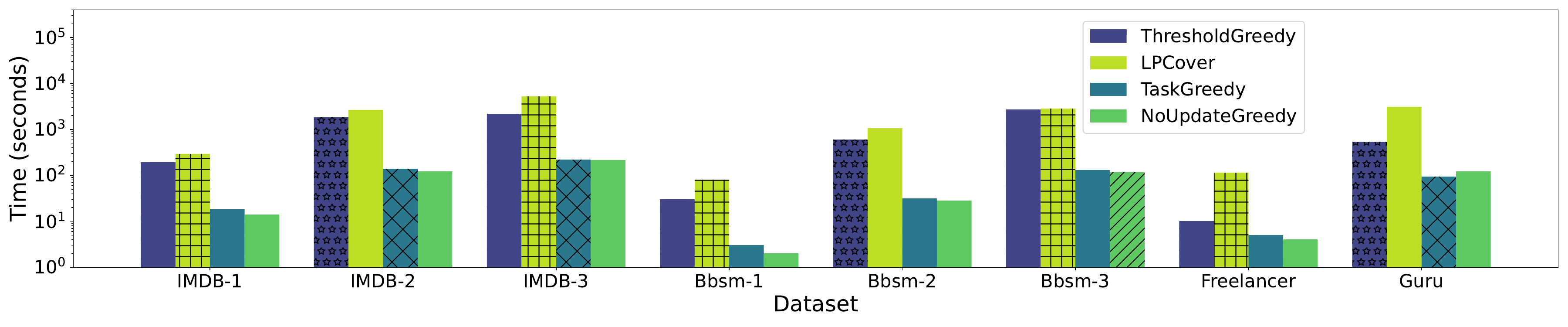}
	\caption{Running time (in seconds) of {\thresholdgreedy} and baseline algorithms, in logarithmic scale.}
	\label{fig:runtimes}
\end{figure*}

\subsubsection{Evaluation}
\label{sec:bbalance-eval}

We show the comparative performance of all four algorithms, 
in terms of the objective function ($\objective^{\lambda}$), 
the average coverage $\widehat{\cov} = \frac{1}{\numtasks}\cov$, and the maximum load $\maxload$, 
in Table~\ref{tab:bbalance-algo-results}.
Intuitively, a \textit{good} solution to an instance of the {\bbalance} problem is an assignment $\assignment$ that not only maximizes the overall task coverage but also minimizes the maximum load of the assignment. 
%It is also important that the solution be computed in a reasonable amount of time, even for large numbers of experts and tasks.
Our experiments for {\thresholdgreedy} show that it  performs the best, compared to all our baselines, in terms of the objective across all datasets. Additionally, it finds assignments with a low maximum workload and it runs in a reasonable amount of time, even for datasets with several thousand experts and tasks. 
Note that for different datasets we use different values of $\lambda$;
% we set these values by following the techniques outlined in Section~\ref{sec:discussion}. 
however, {\thresholdgreedy} finds the highest overall task coverage \textit{independently} of the value of $\lambda$, and consequently would also outperform the baselines for other $\lambda$ values as well.

\spara{Objective values $\objective$ and workload $\maxload$:}
As we can observe in Table~\ref{tab:bbalance-algo-results}, 
{\thresholdgreedy}  consistently finds the assignment with the best objective value. 
On average, across all datasets {\thresholdgreedy} performs about 15\% better than {\lpsetcover} 
and 55\% better than {\taskgreedy} and {\noupdategreedy}. 
As the datasets get larger, the superior performance of {\thresholdgreedy} becomes more evident. 
This behavior may be attributed to our algorithm finding solutions with significantly lower $\maxload$.

{\lpsetcover} is consistently the second-best algorithm in terms of the objective function. 
It also performs particularly well on the {\imdbtwo}, {\imdbthree} and {\guru} datasets --- 
it returns objective values that are comparable (but lower) to those returned by {\thresholdgreedy}. 
{\taskgreedy} and {\noupdategreedy} perform relatively well on the {\imdb} and {\freelancer} datasets --- 
they return objective values that are within 20\% of the objective value of {\thresholdgreedy}. 
In general, we observe that these baselines perform reasonably well on smaller datasets:
one explanation is that 
the pool of suitable experts available to {\taskgreedy} is small and 
the initial marginal-gain values used by {\noupdategreedy} are good estimators 
of the true marginal-gain values in subsequent iterations. 
However, while the baselines often achieve an overall task coverage of~90\%,  
{\thresholdgreedy} achieves superior task coverage in the majority of the cases.

In terms of maximum workload,  {\thresholdgreedy} consistently finds the assignment with the lowest maximum workload value across all our experiments;  the baselines return maximum load values that are significantly larger than those returned by {\thresholdgreedy}. On average across all datasets {\thresholdgreedy} finds a maximum load value that is 80\% smaller than the maximum workload values returned by the baselines.
This is because, in an attempt to maximize the overall task coverage, the baselines  make costly assignments of experts to tasks.  While we do see some examples of reasonable workload values (e.g., for the {\guru} dataset), in most cases the workload values returned by the baselines would be infeasible in practice.

\spara{Running time:}
While  {\thresholdgreedy}  has a theoretical running time of $\bigO(m^2n^2)$, the speed-up techniques discussed in Section~\ref{sec:speedups} and Section~\ref{sec:discussion} lead to significantly lower running time in practice. Fig.~\ref{fig:runtimes} shows a bar plot with the running time of all algorithms for each dataset in logscale.
For the smaller datasets (e.g., {\freelancer} and {\bibsonomytwo}), we observe 
that the running time of {\thresholdgreedy} is on
the order of a few seconds. Even for the largest datasets (e.g., {\bibsonomythree} and  {\imdbthree}) the running time of our algorithm is within a few hours. 
We also observe that {\taskgreedy} and {\noupdategreedy} are faster than our algorithm, but {\lpsetcover} is slower,  
due to the computational bottleneck of solving an LP with a large number of variables. 
Note that the running time of the baselines as we report them here do not include 
the grid search we performed in order to tune their hyperparameters.

\subsection{Experiments for {\nbalance}}
\label{sec:exp-nb}
We start by explaining our datasets, introducing a baseline algorithm and showing how we choose the balancing coefficient $\lambda$ for each dataset. We then empirically evaluate the performance of  {\networkbalance}  in terms of the objective, expert load, radius constraint and running time. We also compare its performance with {\thresholdgreedy}.

\subsubsection{Datasets}
\label{sec:datasets2}
We follow the method of 
\citet{anagnostopoulos2012online} and create social graphs with expert coordination costs for our
datasets,  
{\imdb}, {\bibsonomy}, {\freelancer}, and {\guru}. 

For the {\imdb} dataset, we create a social graph among the directors, who form the vertices in the graph. We connect directors using actors as intermediaries: we form and edge between two directors if they have directed at least two distinct actors in common. The cost of the edge is set to $e^{-fD}$, where $D$ is the number of distinct actors directed by the two directors. The distance function  $e^{-fD}$ takes values between 0 and~1, and we note that it quickly converges to the value 0 as the number of common actors $D$ between two directors increases. As in \citet{anagnostopoulos2012online}, we set the value of the parameter $f = \frac{1}{10}$ since this value of $f$ yields a reasonable edge-weight distribution of coordination costs in the social graph for our {\imdb} dataset.

For the {\bibsonomy} dataset, we create a social graph among authors using co-authorship to define the strength of social connection. Two authors are connected with an edge if they have written at least one paper together. Again the cost of the edge is set to $e^{-fD}$, where $D$ is the number of distinct papers coauthored by the two authors. Similar to \citet{anagnostopoulos2012online}, we set the value of the parameter $f = \frac{1}{10}$, so as to obtain a reasonable distribution of edge-weights in our {\bibsonomy} social graph.

For the {\freelancer} and {\guru} datasets, we use the following heuristic to create a social graph among the experts in each dataset: experts with similar, overlapping sets of skills have a lower coordination cost since they are ``closer" to each other in terms of their ability to perform tasks well together.
To create the expert social graphs, we consider each pair of experts, and compute the Jaccard distance between the sets of skills of the pair of experts. The cost of the edge between each pair of experts is then represented by the Jaccard distance between their skill sets. We note that the Jaccard distance takes values between 0 and 1, and is 0 if two experts have identical skill sets, and 1 if their skills are mutually exclusive.

For all datasets, we keep the same names as before and we
present the summary graph statistics of these datasets in Table~\ref{tab:summarystats-graph}. The average path length corresponds to the average shortest path length between all pairs of nodes in the graph, and the average degree is the average of the \textit{unweighted} degrees of all nodes in the graph. 

\begin{table}
\caption{Summary statistics of our graph datasets.}
\label{tab:summarystats-graph}
{\small
\begin{tabular}{lrrr}
    \toprule
    Dataset & \multicolumn{1}{c}{Number} & \multicolumn{1}{c}{Average}  & \multicolumn{1}{c}{Average} \\
     & \multicolumn{1}{c}{of nodes}  & \multicolumn{1}{c}{path length} &  \multicolumn{1}{c}{degree}\\
    \midrule
    {\imdbone}& 1000 & 7.6 & 1.4\\
    {\imdbtwo}& 3000 & 4.2 & 4.5\\
    {\imdbthree}& 4000 & 3.4 & 8.0\\
    {\bibsonomyone}& 500 & 2.6 & 3.1\\
    {\bibsonomytwo}& 1500 & 1.4 & 25.8\\
    {\bibsonomythree}& 2500 & 1.4 & 29.1\\
    {\freelancer}& 1212 & 1.2 & 19.2\\
    {\guru}& 6120 & 1.1 & 42.0\\
    \bottomrule
\end{tabular}
}
\end{table}

\subsubsection{Baseline}
\label{sec:baselines-nb}
We use the following greedy variant of the {\networkbalance} algorithm as a baseline.

\spara{{\greedyindividual}:} This algorithm has a similar logic as {\networkbalance} as it iterates over different workloads. However, {\greedyindividual} does not create candidate teams. 
The algorithm assigns individual experts to tasks in a greedy manner: for each of the $\numexperts\numtasks$ expert-task pairs, we consider the task coverage the expert provides for that task.
We then greedily assign experts to tasks by selecting experts in order of decreasing coverage they provide for tasks. %After each assignment of an expert to task $\task_j$, the Greedy update step involves updating the expert-task coverage for all remaining experts for task $\task_j$.
As we assign experts to tasks, we also ensure that each expert satisfies the workload constraint $\threshold$.
Note this baseline has a computational overhead of checking that every new expert assigned to a task satisfies the radius constraint $r$ with respect to all other experts already assigned to that task. {\greedyindividual} has a running time of $\bigO(\numtasks\numexperts^2)$.

\subsubsection{Tuning Coverage and Workload Importance}
\label{sec:tuning-nb}
In this section we discuss how the value of $\lambda$ is selected. 
We follow a similar technique as the previous section and determine the \emph{best-network workload} and the corresponding value of the \emph{best-network objective} for a large value of lambda, $\lambda = 5$.
We then compute the \textit{best-network} values for smaller values of $\lambda \in (0, 5]$, 
and plot the corresponding values of $\cov(\assignment)$ and $\maxload(\assignment)$ for each $\lambda$ value. The scatter plots for $r = 0.3$ and $r = 0.7$ are visualized in Figures~\ref{fig:lambda-0.3} and~\ref{fig:lambda-0.7}, respectively. 

We visually inspect these plots to identify a suitable value of $\lambda$ such that the best-network workload and best-network objective values yield a high value for  $\cov(\assignment)$, while simultaneously giving a reasonably low value for $\maxload(\assignment)$. The final $\lambda$ values we selected are shown besides the different datasets in Table~\ref{tab:nbalance-algo-results}. 

\begin{figure*}
    \centering
    \begin{subfigure}[b]{0.24\textwidth}
        \centering
        \includegraphics[width=\textwidth]{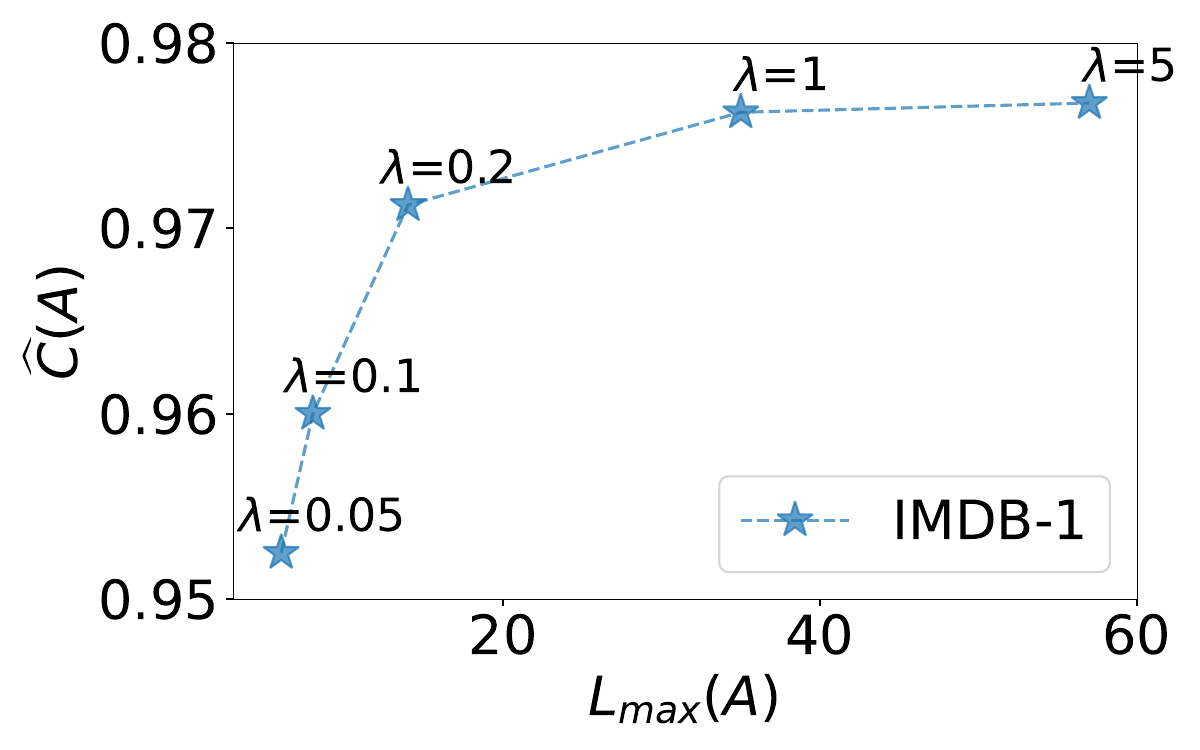}
    \end{subfigure}
    \hfill
    \begin{subfigure}[b]{0.24\textwidth}
        \centering
        \includegraphics[width=\textwidth]{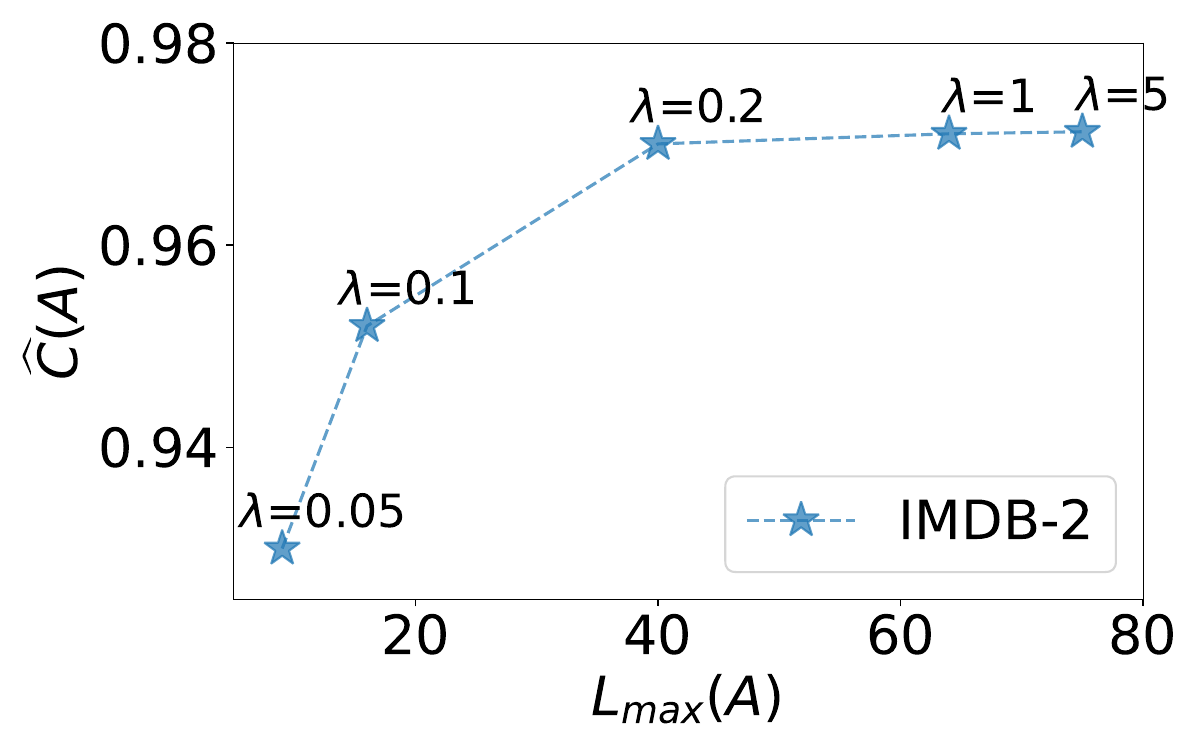}
    \end{subfigure}
    \hfill
    \begin{subfigure}[b]{0.24\textwidth}
        \centering
        \includegraphics[width=\textwidth]{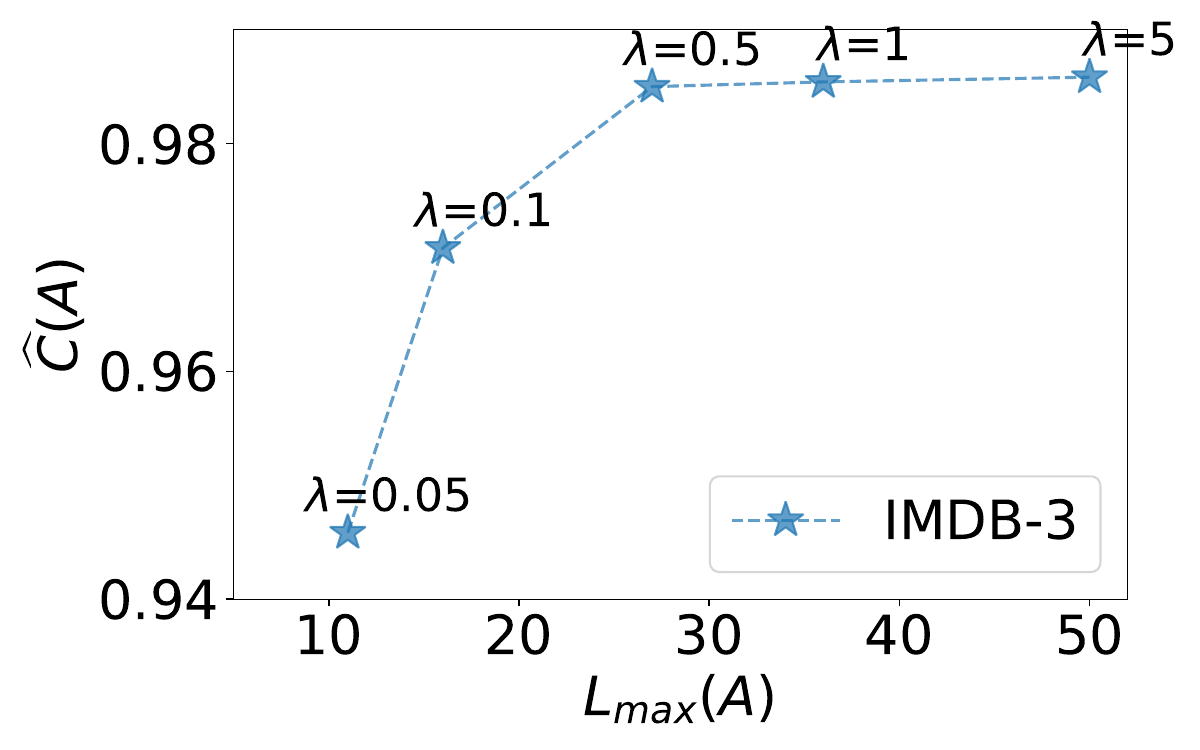}
    \end{subfigure}
    \hfill
    \begin{subfigure}[b]{0.24\textwidth}
        \centering
        \includegraphics[width=\textwidth]{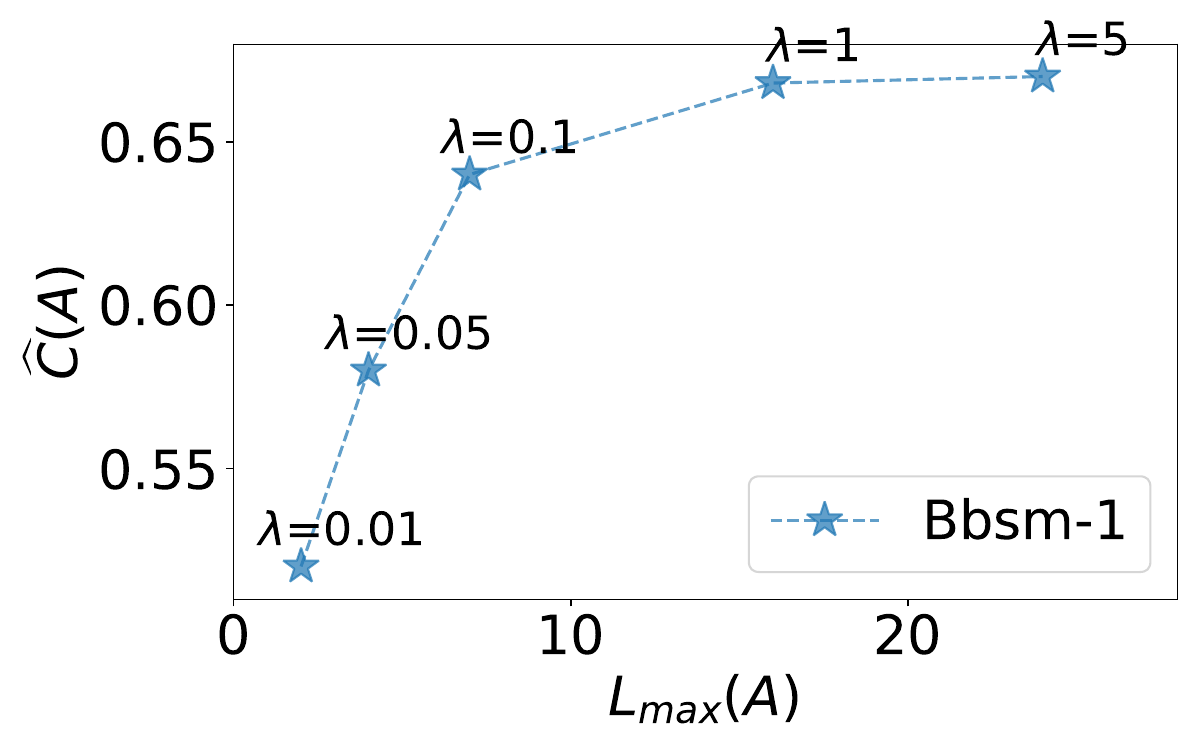}
    \end{subfigure}
    \begin{subfigure}[b]{0.24\textwidth}
        \centering
        \includegraphics[width=\textwidth]{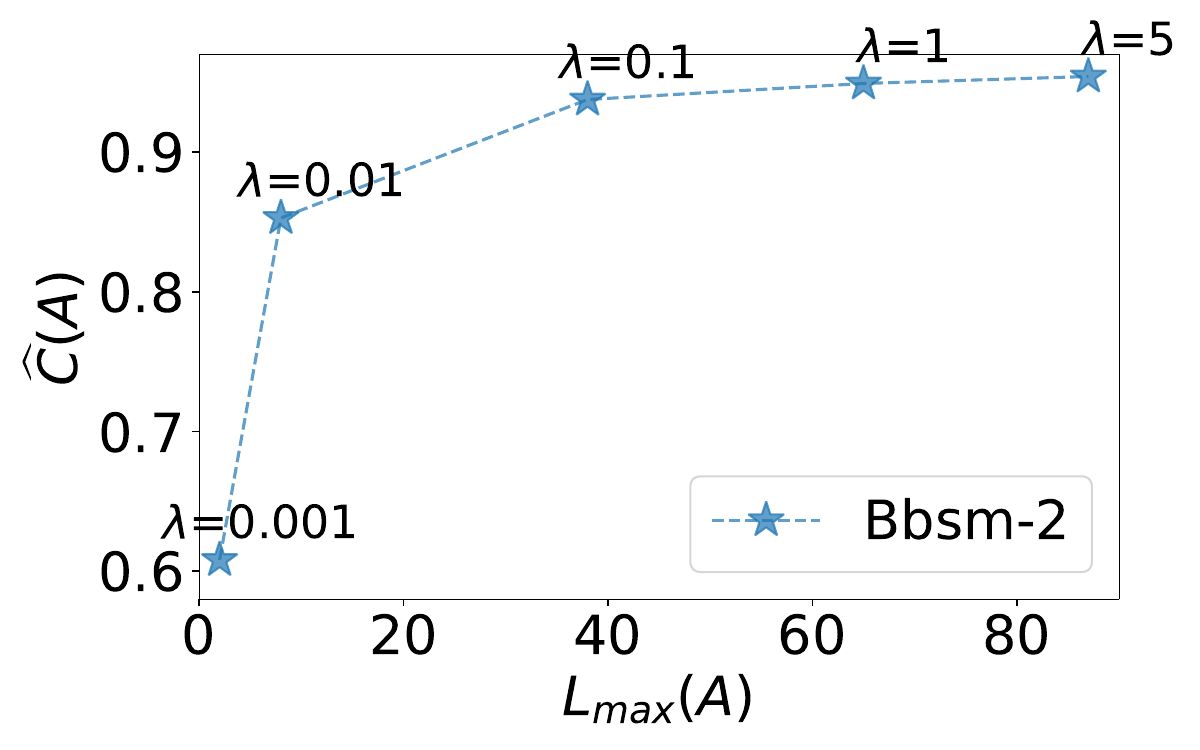}
    \end{subfigure}
    \hfill
    \begin{subfigure}[b]{0.24\textwidth}
        \centering
        \includegraphics[width=\textwidth]{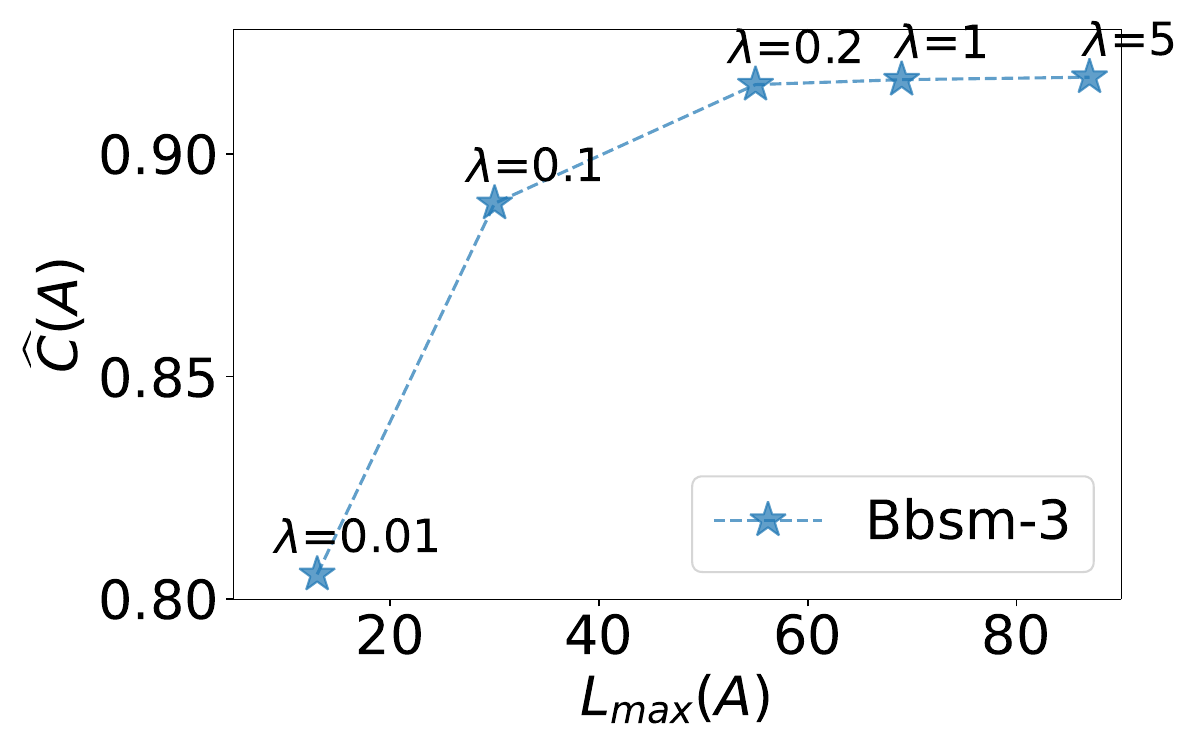}
    \end{subfigure}
    \hfill
    \begin{subfigure}[b]{0.24\textwidth}
        \centering
        \includegraphics[width=\textwidth]{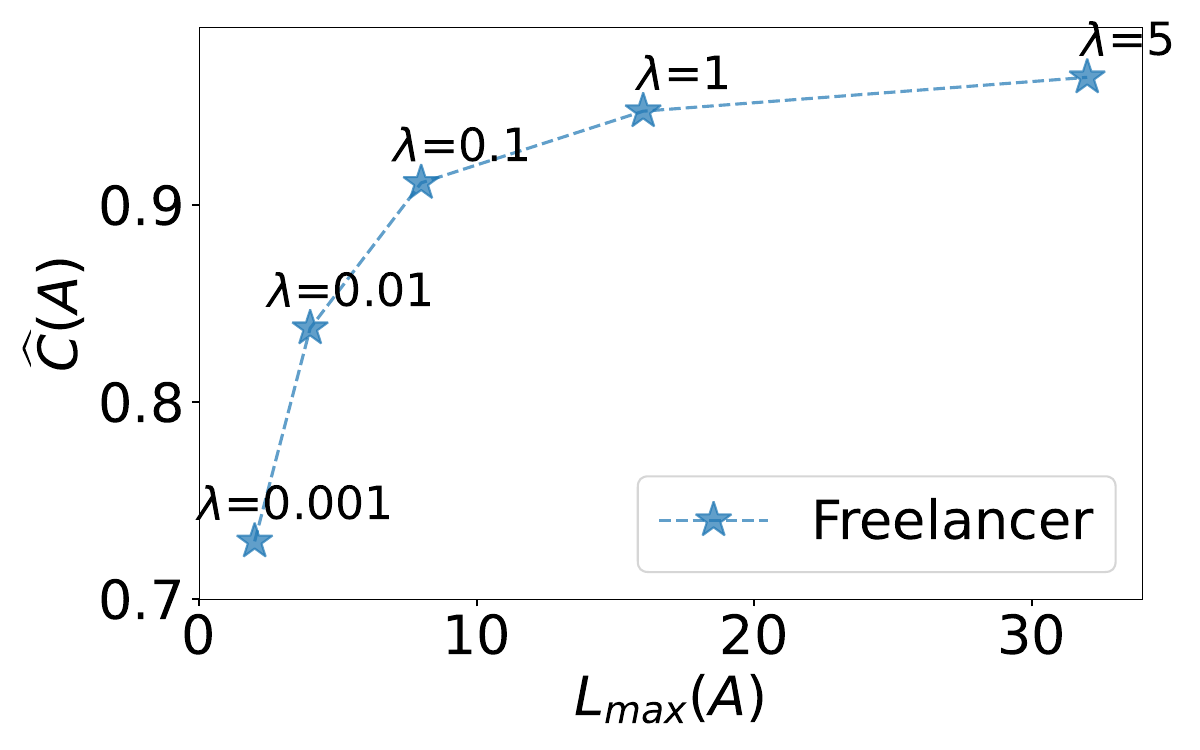}
    \end{subfigure}
    \hfill
    \begin{subfigure}[b]{0.24\textwidth}
        \centering
        \includegraphics[width=\textwidth]{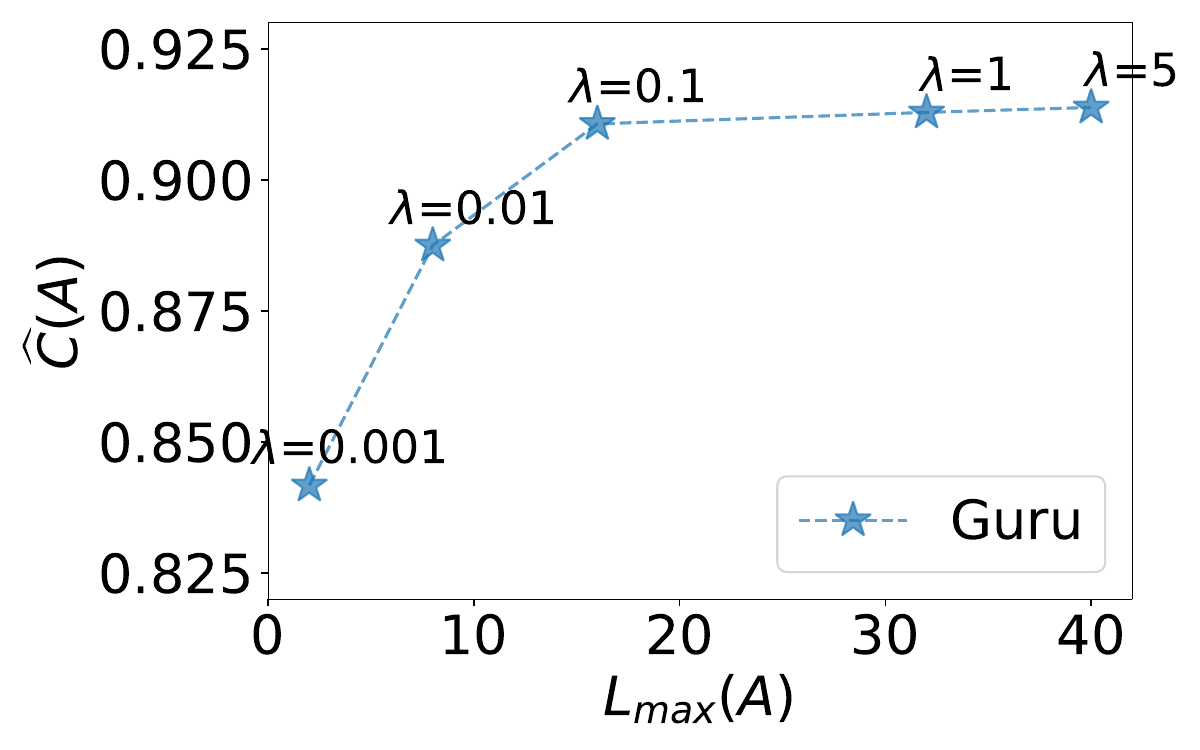}
    \end{subfigure}
    \caption{The best-greedy workload $\maxload(\assignment)$ value and the coverage $\cov(\assignment)$ corresponding to the best-greedy objective $\objective^{\lambda}(\assignment)$ computed by {\networkbalancelp} for $r = 0.3$. Each subplot shows a range of values of the balancing coefficient $\lambda$ for each dataset. }
    \label{fig:lambda-0.3}
\end{figure*}

\begin{figure*}
    \centering
    \begin{subfigure}[b]{0.24\textwidth}
        \centering
        \includegraphics[width=\textwidth]{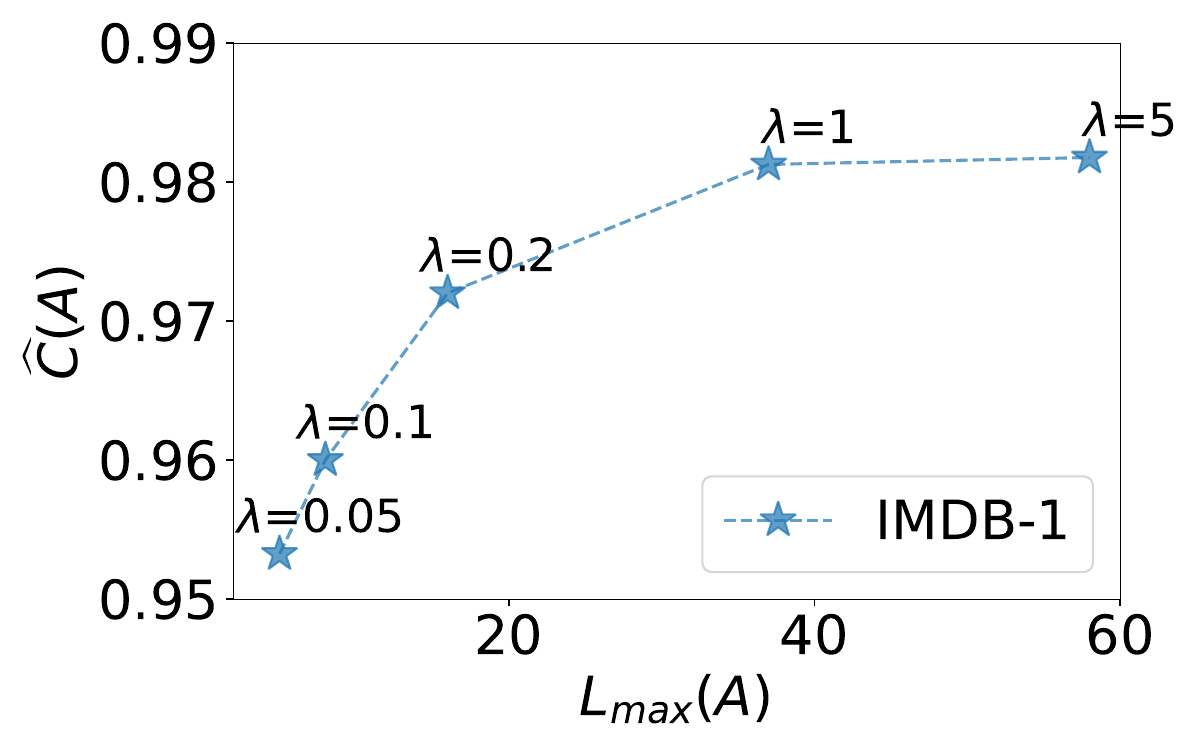}
    \end{subfigure}
    \hfill
    \begin{subfigure}[b]{0.24\textwidth}
        \centering
        \includegraphics[width=\textwidth]{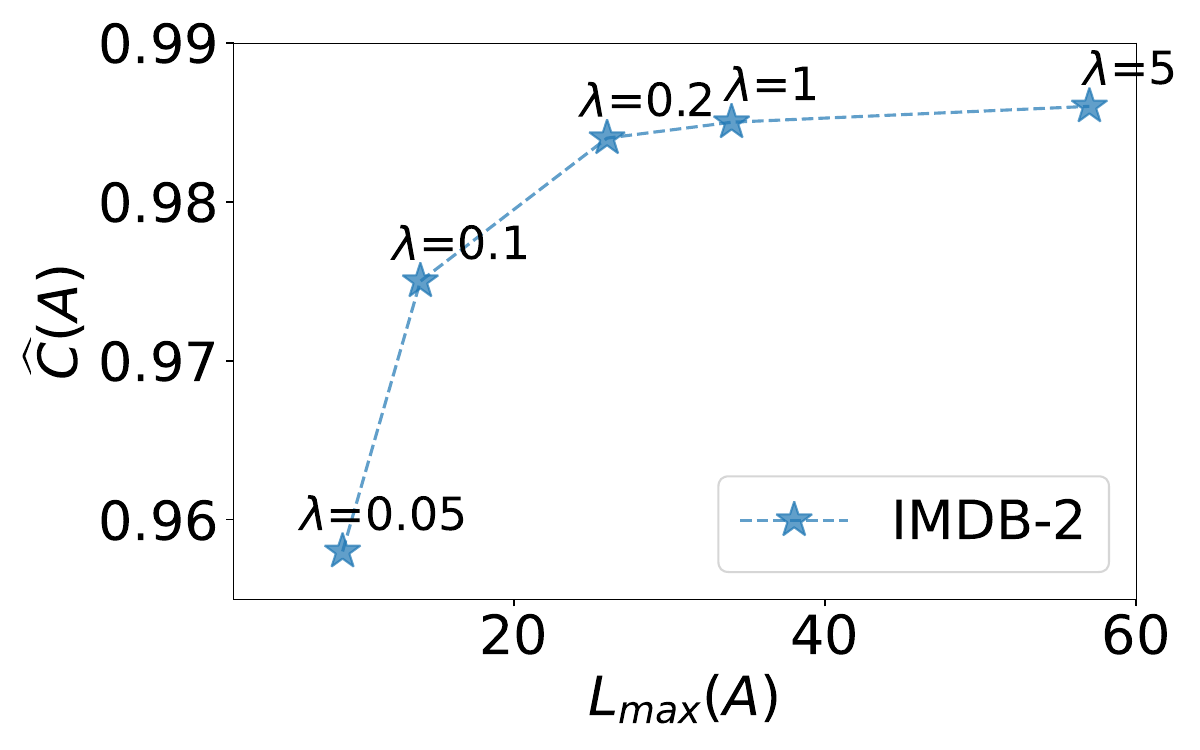}
    \end{subfigure}
    \hfill
    \begin{subfigure}[b]{0.24\textwidth}
        \centering
        \includegraphics[width=\textwidth]{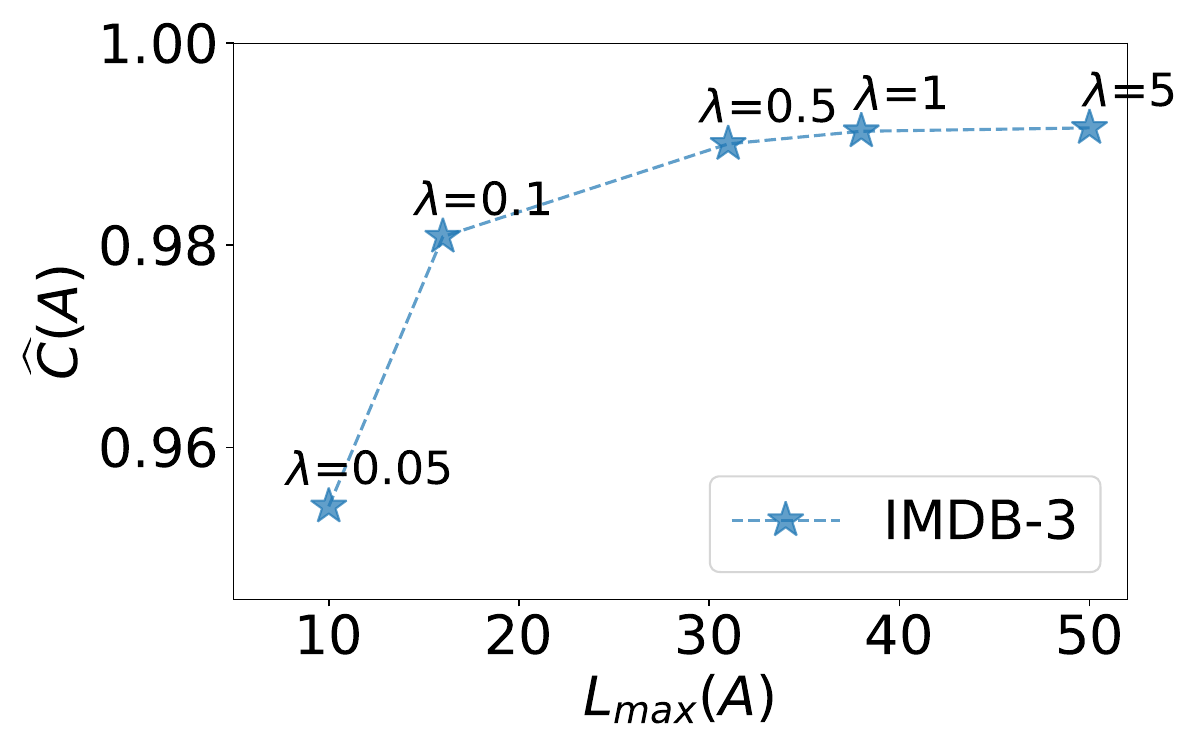}
    \end{subfigure}
    \hfill
    \begin{subfigure}[b]{0.24\textwidth}
        \centering
        \includegraphics[width=\textwidth]{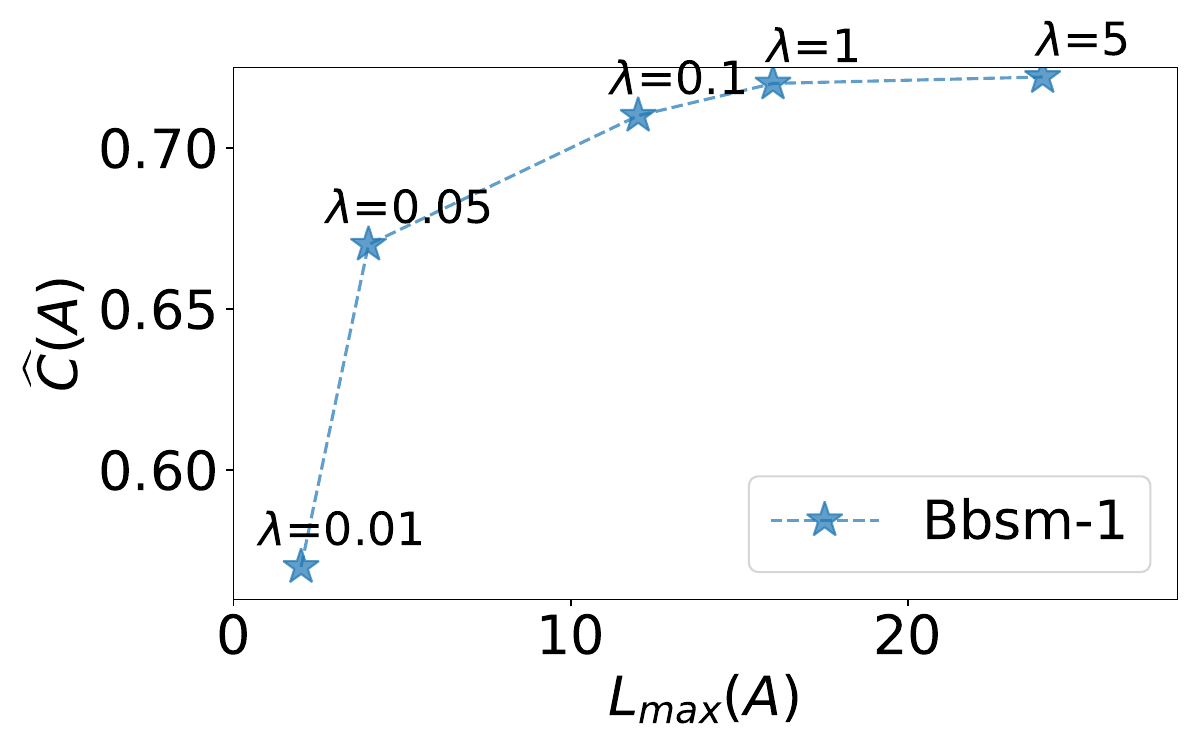}
    \end{subfigure}
    \begin{subfigure}[b]{0.24\textwidth}
        \centering
        \includegraphics[width=\textwidth]{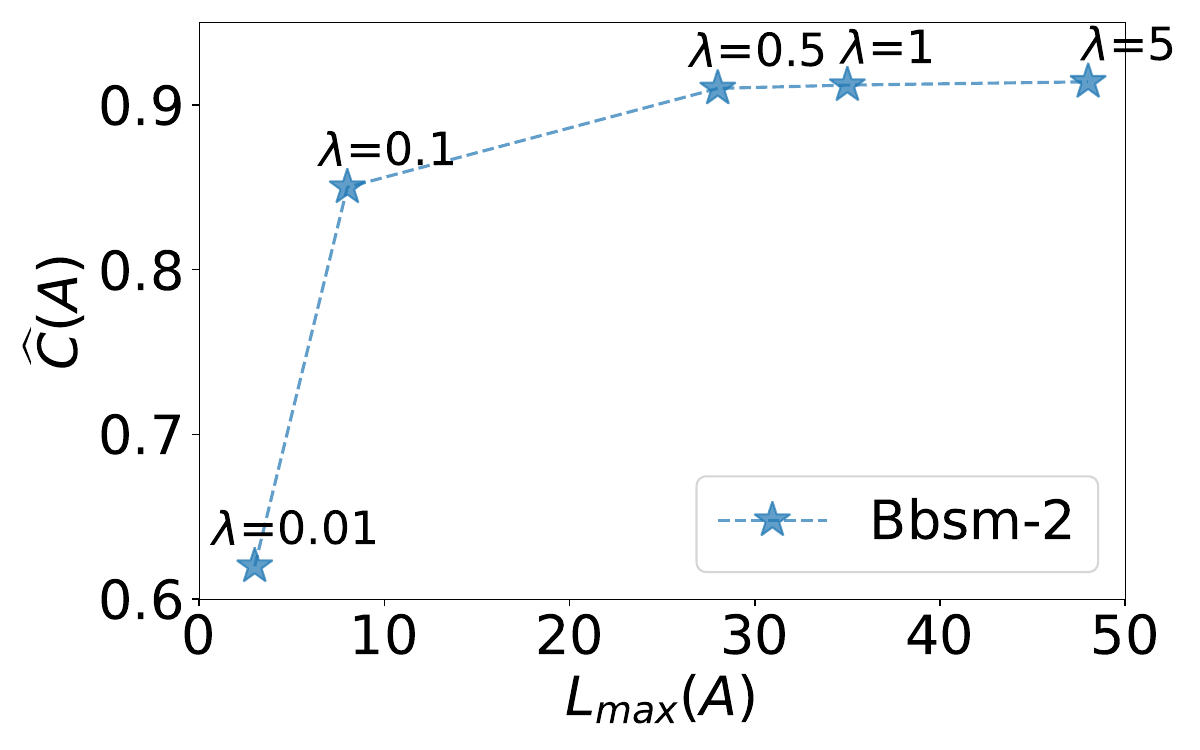}
    \end{subfigure}
    \hfill
    \begin{subfigure}[b]{0.24\textwidth}
        \centering
        \includegraphics[width=\textwidth]{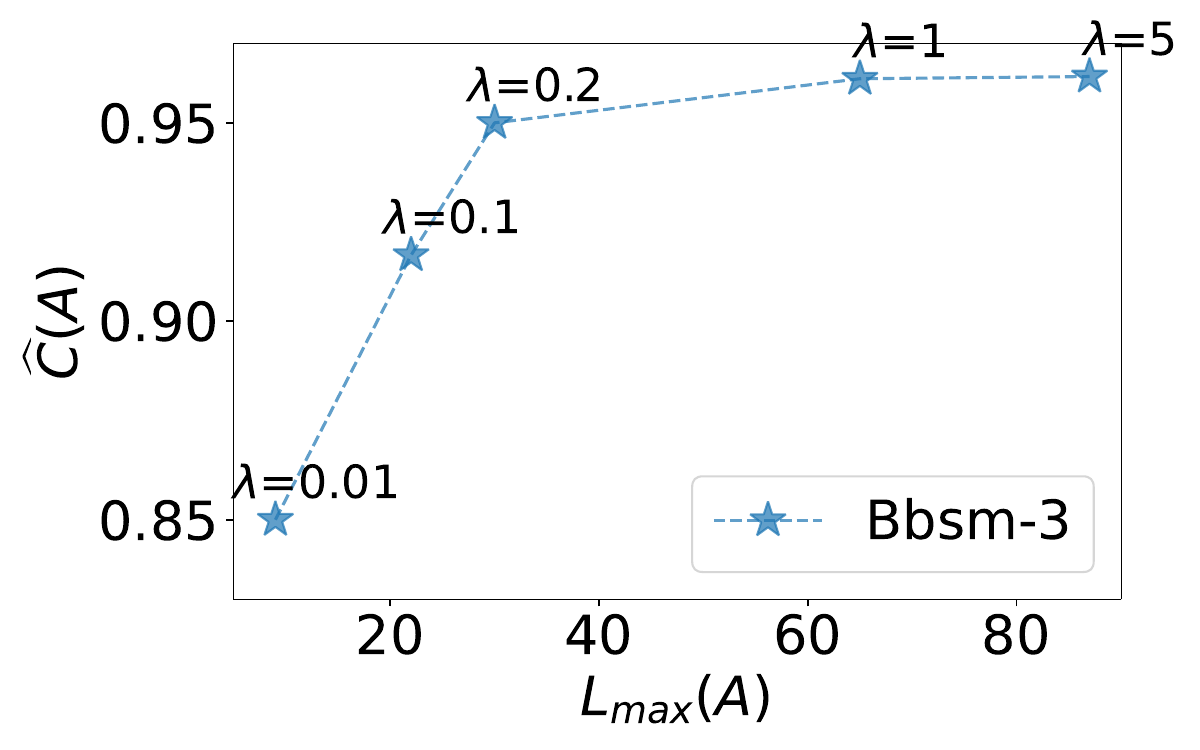}
    \end{subfigure}
    \hfill
    \begin{subfigure}[b]{0.24\textwidth}
        \centering
        \includegraphics[width=\textwidth]{figures/nbalance/Freelancer-lambda-r0.7.pdf}
    \end{subfigure}
    \hfill
    \begin{subfigure}[b]{0.24\textwidth}
        \centering
        \includegraphics[width=\textwidth]{figures/nbalance/Guru-lambda-r0.7.pdf}
    \end{subfigure}
    \caption{The best-greedy workload $\maxload(\assignment)$ value and the coverage $\cov(\assignment)$ corresponding to the best-greedy objective $\objective^{\lambda}(\assignment)$ computed by {\networkbalancelp} for $r = 0.7$. Each subplot shows a range of values of the balancing coefficient $\lambda$ for each dataset. }
    \label{fig:lambda-0.7}
\end{figure*}

\subsubsection{Evaluation}
In this section, we evaluate the performance of the different instantiations of {\networkbalance} we described in Section~\ref{sec:nbalance-algo-names}. Specifically, we evaluate {\networkbalancelp}, {\networkbalancegreedy} and {\lpallradii} and compare their performance with each other, and with the {\greedyindividual} baseline. 
 We compare the algorithms using the objective function ($\objective^{\lambda}$), the average coverage per skill $\widehat{\cov} = \frac{1}{\numtasks}\cov$, and the maximum load $\maxload$.
We omit the results for {\networkbalance}\texttt{-All-Greedy} since {\lpallradii} outperformed it in all aspects.
%both in terms of the objective and running time for all datasets.

Since the coordination costs of our datasets have values between 0 and 1 in our datasets, we ran the algorithms for several values of the radius constraint $r \in \{0.1, 0.3, 0.5, 0.7, 0.9\}$.
We observed that $r \in \{0.1, 0.3\}$ yielded similar objective values, coverages, and workloads, as did $r \in \{0.5, 0.7, 0.9\}$. Consequently, we only report results in Table~\ref{tab:nbalance-algo-results} for $r = 0.3$ and $r = 0.7$.

\spara{Objective values $\objective$ and workload $\maxload$:} From Table~\ref{tab:nbalance-algo-results}, we observe that both {\networkbalancelp} and {\networkbalancegreedy} perform very well for all datasets in terms of the average task coverage $\widehat{\cov}$. For {\imdb} (for both $r = 0.3$ and $r = 0.7$) we observe high coverage values greater than or equal to $0.95$. Additionally, these algorithms find reasonably low expert workloads of $\maxload \in [13,17]$.

We observe that {\lpallradii} and {\greedyindividual} also perform well on {\imdb} in terms of average coverage, with coverage values greater than $0.92$ for $r = 0.3$, and coverage values greater than $0.95$ for $r = 0.7$. However we observe that {\greedyindividual} returns significantly higher workload values of $\maxload \geq 31$. Similarly, for {\imdbtwo} and {\imdbthree}, we observe that {\lpallradii} also returns higher workload values of $\maxload \geq 23$.

For {\bibsonomy}, we observe that the {\networkbalance} algorithms have the highest $\objective^{\lambda}$ and $\widehat{\cov}$ values and also the lowest $\maxload$ values (for both $r = 0.3$ and $r = 0.7$).  {\greedyindividual} yields a significantly lower coverage with a much higher expert workload. We note that for {\bibsonomytwo}, {\lpallradii} gives the best results in terms of the objective $\objective^{\lambda}$ and coverage values. However the {\networkbalance} algorithms only perform marginally worse in terms of the objective and have similar workload values.

For the {\freelancer} and {\guru} datasets, we observe that {\networkbalancelp} yields the best $\objective^{\lambda}$ and $\widehat{\cov}$, with workload values $\maxload \leq 15$.

\spara{Effect of radius constraint:} We observe that $\widehat{\cov}$ decreases slightly for all algorithms, across our datasets as the radius constraint decreases from $r = 0.7$ to $r = 0.3$. This is expected since a smaller radius implies that the potential teams of experts available is also smaller. We observe, however, that the difference is marginal, with a decrease in coverage of less than $3\%$.
% For {\bibsonomyone} and {\bibsonomythree} we see that there is about $5\%$ difference in coverage values for the two different team radius constraints. 
We observe that the maximum workload values returned by the {\networkbalance} algorithms are also comparable for the different radius constraints. These observations lead us to conclude that the increase in coverage due to increasing team radius could be attributed to the availability of new experts that are within the new, larger team radius.

\spara{Mean expert workloads:} We examine the teams formed by our algorithm in terms of the \emph{mean} of the expert workloads. For {\imdb}, we have that  $\maxload \in [13,17]$, yet the mean mean expert load of the {\networkbalance} solutions is in the range $[3.9, 4]$. This indicates that while there are a few experts who are heavily loaded, on average the {\networkbalance} algorithms find good load-balancing solutions. In contrast, we observe that the baseline {\greedyindividual} has a higher mean expert load for the {\imdb} datasets, in the range $[5.5, 6]$.

Similarly, we observe that for  {\bibsonomyone} and {\bibsonomytwo} the mean expert load of the {\networkbalance} algorithms is in the range $[1.8, 2]$ for both datasets (and for both radius constraints). On the other hand, the mean expert load of {\greedyindividual} is higher for these datasets in the range $[2.7, 3]$. A similar pattern was observed for the {\freelancer} and {\guru} datasets as well.

\spara{Comparison with {\thresholdgreedy}:} We compare the performance of  {\networkbalance} with {\thresholdgreedy} by comparing values in Tables~\ref{tab:bbalance-algo-results} and~\ref{tab:nbalance-algo-results}. While $\widehat{\cov}$ returned by both algorithms is comparable, we see that {\thresholdgreedy}  finds a slightly higher coverage across all the datasets. Additionally, the maximum workload values achieved by  {\networkbalance}  is higher than those achieved by {\thresholdgreedy}. This is because the problem solved by the former algorithms is harder than the one solved by the latter; there are more constraints in terms of how experts can be combined into teams.

\spara{Running time:} We record the total running time of all algorithms for $r = 0.7$ (we observed similar patterns for the other radii values) and illustrate them in Fig.~\ref{fig:runtimes-nbalance-0.7}.
We observe that {\networkbalancelp} has the best running time of all algorithms, and this is closely followed by {\networkbalancegreedy}. While {\lpallradii} does perform well on some of the datasets, we observe that it has the maximum running time of all algorithms, across all datasets. This is an expected result since {\lpallradii} considers many more candidate teams than the other algorithms. 

\begin{table*}
\caption{Experimental performance of {\networkbalance} and {\greedyindividual} in terms of the objective $\objective^{\lambda}$, the maximum load $\maxload$ and the average task coverage $\widehat{\cov}= \frac{1}{\numtasks}\cov$. The best values for each dataset are in bold.}
\label{tab:nbalance-algo-results}
\tiny
\begin{tabular}{c l  rrr  rrr  rrr  rrr }
    \toprule
    &\multirow{2}{*}{Dataset $(\lambda)$} &
    \multicolumn{3}{c}{\networkbalancegreedy} &
    \multicolumn{3}{c}{\networkbalancelp} &
    \multicolumn{3}{c}{\lpallradii} &
    \multicolumn{3}{c}{\greedyindividual}  \\
	\cmidrule(lr){3-5} \cmidrule(lr){6-8} \cmidrule(lr){9-11} \cmidrule(lr){12-14}
    & & $\objective^{\lambda}$& $\maxload$ & $\widehat{\cov}$ & $\objective^{\lambda}$& $\maxload$ & $\widehat{\cov}$ & $\objective^{\lambda}$& $L$ & $\widehat{\cov}$ & $\objective^{\lambda}$& $L$ & $\widehat{\cov}$  \\
    \midrule
        \parbox[t]{2mm}{\multirow{8}{*}{\rotatebox[origin=c]{90}{$r = 0.3$}}}
    &{\imdbone} $(0.2)$ & 752 & \textbf{14} & \textbf{0.97} & \textbf{752} & \textbf{14} & 0.96 & 756 & 8 & 0.92 & 742 & 34 & 0.93 \\
    &{\imdbtwo} $(0.1)$ & 960 & 15 & 0.95 & \textbf{961} & \textbf{13} & \textbf{0.96} & 942 & 26 & 0.94 & 938 & 33 & 0.95 \\
    &{\imdbthree} $(0.1)$& 1149 & 17 & \textbf{0.97} & \textbf{1153} & \textbf{15} & 0.95 & 1143 & 23 & 0.95 & 1138 & 22 & 0.94 \\
    
    &{\bibsonomyone} $(0.1)$& \textbf{56} & \textbf{7} & \textbf{0.64} & \textbf{56} & 8 & \textbf{0.64} & 54 & 9 & 0.63 & 46 & 16 & 0.62 \\
    &{\bibsonomytwo} $(0.01)$ & 26 & 9 & 0.86 & 27 & \textbf{7} & 0.86 & \textbf{29} & 8 & \textbf{0.89} & 25 & 9 & 0.84 \\
    &{\bibsonomythree} $(0.1)$ & 767 & 34 & \textbf{0.89} & \textbf{785} & \textbf{32} & \textbf{0.89} & 748 & 34 & 0.87 & 738 & 37 & 0.86 \\
      &{\freelancer} $(0.1)$	& 80 & 9 & 0.89 & \textbf{82} & \textbf{8} & \textbf{0.9} & 78 & 11 & 0.87 & 74 & 14 & 0.80\\
   &{\guru} $(0.1)$ & 268 & \textbf{11} & 0.86 & \textbf{272} & 15 & \textbf{0.9} & 256 & 28 & 0.88 & 241 & 15 & 0.76 \\

    \midrule
    \parbox[t]{2mm}{\multirow{8}{*}{\rotatebox[origin=c]{90}{$r = 0.7$}}}
    &{\imdbone} $(0.2)$ & 761 & 16 & \textbf{0.97} & 762 & 15 & 0.96 & \textbf{766} & \textbf{8} & \textbf{0.97} & 748 & 31 & 0.97 \\
    &{\imdbtwo} $(0.1)$ & 960 & 15 & \textbf{0.97} & \textbf{963} & \textbf{14} & \textbf{0.97} & 943 & 30 & 0.96 & 941 & 31 & 0.97 \\
    &{\imdbthree} $(0.1)$& 1159 & 17 & \textbf{0.98} & \textbf{1161} & \textbf{15} & 0.97 & 1147 & 28 & 0.97 & 1143 & 35 & 0.96 \\
    
    &{\bibsonomyone} $(0.05)$& \textbf{30} & \textbf{4} & 0.67 & \textbf{30} & 6 & \textbf{0.68} & 29 & 4 & 0.65 & 15 & 16 & 0.63 \\
    &{\bibsonomytwo} $(0.1)$ & 426 & 8 & 0.87 & 428 & 8 & 0.87 & \textbf{438} & \textbf{8} & \textbf{0.90} & 332 & 32 & 0.72 \\
    &{\bibsonomythree} $(0.2)$ & 1677 & 32 & \textbf{0.95} & \textbf{1685} & \textbf{31} & \textbf{0.95} & 1638 & 37 & 0.93 & 1360 & 61 & 0.79 \\
        &{\freelancer} $(0.1)$	& 82 & 9 & \textbf{0.91} & \textbf{83} & \textbf{8} & \textbf{0.91} & 81 & 8 & 0.90& 77 & 17 & 0.84 \\
   &{\guru} $(0.1)$ & 271 & \textbf{12} & 0.89 & \textbf{275} & 15 & \textbf{0.91} & 260 & 30 & 0.90& 242 & 14 & 0.78 \\
    \bottomrule
\end{tabular}
\end{table*}

\begin{figure*}
    \centering
    \includegraphics[width=\textwidth]{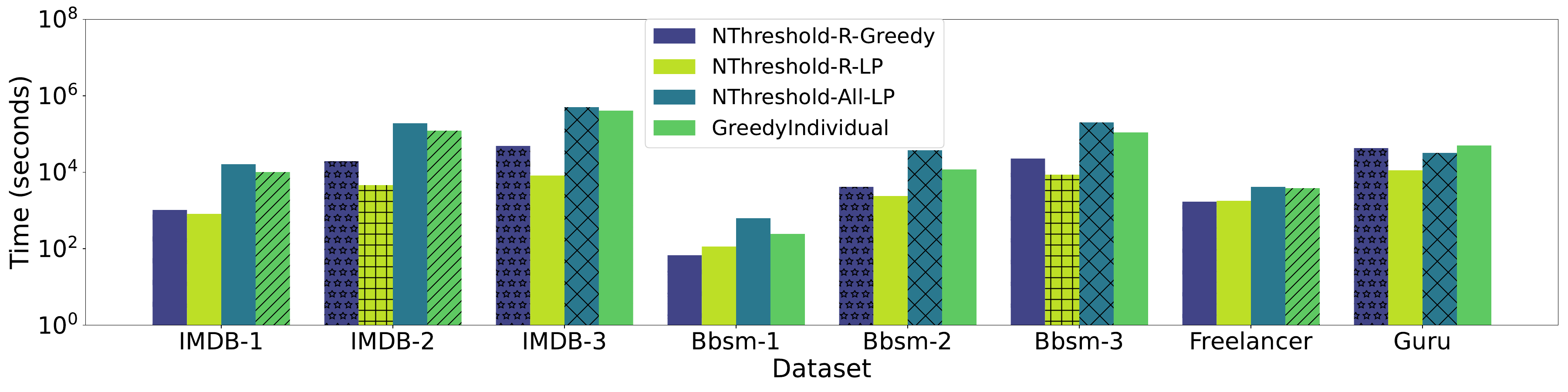}
    \caption{Running time (in seconds) of {\networkbalance} and {\greedyindividual}, in logarithmic scale for radius constraint $r=0.7$.}
    \label{fig:runtimes-nbalance-0.7}
\end{figure*}

\subsection{Team Characteristics}
In this section, we investigate the characteristics of the teams formed by  
{\thresholdgreedy}, {\networkbalance} and {\greedyindividual}. We examine four characteristics: team size, team radii, within-team degree distributions and average pairwise distance of the experts in the formed teams
In the remainder of the section, we discuss the team characteristics in detail. For this analysis, we consider $r = 0.7$, but similar characteristics were observed for other radii as well. We report the average values of the different characteristics for each algorithm and dataset in Tables~\ref{tab:team-characteristics-size} and~\ref{tab:team-characteristics-connectivity}. %We only consider teams with size greater or equal to one.

\begin{table*}
\centering
\caption{Team characteristics of {\thresholdgreedy}, {\networkbalance} and {\greedyindividual} algorithms in terms of the average team size $\avgteamsize$, maximum team size $\maxteamsize$ and average team radius $\avgteamradius$.}
\label{tab:team-characteristics-size}
\tiny
\begin{tabular}{ l  rrr  rrr  rrr  rrr }
    \toprule
    \multirow{2}{*}{Dataset} &
    \multicolumn{3}{c}{\thresholdgreedy} &
    \multicolumn{3}{c}{\networkbalancegreedy} &
    \multicolumn{3}{c}{\networkbalancelp} &
    \multicolumn{3}{c}{\greedyindividual}  \\
    \cmidrule(lr){2-4} \cmidrule(lr){5-7} \cmidrule(lr){8-10} \cmidrule(lr){11-13}

    & $\avgteamsize$ & $\maxteamsize$ & $\avgteamradius$ & $\avgteamsize$ & $\maxteamsize$ & $\avgteamradius$ & $\avgteamsize$ & $\maxteamsize$ & $\avgteamradius$ & $\avgteamsize$ & $\maxteamsize$ & $\avgteamradius$  \\
    \midrule

    {\imdbone} & 1.09 & 4 & 0.99 & 1.04 & 3 & 0.68 & 1.04 & 3 & 0.66 & 1.02 & 2 & 0.67 \\
    {\imdbtwo} & 1.06 & 3 & 0.99 & 1.03 & 3 & 0.65 & 1.05 & 4 & 0.64 & 1.02 & 2 & 0.62 \\ 
    {\imdbthree} & 1.06 & 4 & 0.99 & 1.04 & 4 & 0.64 & 1.04 & 3 & 0.64 & 1.03 & 2 & 0.62 \\ 
    
    {\bibsonomyone} & 2.05 & 6 & 0.99 & 1.71 & 11 & 0.64 & 2.15 & 13 & 0.66 & 1.42 & 12 & 0.59 \\
    {\bibsonomytwo} & 1.94 & 8 & 0.99 & 2.89 & 48 & 0.69 & 3.05 & 53 & 0.68 & 1.77 & 13 & 0.63 \\
    {\bibsonomythree} & 1.79 & 8 & 0.99 & 8.68 & 109 & 0.69 & 9.70 & 167 & 0.69 & 2.07 & 19 & 0.68 \\
    
    {\freelancer} & 1.87 & 5 & 0.98 & 8.57 & 88 & 0.69 & 8.59 & 88 & 0.69 & 2.21 & 6 & 0.63\\
    {\guru} & 2.06 & 12 & 0.98 & 14.06 & 85 & 0.69 & 13.21 & 84 & 0.69 & 1.56 & 9 & 0.65 \\
    \bottomrule
\end{tabular}
\end{table*}

\begin{table*}
\centering
\caption{Team characteristics of {\thresholdgreedy}, {\networkbalance} and {\greedyindividual} algorithms in terms of the average team density $\avgteamdensity$, and average team pairwise distance $\avgteampairwise$.}
\label{tab:team-characteristics-connectivity}
\tiny
\begin{tabular}{ l  rr  rr  rr  rr }
    \toprule
    \multirow{2}{*}{Dataset} &
    \multicolumn{2}{c}{\thresholdgreedy} &
    \multicolumn{2}{c}{\networkbalancegreedy} &
    \multicolumn{2}{c}{\networkbalancelp} &
    \multicolumn{2}{c}{\greedyindividual}  \\
    \cmidrule(lr){2-3} \cmidrule(lr){4-5} \cmidrule(lr){6-7} \cmidrule(lr){8-9}
    & $\avgteamdensity$ & $\avgteampairwise$ & $\avgteamdensity$ & $\avgteampairwise$ & $\avgteamdensity$ & $\avgteampairwise$ & $\avgteamdensity$ & $\avgteampairwise$ \\
    \midrule
    {\imdbone} & 1.02 & 0.52 & 1.07 & 0.41 & 1.07 & 0.38 & 1.05 & 0.34 \\
    {\imdbtwo} & 1.01 & 0.51 & 1.06 & 0.34 & 1.08 & 0.35 & 1.03 & 0.31 \\ 
    {\imdbthree} & 1.01 & 0.51 & 1.08 & 0.34 & 1.07 & 0.35 & 1.06 & 0.31 \\ 
    
    {\bibsonomyone} & 1.02 & 0.59 & 1.40 & 0.48 & 1.51 & 0.52 & 1.37 & 0.38 \\
    {\bibsonomytwo} & 1.04 & 0.58 & 1.13 & 0.65 & 1.35 & 0.61 & 1.52 & 0.45 \\
    {\bibsonomythree} & 1.04 & 0.57 & 1.52 & 0.74 & 1.38 & 0.75 & 1.29 & 0.56 \\
    
    {\freelancer} & 1.15 & 0.56 & 2.53 & 0.76 & 2.49 & 0.76 & 2.89 & 0.40 \\
    {\guru} & 2.80 & 0.58 & 9.76 & 0.85 & 10.92 & 0.78 & 2.82 & 0.34 \\
    \bottomrule
\end{tabular}
\end{table*}

\spara{Team size:} We characterize the size of a team (for a task) by the total number of experts assigned to that task. 
% We illustrate the team size distributions for all datasets for the different algorithms in Fig.~\ref{fig:team-size-Distributions}. 
In Table~\ref{tab:team-characteristics-size}, we report the average team size formed by the different algorithms for the different datasets.

Overall, we observe that across all datasets, {\thresholdgreedy} consistently finds teams with the smallest sizes and highest task coverage. This is intuitive, since this algorithm doesn't have any graph constraints to satisfy. We also observe that {\thresholdgreedy} has a smaller variance in team sizes, since even the largest teams formed are significantly smaller than those formed by the {\networkbalance} algorithms.

For the {\imdb} datasets, all algorithms yield relatively small teams, with an average team size of a little over 1 expert. %There is no significant difference between the team size distributions between the different algorithms for the {\imdb} datasets. 
The smaller team sizes in the {\imdb} datasets could be attributed to the fact that there are relatively few skills in this dataset; often a single director is able to cover the skills of the tasks -- which typically have fewer skills.

For the {\bibsonomy} datasets, particularly {\bibsonomythree}, {\thresholdgreedy} and {\greedyindividual} have smaller average team sizes than the {\networkbalance} algorithms. While most teams formed by the {\networkbalance} algorithms are relatively small with under 10 experts, we observe that there are some teams that are much larger. 
We observe similar patterns for  {\freelancer} and {\guru}, where {\thresholdgreedy} finds teams that are smaller on average, with lower variance in the size than the {\networkbalance} algorithms.

\spara{Team radii:} %We illustrate the distributions of team radii in Fig.~\ref{fig:team-radii-Distributions}. 
We observe that {\thresholdgreedy} has teams with a much larger radius, of almost 1, since there is no radius constraint for {\thresholdgreedy}. 
For {\imdb} and {\bibsonomy}, we observe that the {\networkbalance} algorithms form most of their teams with radii that are just below the $r = 0.7$ constraint.
We observe that for  {\bibsonomy}, the {\networkbalance} and {\greedyindividual} algorithms have mean team radii of about 0.6, and several teams  with radii less than 0.5; this is not the case for {\thresholdgreedy}. %, which has a very small variance in the team radii, as is evident visually from the distribution.

For {\freelancer} and {\guru}, we see that the {\networkbalance} algorithms form more teams of varying radii, but still have average radii of about 0.6. The teams formed by {\thresholdgreedy} for these datasets still have the largest team radii, with means of 0.92 and 0.96, respectively 
(See Table~\ref{tab:team-characteristics-size}).

\spara{Team densities:} We define the density $\avgteamdensity$ of a team to be the sum of degrees of all experts in that team divided by the total number of experts on that team. This measure quantifies how well-connected the output teams are.
%We illustrate results for all datasets for the different algorithms in Fig.~\ref{fig:team-density-Distributions}. 
In Table~\ref{tab:team-characteristics-connectivity}, we report the average team density $\avgteamdensity$ achieved by the different algorithms. 
While we observed that the {\networkbalance} algorithms formed slightly larger teams than {\thresholdgreedy}, we now see that the former also outputs denser teams on average. 

\spara{Team pairwise distances:} We define the mean pairwise distance $\avgteampairwise$ of a team to be the mean of all pairwise shortest paths of experts on that team. The average pairwise distance of a team gives us an indication of how well connected experts on a team are. 
In Table~\ref{tab:team-characteristics-connectivity}, we report the mean pairwise distance $\avgteampairwise$ of teams formed by the different algorithms. 
We observe that for all three {\imdb} datasets and {\bibsonomyone}, {\thresholdgreedy} has the highest team mean pairwise distance of all the algorithms. However, we observe that for {\bibsonomytwo}, {\bibsonomythree}, {\freelancer} and {\guru}, the {\networkbalance} algorithms have a higher team mean pairwise distance.

Overall, we observe that {\thresholdgreedy} forms teams with fewer experts and smaller variance in team size compared to  {\networkbalance}. On the other hand, the {\networkbalance} algorithms form more compact teams than {\thresholdgreedy} in terms of the radii of teams. Additionally, the teams formed by the {\networkbalance} algorithms are significantly denser in terms of their connections between team members.

\section{Conclusions}\label{sec:conclusions}
In this paper, we introduced two new team-formation problems: 
 {\bbalance}  and the more general {\nbalance} problem; we also designed algorithms for solving them.

In  {\bbalance}  the objective is to assign experts to tasks such that the total coverage of the tasks (in terms of their skills) is maximized and the maximum workload of any expert in the assignment is minimized. We proved that  {\bbalance}  is NP-hard.
We adopted a weaker notion of approximation~\citep{harshaw19submodular, mitra21submodularplus}, tailored for our objective, and -- within this setting -- we designed a polynomial-time approximation algorithm, {\thresholdgreedy} for {\bbalance}. 

In the {\nbalance} problem, we expand our {\bbalance} formulation  to include communication costs in a social graph. We have the same objective with the added constraint that every team in the expert-task assignment, $\assignment$ must also satisfy a radius constraint $\maxradius(\assignment) \leq r$. This problem is a generalization of {\bbalance}, and thus also NP-hard. 
For {\nbalance}, we designed {\networkbalance} a practical algorithm for solving it. This algorithm follows the same high-level algorithmic ideas we used for {\thresholdgreedy}, yet
it does not come with approximation guarantees.

For both problems, we showed that we can exploit the structure of our objective function 
and design speedups that work extremely well in practice.
We also developed a more general framework where we can efficiently tune the importance of the two parts of our objective and therefore make our framework applicable to a wide set of applications. 
Finally, we demonstrated the practical utility of the algorithmic framework we proposed in a variety of real datasets and also compared the characteristics of teams formed by the {\thresholdgreedy} and {\networkbalance} algorithms.
Our experiments with a variety of datasets from various domains demonstrated the utility of our framework and the efficacy of our algorithms.

\mpara{Acknowledgments:} 
Evimaria Terzi and Karan Vombatkere are supported by 
NSF grants III 1908510 and III 1813406 as well as a gift from Microsoft.
Aristides Gionis is supported by ERC Advanced Grant REBOUND (834862), 
EC H2020 RIA project SoBigData++ (871042), and 
the Wallenberg AI, Autonomous Systems and Software Program (WASP) 
funded by the Knut and Alice Wallenberg Foundation.

\bibliography{references}

%% BioMed_Central_Bib_Style_v1.01

\begin{thebibliography}{33}
% BibTex style file: bmc-mathphys.bst (version 2.1), 2014-07-24
\ifx \bisbn   \undefined \def \bisbn  #1{ISBN #1}\fi
\ifx \binits  \undefined \def \binits#1{#1}\fi
\ifx \bauthor  \undefined \def \bauthor#1{#1}\fi
\ifx \batitle  \undefined \def \batitle#1{#1}\fi
\ifx \bjtitle  \undefined \def \bjtitle#1{#1}\fi
\ifx \bvolume  \undefined \def \bvolume#1{\textbf{#1}}\fi
\ifx \byear  \undefined \def \byear#1{#1}\fi
\ifx \bissue  \undefined \def \bissue#1{#1}\fi
\ifx \bfpage  \undefined \def \bfpage#1{#1}\fi
\ifx \blpage  \undefined \def \blpage #1{#1}\fi
\ifx \burl  \undefined \def \burl#1{\textsf{#1}}\fi
\ifx \doiurl  \undefined \def \doiurl#1{\url{https://doi.org/#1}}\fi
\ifx \betal  \undefined \def \betal{\textit{et al.}}\fi
\ifx \binstitute  \undefined \def \binstitute#1{#1}\fi
\ifx \binstitutionaled  \undefined \def \binstitutionaled#1{#1}\fi
\ifx \bctitle  \undefined \def \bctitle#1{#1}\fi
\ifx \beditor  \undefined \def \beditor#1{#1}\fi
\ifx \bpublisher  \undefined \def \bpublisher#1{#1}\fi
\ifx \bbtitle  \undefined \def \bbtitle#1{#1}\fi
\ifx \bedition  \undefined \def \bedition#1{#1}\fi
\ifx \bseriesno  \undefined \def \bseriesno#1{#1}\fi
\ifx \blocation  \undefined \def \blocation#1{#1}\fi
\ifx \bsertitle  \undefined \def \bsertitle#1{#1}\fi
\ifx \bsnm \undefined \def \bsnm#1{#1}\fi
\ifx \bsuffix \undefined \def \bsuffix#1{#1}\fi
\ifx \bparticle \undefined \def \bparticle#1{#1}\fi
\ifx \barticle \undefined \def \barticle#1{#1}\fi
\bibcommenthead
\ifx \bconfdate \undefined \def \bconfdate #1{#1}\fi
\ifx \botherref \undefined \def \botherref #1{#1}\fi
\ifx \url \undefined \def \url#1{\textsf{#1}}\fi
\ifx \bchapter \undefined \def \bchapter#1{#1}\fi
\ifx \bbook \undefined \def \bbook#1{#1}\fi
\ifx \bcomment \undefined \def \bcomment#1{#1}\fi
\ifx \oauthor \undefined \def \oauthor#1{#1}\fi
\ifx \citeauthoryear \undefined \def \citeauthoryear#1{#1}\fi
\ifx \endbibitem  \undefined \def \endbibitem {}\fi
\ifx \bconflocation  \undefined \def \bconflocation#1{#1}\fi
\ifx \arxivurl  \undefined \def \arxivurl#1{\textsf{#1}}\fi
\csname PreBibitemsHook\endcsname

%%% 1
\bibitem[\protect\citeauthoryear{Anagnostopoulos et~al.}{2010}]{anagnostopoulos10power}
\begin{bchapter}
\bauthor{\bsnm{Anagnostopoulos}, \binits{A.}},
\bauthor{\bsnm{Becchetti}, \binits{L.}},
\bauthor{\bsnm{Castillo}, \binits{C.}},
\bauthor{\bsnm{Gionis}, \binits{A.}},
\bauthor{\bsnm{Leonardi}, \binits{S.}}:
\bctitle{Power in unity: forming teams in large-scale community systems}.
In: \bbtitle{{ACM} Conference on Information and Knowledge Management, {CIKM}},
pp. \bfpage{599}--\blpage{608}
(\byear{2010})
\end{bchapter}
\endbibitem

%%% 2
\bibitem[\protect\citeauthoryear{Anagnostopoulos et~al.}{2012}]{anagnostopoulos2012online}
\begin{bchapter}
\bauthor{\bsnm{Anagnostopoulos}, \binits{A.}},
\bauthor{\bsnm{Becchetti}, \binits{L.}},
\bauthor{\bsnm{Castillo}, \binits{C.}},
\bauthor{\bsnm{Gionis}, \binits{A.}},
\bauthor{\bsnm{Leonardi}, \binits{S.}}:
\bctitle{Online team formation in social networks}.
In: \bbtitle{WWW}
(\byear{2012})
\end{bchapter}
\endbibitem

%%% 3
\bibitem[\protect\citeauthoryear{Anagnostopoulos et~al.}{2018}]{anagnostopoulos18algorithms}
\begin{bchapter}
\bauthor{\bsnm{Anagnostopoulos}, \binits{A.}},
\bauthor{\bsnm{Castillo}, \binits{C.}},
\bauthor{\bsnm{Fazzone}, \binits{A.}},
\bauthor{\bsnm{Leonardi}, \binits{S.}},
\bauthor{\bsnm{Terzi}, \binits{E.}}:
\bctitle{Algorithms for hiring and outsourcing in the online labor market}.
In: \beditor{\bsnm{Guo}, \binits{Y.}},
\beditor{\bsnm{Farooq}, \binits{F.}} (eds.)
\bbtitle{{ACM} {SIGKDD}},
pp. \bfpage{1109}--\blpage{1118}
(\byear{2018})
\end{bchapter}
\endbibitem

%%% 4
\bibitem[\protect\citeauthoryear{Bhowmik et~al.}{2014}]{bhowmik2014submodularity}
\begin{bchapter}
\bauthor{\bsnm{Bhowmik}, \binits{A.}},
\bauthor{\bsnm{Borkar}, \binits{V.}},
\bauthor{\bsnm{Garg}, \binits{D.}},
\bauthor{\bsnm{Pallan}, \binits{M.}}:
\bctitle{Submodularity in team formation problem}.
In: \bbtitle{SDM}
(\byear{2014})
\end{bchapter}
\endbibitem

%%% 5
\bibitem[\protect\citeauthoryear{Kargar et~al.}{2013}]{kargar2013finding}
\begin{bchapter}
\bauthor{\bsnm{Kargar}, \binits{M.}},
\bauthor{\bsnm{Zihayat}, \binits{M.}},
\bauthor{\bsnm{An}, \binits{A.}}:
\bctitle{Finding affordable and collaborative teams from a network of experts}.
In: \bbtitle{SDM}
(\byear{2013})
\end{bchapter}
\endbibitem

%%% 6
\bibitem[\protect\citeauthoryear{Kargar and An}{2011}]{kargar2011discovering}
\begin{bchapter}
\bauthor{\bsnm{Kargar}, \binits{M.}},
\bauthor{\bsnm{An}, \binits{A.}}:
\bctitle{Discovering top-k teams of experts with/without a leader in social networks}.
In: \bbtitle{CIKM}
(\byear{2011})
\end{bchapter}
\endbibitem

%%% 7
\bibitem[\protect\citeauthoryear{Kargar et~al.}{2012}]{kargar2012efficient}
\begin{bchapter}
\bauthor{\bsnm{Kargar}, \binits{M.}},
\bauthor{\bsnm{An}, \binits{A.}},
\bauthor{\bsnm{Zihayat}, \binits{M.}}:
\bctitle{Efficient bi-objective team formation in social networks}.
In: \bbtitle{ECML PKDD}
(\byear{2012})
\end{bchapter}
\endbibitem

%%% 8
\bibitem[\protect\citeauthoryear{Lappas et~al.}{2009}]{lappas2009finding}
\begin{bchapter}
\bauthor{\bsnm{Lappas}, \binits{T.}},
\bauthor{\bsnm{Liu}, \binits{K.}},
\bauthor{\bsnm{Terzi}, \binits{E.}}:
\bctitle{Finding a team of experts in social networks}.
In: \bbtitle{KDD}
(\byear{2009})
\end{bchapter}
\endbibitem

%%% 9
\bibitem[\protect\citeauthoryear{Majumder et~al.}{2012}]{majumder2012capacitated}
\begin{bchapter}
\bauthor{\bsnm{Majumder}, \binits{A.}},
\bauthor{\bsnm{Datta}, \binits{S.}},
\bauthor{\bsnm{Naidu}, \binits{K.}}:
\bctitle{Capacitated team formation problem on social networks}.
In: \bbtitle{KDD}
(\byear{2012})
\end{bchapter}
\endbibitem

%%% 10
\bibitem[\protect\citeauthoryear{Li et~al.}{2015a}]{li2015team}
\begin{botherref}
\oauthor{\bsnm{Li}, \binits{C.-T.}},
\oauthor{\bsnm{Shan}, \binits{M.-K.}},
\oauthor{\bsnm{Lin}, \binits{S.-D.}}:
On team formation with expertise query in collaborative social networks.
KAIS
(2015)
\end{botherref}
\endbibitem

%%% 11
\bibitem[\protect\citeauthoryear{Li et~al.}{2015b}]{li2015replacing}
\begin{bchapter}
\bauthor{\bsnm{Li}, \binits{L.}},
\bauthor{\bsnm{Tong}, \binits{H.}},
\bauthor{\bsnm{Cao}, \binits{N.}},
\bauthor{\bsnm{Ehrlich}, \binits{K.}},
\bauthor{\bsnm{Lin}, \binits{Y.-R.}},
\bauthor{\bsnm{Buchler}, \binits{N.}}:
\bctitle{Replacing the irreplaceable: Fast algorithms for team member recommendation}.
In: \bbtitle{WWW}
(\byear{2015})
\end{bchapter}
\endbibitem

%%% 12
\bibitem[\protect\citeauthoryear{Li et~al.}{2017}]{li2017enhancing}
\begin{botherref}
\oauthor{\bsnm{Li}, \binits{L.}},
\oauthor{\bsnm{Tong}, \binits{H.}},
\oauthor{\bsnm{Cao}, \binits{N.}},
\oauthor{\bsnm{Ehrlich}, \binits{K.}},
\oauthor{\bsnm{Lin}, \binits{Y.-R.}},
\oauthor{\bsnm{Bucher}, \binits{N.}}:
Enhancing team composition in professional networks: Problem definitions and fast solutions.
TKDE
(2017)
\end{botherref}
\endbibitem

%%% 13
\bibitem[\protect\citeauthoryear{Rangapuram et~al.}{2013}]{rangapuram2013towards}
\begin{bchapter}
\bauthor{\bsnm{Rangapuram}, \binits{S.S.}},
\bauthor{\bsnm{B{\"u}hler}, \binits{T.}},
\bauthor{\bsnm{Hein}, \binits{M.}}:
\bctitle{Towards realistic team formation in social networks based on densest subgraphs}.
In: \bbtitle{WWW}
(\byear{2013})
\end{bchapter}
\endbibitem

%%% 14
\bibitem[\protect\citeauthoryear{Yin et~al.}{2018}]{yin2018social}
\begin{botherref}
\oauthor{\bsnm{Yin}, \binits{X.}},
\oauthor{\bsnm{Qu}, \binits{C.}},
\oauthor{\bsnm{Wang}, \binits{Q.}},
\oauthor{\bsnm{Wu}, \binits{F.}},
\oauthor{\bsnm{Liu}, \binits{B.}},
\oauthor{\bsnm{Chen}, \binits{F.}},
\oauthor{\bsnm{Chen}, \binits{X.}},
\oauthor{\bsnm{Fang}, \binits{D.}}:
Social connection aware team formation for participatory tasks.
IEEE Access
(2018)
\end{botherref}
\endbibitem

%%% 15
\bibitem[\protect\citeauthoryear{Harshaw et~al.}{2019}]{harshaw19submodular}
\begin{bchapter}
\bauthor{\bsnm{Harshaw}, \binits{C.}},
\bauthor{\bsnm{Feldman}, \binits{M.}},
\bauthor{\bsnm{Ward}, \binits{J.}},
\bauthor{\bsnm{Karbasi}, \binits{A.}}:
\bctitle{Submodular maximization beyond non-negativity: Guarantees, fast algorithms, and applications}.
In: \bbtitle{International Conference on Machine Learning, {ICML}},
pp. \bfpage{2634}--\blpage{2643}
(\byear{2019})
\end{bchapter}
\endbibitem

%%% 16
\bibitem[\protect\citeauthoryear{Mitra et~al.}{2021}]{mitra21submodularplus}
\begin{botherref}
\oauthor{\bsnm{Mitra}, \binits{S.}},
\oauthor{\bsnm{Feldman}, \binits{M.}},
\oauthor{\bsnm{Karbasi}, \binits{A.}}:
Submodular + concave.
CoRR
\textbf{abs/2106.04769}
(2021)
\end{botherref}
\endbibitem

%%% 17
\bibitem[\protect\citeauthoryear{Hamidi~Rad et~al.}{2023}]{hamidi2023variational}
\begin{barticle}
\bauthor{\bsnm{Hamidi~Rad}, \binits{R.}},
\bauthor{\bsnm{Fani}, \binits{H.}},
\bauthor{\bsnm{Bagheri}, \binits{E.}},
\bauthor{\bsnm{Kargar}, \binits{M.}},
\bauthor{\bsnm{Srivastava}, \binits{D.}},
\bauthor{\bsnm{Szlichta}, \binits{J.}}:
\batitle{A variational neural architecture for skill-based team formation}.
\bjtitle{ACM Transactions on Information Systems}
\bvolume{42}(\bissue{1}),
\bfpage{1}--\blpage{28}
(\byear{2023})
\end{barticle}
\endbibitem

%%% 18
\bibitem[\protect\citeauthoryear{Kou et~al.}{2020}]{kou2020efficient}
\begin{bchapter}
\bauthor{\bsnm{Kou}, \binits{Y.}},
\bauthor{\bsnm{Shen}, \binits{D.}},
\bauthor{\bsnm{Snell}, \binits{Q.}},
\bauthor{\bsnm{Li}, \binits{D.}},
\bauthor{\bsnm{Nie}, \binits{T.}},
\bauthor{\bsnm{Yu}, \binits{G.}},
\bauthor{\bsnm{Ma}, \binits{S.}}:
\bctitle{Efficient team formation in social networks based on constrained pattern graph}.
In: \bbtitle{2020 IEEE 36th International Conference on Data Engineering (ICDE)},
pp. \bfpage{889}--\blpage{900}
(\byear{2020}).
\bcomment{IEEE}
\end{bchapter}
\endbibitem

%%% 19
\bibitem[\protect\citeauthoryear{Berkta{\c{s}} and Yaman}{2021}]{berktacs2021branch}
\begin{barticle}
\bauthor{\bsnm{Berkta{\c{s}}}, \binits{N.}},
\bauthor{\bsnm{Yaman}, \binits{H.}}:
\batitle{A branch-and-bound algorithm for team formation on social networks}.
\bjtitle{INFORMS Journal on Computing}
\bvolume{33}(\bissue{3}),
\bfpage{1162}--\blpage{1176}
(\byear{2021})
\end{barticle}
\endbibitem

%%% 20
\bibitem[\protect\citeauthoryear{Nikolakaki et~al.}{2021}]{nikolakaki21efficient}
\begin{bchapter}
\bauthor{\bsnm{Nikolakaki}, \binits{S.M.}},
\bauthor{\bsnm{Ene}, \binits{A.}},
\bauthor{\bsnm{Terzi}, \binits{E.}}:
\bctitle{An efficient framework for balancing submodularity and cost}.
In: \bbtitle{{ACM} {SIGKDD}},
pp. \bfpage{1256}--\blpage{1266}
(\byear{2021})
\end{bchapter}
\endbibitem

%%% 21
\bibitem[\protect\citeauthoryear{Dorn and Dustdar}{2010}]{dorn2010composing}
\begin{bchapter}
\bauthor{\bsnm{Dorn}, \binits{C.}},
\bauthor{\bsnm{Dustdar}, \binits{S.}}:
\bctitle{Composing near-optimal expert teams: a trade-off between skills and connectivity}.
In: \bbtitle{CoopIS}
(\byear{2010})
\end{bchapter}
\endbibitem

%%% 22
\bibitem[\protect\citeauthoryear{Nikolakaki et~al.}{2020}]{nikolakaki20finding}
\begin{botherref}
\oauthor{\bsnm{Nikolakaki}, \binits{S.M.}},
\oauthor{\bsnm{Cai}, \binits{M.}},
\oauthor{\bsnm{Terzi}, \binits{E.}}:
Finding teams that balance expert load and task coverage.
CoRR
\textbf{abs/2011.04428}
(2020)
\end{botherref}
\endbibitem

%%% 23
\bibitem[\protect\citeauthoryear{Selvarajah et~al.}{2021}]{selvarajah2021unified}
\begin{barticle}
\bauthor{\bsnm{Selvarajah}, \binits{K.}},
\bauthor{\bsnm{Zadeh}, \binits{P.M.}},
\bauthor{\bsnm{Kobti}, \binits{Z.}},
\bauthor{\bsnm{Palanichamy}, \binits{Y.}},
\bauthor{\bsnm{Kargar}, \binits{M.}}:
\batitle{A unified framework for effective team formation in social networks}.
\bjtitle{Expert Systems with Applications}
\bvolume{177},
\bfpage{114886}
(\byear{2021})
\end{barticle}
\endbibitem

%%% 24
\bibitem[\protect\citeauthoryear{Krause and Golovin}{2014}]{krause2014submodular}
\begin{barticle}
\bauthor{\bsnm{Krause}, \binits{A.}},
\bauthor{\bsnm{Golovin}, \binits{D.}}:
\batitle{Submodular function maximization.}
\bjtitle{Tractability}
\bvolume{3}(\bissue{71-104}),
\bfpage{3}
(\byear{2014})
\end{barticle}
\endbibitem

%%% 25
\bibitem[\protect\citeauthoryear{Lewis}{1983}]{lewis1983michael}
\begin{barticle}
\bauthor{\bsnm{Lewis}, \binits{H.R.}}:
\batitle{Computers and intractability. a guide to the theory of np-completeness}.
\bjtitle{The Journal of Symbolic Logic}
\bvolume{48}(\bissue{2}),
\bfpage{498}--\blpage{500}
(\byear{1983})
\end{barticle}
\endbibitem

%%% 26
\bibitem[\protect\citeauthoryear{Nemhauser and Wolsey}{1978}]{nemhauser78best}
\begin{barticle}
\bauthor{\bsnm{Nemhauser}, \binits{G.L.}},
\bauthor{\bsnm{Wolsey}, \binits{L.A.}}:
\batitle{Best algorithms for approximating the maximum of a submodular set function}.
\bjtitle{Math. Oper. Research}
\bvolume{3}(\bissue{3}),
\bfpage{177}--\blpage{188}
(\byear{1978})
\end{barticle}
\endbibitem

%%% 27
\bibitem[\protect\citeauthoryear{Vazirani}{2013}]{vazirani2013approximation}
\begin{bbook}
\bauthor{\bsnm{Vazirani}, \binits{V.V.}}:
\bbtitle{Approximation Algorithms},
(\byear{2013})
\end{bbook}
\endbibitem

%%% 28
\bibitem[\protect\citeauthoryear{Minoux}{1978}]{minoux1978accelerated}
\begin{bchapter}
\bauthor{\bsnm{Minoux}, \binits{M.}}:
\bctitle{Accelerated greedy algorithms for maximizing submodular set functions}.
In: \bbtitle{Optimization Techniques},
pp. \bfpage{234}--\blpage{243}
(\byear{1978})
\end{bchapter}
\endbibitem

%%% 29
\bibitem[\protect\citeauthoryear{Papadimitriou and Steiglitz}{1998}]{papadimitriou1998combinatorial}
\begin{bbook}
\bauthor{\bsnm{Papadimitriou}, \binits{C.H.}},
\bauthor{\bsnm{Steiglitz}, \binits{K.}}:
\bbtitle{Combinatorial Optimization: Algorithms and Complexity},
(\byear{1998})
\end{bbook}
\endbibitem

%%% 30
\bibitem[\protect\citeauthoryear{{Gurobi Optimization, LLC}}{2023}]{gurobi}
\begin{botherref}
\oauthor{\bsnm{{Gurobi Optimization, LLC}}}:
{Gurobi Optimizer Reference Manual}
(2023).
\url{https://www.gurobi.com}
\end{botherref}
\endbibitem

%%% 31
\bibitem[\protect\citeauthoryear{Khan et~al.}{2016}]{khan2016efficient}
\begin{barticle}
\bauthor{\bsnm{Khan}, \binits{A.}},
\bauthor{\bsnm{Pothen}, \binits{A.}},
\bauthor{\bsnm{Mostofa Ali~Patwary}, \binits{M.}},
\bauthor{\bsnm{Satish}, \binits{N.R.}},
\bauthor{\bsnm{Sundaram}, \binits{N.}},
\bauthor{\bsnm{Manne}, \binits{F.}},
\bauthor{\bsnm{Halappanavar}, \binits{M.}},
\bauthor{\bsnm{Dubey}, \binits{P.}}:
\batitle{Efficient approximation algorithms for weighted b-matching}.
\bjtitle{SIAM Journal on Scientific Computing}
\bvolume{38}(\bissue{5}),
\bfpage{593}--\blpage{619}
(\byear{2016})
\end{barticle}
\endbibitem

%%% 32
\bibitem[\protect\citeauthoryear{Benz et~al.}{2010}]{benz2010social}
\begin{barticle}
\bauthor{\bsnm{Benz}, \binits{D.}},
\bauthor{\bsnm{Hotho}, \binits{A.}},
\bauthor{\bsnm{Jäschke}, \binits{R.}},
\bauthor{\bsnm{Krause}, \binits{B.}},
\bauthor{\bsnm{Mitzlaff}, \binits{F.}},
\bauthor{\bsnm{Schmitz}, \binits{C.}},
\bauthor{\bsnm{Stumme}, \binits{G.}}:
\batitle{The social bookmark and publication management system {BibSonomy}}.
\bjtitle{The VLDB Journal}
\bvolume{19}(\bissue{6}),
\bfpage{849}--\blpage{875}
(\byear{2010})
\doiurl{10.1007/s00778-010-0208-4}
\end{barticle}
\endbibitem

%%% 33
\bibitem[\protect\citeauthoryear{H{\aa}stad}{1999}]{haastad1999clique}
\begin{botherref}
\oauthor{\bsnm{H{\aa}stad}, \binits{J.}}:
Clique is hard to approximate within n 1- $\varepsilon$
(1999)
\end{botherref}
\endbibitem

\end{thebibliography}

\appendix
\newpage
\section{The {\ntbalance} Problem}\label{sec:appendix-proof}
Consider the following problem, which is solved for every threshold $\threshold$ in each iteration of the {\networkbalance} algorithm: given a set of preformed teams $\mathcal{T}$ find an assignment of teams to 
tasks such that the sum of task coverages is maximized while every expert's load is below a pre-specified threshold $\threshold$.  We call this problem  {\ntbalance} and we formally define it as follows:

\begin{problem}[{\ntbalance}]{\label{problem:ntbalance}}
Consider a set of $\numtasks$ tasks $\tasks=\{\task_1,\ldots , \task_\numtasks \}$, a set of $\numexperts$ experts $\experts = \{\expert_1,\ldots \expert_\numexperts \}$, and a set of $t$ teams $\mathcal{T} = \{T_1, \ldots, T_{t}\}$, such that~$T_i\subseteq \experts$. 
Also assume a load threshold $\threshold$. Let $x_{kj}=1$ if team $T_k$ is assigned to task $\task_j$ and let $C_{kj}$ denote the fraction of skills required by $\task_j$ covered by team $T_k$. 
Finally, let $M(i,k)=1$ if expert $\expert_i$ is in team $T_k$, 
and $M(i,k)=0$ if $\expert_i$ is not in $T_k$.
Our problem can be written as the following integer program:
% and a load constraint $\load \geq 0$. We let $C_{kj}$ denote the coverage team $T_k$ provides for task $j$, and denote $x_{kj} \in \{0,1\}$, such that $x_{kj} = 1$ if team $k$ is assigned to the $j^{th}$ task. We  express the {\ntbalance} problem as the following linear program.
% \note[ET]{Please use the macros for the m's and n's etc.}
%
\begin{align*}
	\text{maximize}~~ &\sum_{k=1}^t \sum_{j=1}^\numtasks C_{kj}x_{kj}\\
	\text{such that}~~ &\sum_{k=1}^t x_{kj} \leq 1, \qquad \qquad \qquad \text{ for all }  1 \leq j \leq \numtasks ,\\
	&\sum_{k=1}^t M(i,k) \sum_{j=1}^\numtasks x_{kj} \leq \threshold \qquad \text{ for all }  1 \leq i \leq \numexperts,      \text{ and}\\
    & x_{kj}\in \{0,1\}.
\end{align*}
\end{problem}

%Note that we can create an assignment $\assignment$ of teams to tasks where the 1-entries in $\assignment$ correspond to the $x_{kj} = 1$ values in solution to the above LP. \\

\begin{theorem}\label{thm:ntbalance}
The {\ntbalance} problem is \NP-hard.
\end{theorem}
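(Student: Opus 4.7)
The plan is to prove NP-hardness via a polynomial-time reduction from \independentset{}. Recall that given a graph $G=(V,E)$ and an integer $k$, the \independentset{} problem asks whether $G$ admits a vertex subset $S\subseteq V$ of size at least $k$ no two of whose elements are adjacent. The guiding intuition is that the per-expert load constraint with threshold $\threshold=1$ is exactly the right primitive to encode disjointness: if two teams share an expert, that expert's load cap of $1$ prevents both teams from being assigned simultaneously.

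Concretely, given an instance $(G,k)$ of \independentset{}, I would build an instance of {\ntbalance} as follows. For every edge $e\in E$ introduce a shared expert $\expert_e$; for every vertex $v\in V$ introduce an anchor expert $\expert_v^{\ast}$ together with a unique skill $s_v$ possessed only by $\expert_v^{\ast}$. Create one team per vertex by setting $T_v=\{\expert_e \mid e\ni v\}\cup\{\expert_v^{\ast}\}$, and one task per vertex by setting $\task_v=\{s_v\}$. Finally, set the load threshold to $\threshold=1$. By construction, the induced coverage matrix satisfies $C_{vv}=1$, while $C_{uv}=0$ for every $u\neq v$, since the only expert carrying the skill $s_v$ is $\expert_v^{\ast}$ and $\expert_v^{\ast}$ appears in no team other than $T_v$.

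For correctness, I would argue both directions. Forward: given an independent set $S\subseteq V$ of size $k$, set $x_{vv}=1$ for every $v\in S$ and $x_{kj}=0$ for all other pairs $(k,j)$. Each task receives at most one team by construction. Each anchor expert $\expert_v^{\ast}$ belongs to the single team $T_v$, so its load is at most $1$; each shared expert $\expert_{\{u,v\}}$ belongs only to $T_u$ and $T_v$, and since $S$ is independent at most one of these is assigned, so its load is again at most $1$. The objective equals $\sum_{v\in S}C_{vv}=|S|=k$. Backward: given any feasible assignment whose objective is at least $k$, define $S=\{v\mid \sum_j x_{vj}\geq 1\}$. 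Applying the load constraint to $\expert_{\{u,v\}}$ gives $\sum_j x_{uj}+\sum_j x_{vj}\leq 1$, so at most one endpoint lies in $S$ and hence $S$ is independent. Since only team $T_v$ contributes positive coverage to any task (namely $\task_v$), the objective is bounded above by $|\{v\mid x_{vv}=1\}|\leq |S|$, forcing $|S|\geq k$.

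The main obstacle is engineering the coverage matrix so that the objective counts exactly the cardinality of the independent set. A naive attempt that places the skills $s_v$ directly on the edge-experts would make $s_v$ reachable from both endpoint teams $T_u$ and $T_v$, destroying the one-to-one correspondence between assignments and independent sets; the anchor experts $\expert_v^{\ast}$ circumvent this cleanly because each belongs to a single team. The remaining verification --- that the construction has size polynomial in $|V|+|E|$ --- is routine.
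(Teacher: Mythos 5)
Your proposal is correct and follows essentially the same route as the paper: a reduction from \independentset{} with one team and one uniquely-skilled anchor expert per vertex, one skill-less shared expert per edge, one task per team, and load threshold $\threshold=1$, with the same two-directional correspondence between independent sets and feasible assignments of coverage $k$. Your backward direction is, if anything, slightly more explicit than the paper's in how it invokes the load constraint on the edge-experts, but the construction and argument are the same.
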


\begin{proof}
We provide a proof of \NP-hardness via a reduction from {\independentset}. Given an undirected graph $G = (V,E)$ and an integer $\sigma \geq 0$, the decision version  of the {\independentset} problem is to
% is to determine if there exists an \emph{independent set} of size at least $S$. That is, 
determine if there exists a subset of nodes $V' \subseteq V$ such that no two nodes in $V'$ share an edge, and $|V'| \geq \sigma$. For the purposes of this reduction, we consider the decision version of {\ntbalance} with an integer $\sigma \geq 0$, where the objective is to determine if their exists a solution  to the above program  such that $\sum_{k=1}^t \sum_{j=1}^\numtasks C_{kj}x_{kj} \geq \sigma$, i.e., the sum of task coverages is at least $\sigma$.

Given an arbitrary undirected graph $G = (V, E)$ with $|V| = t$ vertices, we create an instance $\texttt{TM}$ 
of {\ntbalance} 
in polynomial time. The idea is to map vertices in $G$ to teams in $\texttt{TM}$, and edges between pairs of vertices in $G$ correspond to \emph{distinct} experts shared by the corresponding teams in $\texttt{TM}$. We give the details below.

\begin{itemize}
    % \item Choose a value $d \gg t$, such that $\frac{t^2}{t+d} < 1$.
    % %$t \cdot \frac{t}{t + d} = \frac{t^2}{t+d} < 1$.
    \item For each vertex $v \in V$ in $G$, create a team with a \emph{single} expert with $1$ \emph{unique} skill that is only associated with this expert in $\texttt{TM}$. Thus, create $t$ teams $\{T_1,\ldots , T_t\}$.

    \item For every edge $(v, v') \in E$ in $G$, add \emph{one} expert with \emph{zero} skills i.e $\{\emptyset\}$, to both the teams in $\texttt{TM}$ that correspond to the vertices $v$ and $v'$ in $G$.
    \item Create $t$ tasks $\task_1, \dots, \task_t$ to have exactly the set of skills corresponding to the $t$ teams $T_1,  \ldots, T_{t}$ in $\texttt{TM}$. This way, there is an 1-1 correspondence between team $T_i$ and task $\task_i$ and we call $\task_i$ the \emph{corresponding} task for team $T_i$.
\end{itemize}

Observe that $\texttt{TM}$ has $|E| + t$ experts, $t$ tasks, and $t$ teams, such that each task has \emph{exactly} one team that covers its skills \emph{completely}. 
Additionally, each team (and each task) has exactly one unique skill. 
Consider the $k$-th team and $j$-th task: if $k = j$, the coverage $C_{kj} = 1$, and if $k \neq j$, then $C_{kj} = 0$. Then it immediately follows that, given an assignment $\assignment$ that has a total task coverage $\sigma \geq 0$, there exists a set of $\sigma$ teams such that these $\sigma$ teams were assigned to their \emph{corresponding} tasks. 

We now show that the {\ntbalance} instance (\texttt{TM}) we have created has a team-task assignment $\assignment$ with total task coverage of $\sigma$ and load $\threshold = 1$ if and only if the corresponding instance of {\independentset} has an independent set of size $\sigma$ in $G$.

\spara{{\independentset} $\to$ {\ntbalance}.} Assume that $G$ has an independent set of size $\sigma$, i.e., it is possible to select $\sigma$ vertices in $G$ such that no two vertices share a common edge. This corresponds to picking $\sigma$ teams in the {\ntbalance} instance, such that no two teams share a expert. Since by our construction of the problem instance each team has exactly one corresponding task that can be fully covered, we can assign each of the $\sigma$ teams to its corresponding task and achieve a total task coverage of $\sigma$, within the load constraint $\threshold = 1$.

 % \note[ET]{Below did you mean $\sigma'\geq 1$?}
\spara{{\ntbalance} $\to$ {\independentset}.} Assume we have an assignment $\assignment$ for our {\ntbalance} instance with total task coverage of $\sigma \geq 0$, and every expert is assigned to at most $\threshold=1$ tasks. Now we can find an independent set of size $\sigma$ in $G$ as follows. We know that there exists a set of $\sigma$ teams that have been assigned to their corresponding tasks, to achieve a coverage of $\sigma$. Since $\assignment$ satisfies the load constraint $\threshold=1$, each expert is assigned to at most one task, and consequently no teams share an expert. Thus, if we select the $\sigma$ vertices in the {\independentset} instance that correspond to the $\sigma$ tasks in the {\ntbalance} instance, we have an independent set of size $\sigma$, such that no two vertices share a common edge.
\end{proof}

\vspace{5mm}
\begin{corollary}\label{corr:ntbalance-reduction}
{\ntbalance} cannot be approximated to within a factor better than $\bigO(n^{1-\epsilon})$.
\end{corollary}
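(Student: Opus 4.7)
The plan is to leverage the reduction already established in Theorem~\ref{thm:ntbalance} and observe that it is approximation-preserving, so that the strong inapproximability of {\independentset} transfers directly to {\ntbalance}. Recall the classical result of H{\aa}stad, strengthened by Zuckerman, which states that (unless $\mathbf{P} = \NP$) {\independentset} on a graph with $N$ vertices admits no polynomial-time approximation with factor better than $\bigO(N^{1-\epsilon})$ for any $\epsilon > 0$. The corollary follows if we can show that an $\alpha$-approximation for {\ntbalance} on the constructed instance yields an $\alpha$-approximation for {\independentset} on the original graph, and that the size parameter of the {\ntbalance} instance is polynomially related to $N$.

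First, I would inspect the construction in the proof of Theorem~\ref{thm:ntbalance}: given a graph $G=(V,E)$ with $|V|=N$, the instance $\texttt{TM}$ has $t = N$ teams, $N$ tasks, and $|E|+N \leq N^2$ experts, and it is produced in polynomial time. Second, I would argue that for any assignment $\assignment$ in $\texttt{TM}$ satisfying the load bound $\threshold = 1$, the total coverage $\sum_{k,j} C_{kj}\, x_{kj}$ equals the number of teams assigned to their unique corresponding task, which, by the 1-1 correspondence built into the construction, equals the size of the set of selected vertices in $G$ that form an independent set. Hence the optimal value of the {\ntbalance} instance exactly equals the maximum independent set size of $G$.

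Third, given any $\alpha$-approximate solution for {\ntbalance} with $\alpha \leq 1$, the argument in the second direction of the reduction (from {\ntbalance} to {\independentset}) applies verbatim to produce an independent set whose size is at least $\alpha$ times the optimum of the {\independentset} instance. Thus an $\alpha$-approximation algorithm for {\ntbalance} running in polynomial time would yield an $\alpha$-approximation algorithm for {\independentset}. Finally, setting $n$ to be the number of experts of the {\ntbalance} instance (the natural size parameter used in the claim), we have $N \leq n \leq N^2$, and so an approximation factor of $n^{1-\epsilon}$ on {\ntbalance} would translate into an approximation factor of at most $N^{2(1-\epsilon)}$ on {\independentset}. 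Choosing $\epsilon$ appropriately (replace $\epsilon$ with $\epsilon/2$), we contradict the $N^{1-\epsilon'}$ inapproximability of {\independentset}, yielding the claim.

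The main obstacle is being careful about the parameter with respect to which the inapproximability bound is measured: {\independentset} is hard to approximate in terms of the number of vertices $N$, while the corollary states the bound in terms of $n$, the natural size of the {\ntbalance} instance (e.g., the number of experts). Because the construction blows up $N$ only polynomially (here at most quadratically) and the reduction is strictly gap-preserving (exact correspondence between objective values), the standard rescaling of $\epsilon$ goes through. All other steps are direct consequences of the construction already spelled out in Theorem~\ref{thm:ntbalance}.
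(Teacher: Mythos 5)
Your proposal is correct and follows essentially the same route as the paper: both arguments observe that the reduction in Theorem~\ref{thm:ntbalance} preserves objective values exactly (optimal coverage of the constructed instance equals the maximum independent set size, and any feasible solution maps back with the same value), so an approximation for {\ntbalance} would violate the H{\aa}stad bound for {\independentset}. Your treatment is in fact slightly more careful than the paper's on one point the paper glosses over --- the translation between the number of vertices $N$ of the {\independentset} instance and the size parameter $n$ of the {\ntbalance} instance, handled by the polynomial blow-up bound $N \leq n \leq N^2$ and the standard rescaling of $\epsilon$.
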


\begin{proof}
For any $\epsilon > 0$, {\independentset} cannot be approximated to within a factor better than $\bigO(n^{1-\epsilon})$ in polynomial time unless $\mathbf{P}=\NP$~\citep{haastad1999clique}. 

We consider the optimization versions of {\independentset} and {\ntbalance}; the goal is to find the maximum independent set in an undirected graph $G = (V,E)$ in the former, and to find a feasible assignment such that the sum of coverages is maximized with load constraint $\threshold=1$ in the latter. 
We use the subscripts \texttt{IS} and \texttt{TM} to refer to instances $x$, solutions $y$, and the objective functions $m$ of these problems respectively. $\opt$ denotes the optimal value of the solution to either problem.

% \note[ET]{I am not sure what you mean with the last sentence. Is $\opt(x_\texttt{IS})$ the solution of $x_\texttt{IS}$ or the \emph{value} of the solution. Given that below you write $\opt(x_\texttt{IS})=\opt(x_\texttt{TM})$ it seems it is the value?}

The reduction in Theorem~\ref{thm:ntbalance} describes a function that maps an instance $x_\texttt{IS}$ of {\independentset} to an instance $x_\texttt{TM}$ of {\ntbalance}. 
Given a solution~$y_\texttt{TM}$, Theorem~\ref{thm:ntbalance} describes a function that maps $y_\texttt{TM}$ back into a solution $y_\texttt{IS}$ of {\independentset}. Note that $\opt(x_\texttt{IS}) = \opt(x_\texttt{TM})$ and $m_\texttt{TM}(x_\texttt{TM},y_\texttt{TM}) = m_\texttt{IS}(x_\texttt{IS},y_\texttt{IS})$, since for any {\ntbalance} instance with (optimal) coverage of $\sigma$ the corresponding instance of {\independentset} has a (maximal) independent set of size~$\sigma$.

Towards a contradiction, assume {\ntbalance} can be approximated to within a factor $\gamma$ in polynomial time, such that $\gamma > \bigO(n^{1-\epsilon})$. Thus, there exists an instance $x_\texttt{TM}$, and a solution $y_\texttt{TM}$, such that $\gamma \opt(x_\texttt{TM}) \leq  m_\texttt{TM}(x_\texttt{TM},y_\texttt{TM})$. Then, we have:

\begin{align*}
\gamma &\leq  \frac{m_\texttt{TM}(x_\texttt{TM},y_\texttt{TM})}{\opt(x_\texttt{TM})} 
       = \frac{m_\texttt{IS}(x_\texttt{IS},y_\texttt{IS})}{\opt(x_\texttt{TM})} 
       \leq \frac{\bigO(n^{1-\epsilon}) \opt(x_\texttt{IS})}{\opt(x_\texttt{TM})} 
       = \bigO(n^{1-\epsilon}),
\end{align*}
which is a contradiction.
Thus, we conclude that {\ntbalance} cannot be approximated to within a factor better than $\bigO(n^{1-\epsilon})$.
\end{proof}

\end{document}